\def\herm{\text{$\mathrm{H}$}}
\def\inv{\text{$-1$}}
\newcommand{\mps}{{\sc{mps}}\xspace}
\newcommand{\MPS}{{\sc{mps}}\xspace}
\newcommand{\TIMPS}{{\sc{ti mps}}\xspace}
\newcommand{\DMRG}{{\sc{dmrg}}\xspace}
\def\defini#1{\emph{#1}}
\renewcommand{\vec}[1]{\bm{#1}}
\newcommand{\mat}[1]{{\bf{#1}}}
\newcounter{subequation}
\newlength\mtabskip\mtabskip=-1.25cm
\def\mtabLong{long}
\newcommand{\trace}{\operatorname{tr}}
\newcommand{\T}{\text{$\mathrm{T}$}}
\newcommand{\diag}{\operatorname{diag}}
\newcommand{\myref}[1]{(\ref{#1})}
\providecommand{\pmat}[1]{\begin{pmatrix} #1 \end{pmatrix}}
\begin{document}

\markboth{Huckle, Waldherr, Schulte-Herbr{\"u}ggen}{Exploiting Matrix Symmetries and Physical Symmetries in Matrix Product States}


\title{Exploiting Matrix Symmetries and Physical Symmetries in Matrix Product States and Tensor Trains}

\author{
Thomas K. Huckle$^{\rm a}$ and
Konrad Waldherr$^{\rm a}$$^{\ast}$\thanks{$^\ast$Corresponding author. Email: waldherr@in.tum.de \vspace{6pt}}
and Thomas Schulte-Herbr{\"u}ggen$^{\rm b}$ \\\vspace{6pt}
$^{\rm a}${Technische Universit\"at M\"unchen, Boltzmannstr. 3, 85748 Garching, Germany};
$^{\rm b}${Technische Universit\"at M\"unchen, Lichtenbergstr. 4, 85748 Garching, Germany}
}

\maketitle

\begin{abstract}
We focus on symmetries related to matrices and vectors appearing
in the simulation of quantum many-body systems.
Spin Hamiltonians have special matrix-symmetry properties such as persymmetry.
Furthermore, the systems may exhibit physical symmetries translating into
symmetry properties of the eigenvectors of interest.
Both types of symmetry can be exploited in sparse representation formats such as
Matrix Product States (\MPS) for the desired eigenvectors.

This paper summarizes symmetries of Hamiltonians for typical
physical systems such as the Ising model and lists resulting properties of the related eigenvectors.
Based on an overview of Matrix Product States (Tensor Trains or Tensor Chains) and their canonical normal forms
we show how symmetry properties of the vector
translate into relations between the \MPS matrices and, in turn,
which symmetry properties result from relations within the MPS matrices.
In this context we analyze different kinds of symmetries
and derive appropriate normal forms for MPS representing these symmetries.
Exploiting such symmetries by using these normal forms will lead to a reduction
in the number of degrees of freedom in the MPS matrices.
This paper provides a uniform platform for both well-known and new results
which are presented from the (multi-)linear algebra point of view.

\begin{keywords}
Symmetric persymmetric matrices;
Quantum many-body systems;
Spin Hamiltonian;
Matrix Product States;
Tensor Trains;
Tensor Chains
\end{keywords}
\begin{classcode}
15A69
15B57;
81-08;
15A18
\end{classcode}\bigskip

\end{abstract}

\section{Introduction}
\label{sec:intro}
In the simulation of quantum many-body systems such as 1D spin chains
one is faced with problems growing exponentially in the system size.
From a linear algebra point of view, the physical system can be described by a Hermitian matrix $\mat H$,
the so-called \defini{Hamiltonian}.
The real eigenvalues of $\mat H$ correspond to the possible energy levels of the system,
the related eigenvectors describe the corresponding states.
The \defini{ground state} is of important relevance because it is related to the state of minimal energy
which naturally arises.
To overcome the exponential growth of the state space with system size (sometimes referred to as
\/`curse of dimensionality\/')
one uses sparse representation formats that scale only polynomially in the number of particles.
In quantum physics concepts like Matrix Product States have been developed, see, e.g., \cite{PerezGarcia07MPS}.
These concepts strongly relate to the Tensor-Train concept,
which was introduced by Oseledets in \cite{Oseledets11tt}
as an alternative to the canonical decomposition \cite{CarrollChang70Analysis, Harshman70Foundations}
and the Tucker format \cite{Tucker66MathematicalNotes}.

In the \MPS formalism vector components are represented by the trace of a product of matrices,
which are often of moderate size.
As will turn out,
symmetries and further relations in these matrices result in special properties of the vectors
to be represented and, vice versa, that special symmetry properties of vectors can be
expressed by certain relations of the \MPS matrices.
We will analyze different symmetries such as the bit-shift symmetry, the reverse symmetry,
and the bit-flip symmetry,
and we present normal forms of \MPS for these symmetries,
which will lead to a reduction of the degrees of freedom in the decomposition schemes.

\subsubsection*{Organization of the Paper}
The paper is organized as follows:
First, we list pertinent matrix symmetries
translating into symmetry properties of their eigenvectors.
Then we consider physical model systems and summarize the related symmetries
translating into symmetries of the eigenvectors.
After a fixing notation of Matrix Product States,
we present normal forms of \MPS and analyze how relations
between the \MPS matrices and symmetries of the represented vectors are interconnected.
Finally, we show the amount of data reduction by exploiting symmetry-adapted normal forms.

\section{Matrix Symmetries}
\label{sec:structuredMatrices}
In this section we recall some classes of structured matrices and list some important properties.
A matrix~$\mat A$ is called \defini{symmetric}, if $\mat A = \mat A^{\mathrm T}$ (i.e. $a_{i,j} = a_{j,i}$)
and \defini{skew-symmetric}, if $\mat A^{\mathrm T} = -\mat A$.
A real-valued symmetric matrix has real eigenvalues and a set of orthogonal eigenvectors.
If $\mat{A}$ is symmetric about the ``northeast-to-southwest'' diagonal,
i.e. $a_{i,j} = a_{n-j+1,n-i+1}$,
it is called \defini{persymmetric}.
Let $\mat J \in \mathbb R^{n \times n}, \mat {J}_{i,j} := \delta_{i,n+1-j}$,
be the \defini{exchange matrix}.
Then persymmetry can also be expressed by
\begin{equation*}
\mat{J A J} = \mat{A}^{\T} \; .
\end{equation*}
A matrix is \defini{symmetric persymmetric}, if it is symmetric about both diagonals,
i.e.
$$ \mat{J A J} = \mat{A}^{\mathrm T} = \mat{A}$$
or component-wise
$$ a_{i,j} = a_{j,i} = a_{n+1-i,n+1-j} \; .$$
Note that a matrix with the property $\mat{JAJ}=\mat{A}$ is called \defini{centrosymmetric}.
Therefore, symmetric persymmetric or symmetric centrosymmetric are the same.

The set of all symmetric persymmetric $n \times n$ matrices is closed under addition and under scalar multiplication.

A matrix $\mat A$ is called \defini{symmetric skew-persymmetric} if $\mat{JAJ}=-\mat{A}^{\mathrm T}=-\mat A$,
or component-wise
$$ a_{i,j} = a_{j,i} = -a_{n+1-i,n+1-j} \; .$$
The set of these matrices is again closed under addition and scalar multiplication.

Any symmetric $n \times n$ matrix $\mat A$ can be expressed as a sum of a persymmetric and a skew-persymmetric matrix:
\begin{equation*}
\mat {A} = \tfrac{1}{2} \left( \mat A + \mat{J A J} \right) + \tfrac{1}{2} \left( \mat{A} - \mat{J A J} \right) \; .
\end{equation*}

By $\mat{J}$ one may likewise characterize vector symmetries: a vector $\vec v \in \mathbb R^{n}$
is \defini{symmetric} if $\mat{J} \vec{v} = \vec{v}$ and \defini{skew-symmetric}
if $\mat{J} \vec{v} = - \vec{v}$.

As all the matrices of subsequent interest are built by linear combinations of Kronecker products of smaller matrices
the following lemma will be useful.
\begin{lemma}\label{lemma:KroneckerPersymm}
The Kronecker product of two symmetric persymmetric matrices $\mat{B}$ and $\mat{C}$ is again symmetric persymmetric.
\end{lemma}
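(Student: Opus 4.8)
The plan is to verify the two defining relations of symmetric persymmetry separately, exploiting the equivalence (noted just above the lemma) that ``symmetric persymmetric'' coincides with ``symmetric centrosymmetric''. Writing $\mat{B} \in \mathbb{R}^{m \times m}$ and $\mat{C} \in \mathbb{R}^{n \times n}$, so that $\mat{B} \otimes \mat{C} \in \mathbb{R}^{mn \times mn}$, I would show that $\mat{B} \otimes \mat{C}$ is (i) symmetric and (ii) centrosymmetric. Both arguments rest on the mixed-product property $(\mat{A}_1 \otimes \mat{A}_2)(\mat{A}_3 \otimes \mat{A}_4) = (\mat{A}_1 \mat{A}_3) \otimes (\mat{A}_2 \mat{A}_4)$ together with the compatibility of transposition with the Kronecker product.

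The symmetry is immediate: since transposition distributes over the Kronecker product, $(\mat{B} \otimes \mat{C})^{\T} = \mat{B}^{\T} \otimes \mat{C}^{\T} = \mat{B} \otimes \mat{C}$, where the final equality uses that $\mat{B}$ and $\mat{C}$ are themselves symmetric. For centrosymmetry I would need to show $\mat{J}_{mn} (\mat{B} \otimes \mat{C}) \mat{J}_{mn} = \mat{B} \otimes \mat{C}$, where $\mat{J}_{mn}$ denotes the exchange matrix of order $mn$. The crux is the identity
\begin{equation*}
\mat{J}_{mn} = \mat{J}_m \otimes \mat{J}_n,
\end{equation*}
relating the large exchange matrix to the Kronecker product of the smaller ones. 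Granting this, the mixed-product property yields
\begin{equation*}
\mat{J}_{mn} (\mat{B} \otimes \mat{C}) \mat{J}_{mn} = (\mat{J}_m \mat{B} \mat{J}_m) \otimes (\mat{J}_n \mat{C} \mat{J}_n) = \mat{B} \otimes \mat{C},
\end{equation*}
the last step invoking the centrosymmetry of $\mat{B}$ and $\mat{C}$. Combining (i) and (ii) then gives $\mat{J}_{mn}(\mat{B} \otimes \mat{C})\mat{J}_{mn} = (\mat{B} \otimes \mat{C})^{\T} = \mat{B} \otimes \mat{C}$, which is precisely the symmetric persymmetric condition.

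The main obstacle is establishing the exchange-matrix identity $\mat{J}_{mn} = \mat{J}_m \otimes \mat{J}_n$; everything else is routine Kronecker algebra. I would argue this most transparently by viewing $\mat{J}_m \otimes \mat{J}_n$ as an $m \times m$ array of $n \times n$ blocks: since $\mat{J}_m$ has its single nonzero entry of each row at block position $(i, m+1-i)$, the product $\mat{J}_m \otimes \mat{J}_n$ carries the block $\mat{J}_n$ in exactly those anti-diagonal block slots and zeros elsewhere. Reading off the resulting pattern of $1$'s shows it is the full reversal permutation of order $mn$, i.e. $\mat{J}_{mn}$; an equivalent entrywise check confirms that the $(i,j)$ entry of either side equals $\delta_{i,\,mn+1-j}$. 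Once this combinatorial identity is in place, the proof closes immediately.
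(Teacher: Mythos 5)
Your proof is correct and follows essentially the same route as the paper: both rest on the identity $\mat{J}_{mn} = \mat{J}_m \otimes \mat{J}_n$ together with the mixed-product property, the only cosmetic difference being that you verify symmetry and centrosymmetry separately while the paper establishes persymmetry and symmetry in a single chain $(\mat{J_B} \otimes \mat{J_C})(\mat{B} \otimes \mat{C})(\mat{J_B} \otimes \mat{J_C}) = \mat{B}^{\T} \otimes \mat{C}^{\T} = \mat{B} \otimes \mat{C}$. Your explicit justification of the exchange-matrix identity, which the paper simply asserts, is a welcome addition.
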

\begin{proof}
Let $\mat{J_B}$ and $\mat{J_C}$ denote the exchange matrices
which correspond to the size of $\mat{B}$ and $\mat{C}$ respectively.
Then the exchange matrix $\mat{J}$ of $\mat{B} \otimes \mat{C}$
is given by $\mat{J}=\mat{J_B}\otimes \mat{J_C}$.
Therefore
$$ (\mat{J_B} \otimes \mat{J_C}) (\mat{B} \otimes \mat{C}) (\mat{J_B} \otimes \mat{J_C}) = (\mat{J_B B J_B})
\otimes (\mat{J_C C J_C}) = (\mat{B}^{\T} \otimes \mat{C}^{T}) = \mat{B} \otimes \mat{C} \; .$$
\end{proof}

\begin{remark}\label{rem:powerspersymm}
Each power $\mat A^k$ of a symmetric persymmetric $\mat A$ is again symmetric
persymmetric.
\end{remark}

\begin{remark}\label{rem:skewpersymm}
For a symmetric skew-persymmetric  $\mat A$,
$\mat A^2$ is symmetric persymmetric, and also the Kronecker product of
two symmetric skew-persymmetric matrices is symmetric persymmetric.
\end{remark}

\begin{remark}\label{rem:skewsymmetric}
If matrix $\mat A$ is skew-symmetric, then $\mat{A}^2$ is symmetric.
Furthermore, the Kronecker product of two skew-symmetric matrices is symmetric.
\end{remark}

Due to \cite{CantoniButler} we can state various properties
for symmetric persymmetric matrices and the related eigenvectors.
As all the matrices of our interest have as size a power of $2$,
we focus on the statements related to even matrix sizes here.
The following lemma points out the main results adapted from \cite{CantoniButler}.
Both the proof and similar results for the odd case can be found in the original paper.
\begin{lemma}[(\cite{CantoniButler}) ]\label{lemma:EigenvectorsPersymmMatrices}
Let $\mat A \in \mathbb R^{n \times n}$ be any symmetric persymmetric matrix of even size $n=2m$,
the following properties hold.
\begin{itemize}
\item[a)] The matrix $\mat{A}$ can be written as
$$ \mat A = \left(
     \begin{array}{cc}
       \mat B \ & \ \mat C^{\T} \\
       \mat{C} \ & \ \mat{J B J} \\
     \end{array}
   \right)
$$
with block matrices $\mat B$ and $\mat C$ of size $m \times m$, where $\mat{B}$ is symmetric and $\mat C$ is persymmetric,
i.e. $\mat C^{\T} = \mat{J C J}$.
\item[b)] The matrix $\mat A$ can be orthogonally transformed to a block diagonal matrix
with blocks of half size $m$:
\begin{align} \nonumber
& \ \frac{1}{2}
\left(
  \begin{array}{cc}
    \mat I \ & \ \mat J \\
    \mat I \ & \ - \mat J \\
  \end{array}
\right) \left(
          \begin{array}{cc}
            \mat B \ & \ \mat C^T \\
            \mat C \ & \ \mat{J B J} \\
          \end{array}
        \right)
        \left(
  \begin{array}{cc}
    \mat I \ & \ \mat I \\
    \mat J \ & \ -\mat J \\
  \end{array}
\right) \\ \nonumber
            = & \ \frac{1}{2}
    \left(
      \begin{array}{cc}
        \mat B + \mat{J C} + \mat C^{\mathrm T} \mat J  + \mat B \ & \ \mat B + \mat{J C} - \mat{C}^{\T} \mat J - \mat B \\
        \mat B - \mat{J C} + \mat{C}^{\T} \mat J - \mat B \ &\ \mat B - \mat{J C} - \mat{C}^{\T} \mat J + \mat B \\
      \end{array}
    \right) \\ \label{eq:Transform2BlocksEvenN}
    = & \ \left(
          \begin{array}{cc}
            \mat B + \mat{J C} \ & \ \mat 0 \\
            \mat 0 \ & \ \mat B - \mat{J C} \\
          \end{array}
        \right) \; .
\end{align}
\item[c)] The matrix $\mat A$ has $m$ skew-symmetric orthonormal eigenvectors of the form
$$ \frac{1}{\sqrt{2}} \left(
     \begin{array}{c}
       \vec{u_i} \\
       - \mat J \vec{u_i} \\
     \end{array}
   \right) \; ,
$$
where $\vec{u_i}$ are the orthonormal eigenvectors of $\mat B - \mat{J C}$.\\
$\mat A$ also has $m$ symmetric orthonormal eigenvectors
$$ \frac{1}{\sqrt{2}} \left(
     \begin{array}{c}
       \vec{v_i} \\
       \mat J \vec{v_i} \\
     \end{array}
   \right) \; ,
$$
where the $\vec{v_i}$ are the orthonormal eigenvectors of $\mat B + \mat{J C}$.
\end{itemize}

\end{lemma}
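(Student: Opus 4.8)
The plan is to exploit the $2\times2$ block structure of $\mat A$ with blocks of size $m=n/2$, so that the entire argument reduces to $m\times m$ algebra. Following the paper's convention I let $\mat J$ denote the exchange matrix of whatever size is needed, and I record at the outset the only structural fact used repeatedly: the size-$2m$ exchange matrix decomposes as
\begin{equation*}
\left(\begin{array}{cc}\mat 0 & \mat J\\ \mat J & \mat 0\end{array}\right)
\end{equation*}
with $m\times m$ blocks, which is immediate from $\mat J_{i,j}=\delta_{i,n+1-j}$, together with the involution property $\mat J^2=\mat I$.

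For part a) I would write $\mat A=\left(\begin{array}{cc}\mat B & \mat D\\ \mat C & \mat E\end{array}\right)$ in $m\times m$ blocks. Reading off $\mat A=\mat A^{\T}$ forces $\mat B,\mat E$ symmetric and $\mat D=\mat C^{\T}$. Adding persymmetry, which for a symmetric matrix is the centrosymmetry relation $\mat{JAJ}=\mat A$ noted above, and carrying out one block multiplication with the block form of $\mat J$, yields $\mat E=\mat{JBJ}$ and $\mat C^{\T}=\mat{JCJ}$; the latter says $\mat C$ is persymmetric. This is precisely the asserted normal form.

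For part b) I would first verify that
\begin{equation*}
\mat Q=\tfrac{1}{\sqrt2}\left(\begin{array}{cc}\mat I & \mat I\\ \mat J & -\mat J\end{array}\right)
\end{equation*}
is orthogonal (again using $\mat J^2=\mat I$ and $\mat J^{\T}=\mat J$), so that the displayed identity is exactly $\mat Q^{\T}\mat A\mat Q$ once the scalar $\tfrac12$ is seen to absorb the two normalising factors. Multiplying out the blocks, every entry is a combination of $\mat B$, $\mat{JC}$ and $\mat C^{\T}\mat J$. The decisive step, and the only place persymmetry of $\mat C$ is genuinely used, is the identity $\mat C^{\T}\mat J=\mat{JC}$, obtained from $\mat C^{\T}=\mat{JCJ}$ by right-multiplying with $\mat J$ and using $\mat J^2=\mat I$. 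It is exactly this relation that annihilates the two off-diagonal blocks and fuses the diagonal ones into $\mat B+\mat{JC}$ and $\mat B-\mat{JC}$. The same identity also shows $\mat{JC}$ is symmetric, so that both blocks $\mat B\pm\mat{JC}$ are symmetric and hence admit orthonormal eigenvectors.

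Part c) then follows by transporting eigenvectors of the two diagonal blocks back through $\mat Q$. If $\vec{v_i}$ is an orthonormal eigenvector of $\mat B+\mat{JC}$, then
\begin{equation*}
\mat Q\left(\begin{array}{c}\vec{v_i}\\ \mat 0\end{array}\right)=\frac{1}{\sqrt2}\left(\begin{array}{c}\vec{v_i}\\ \mat J\vec{v_i}\end{array}\right)
\end{equation*}
is an eigenvector of $\mat A$, and a direct check with the block exchange matrix shows it is fixed by $\mat J$, i.e. symmetric; likewise each orthonormal eigenvector $\vec{u_i}$ of $\mat B-\mat{JC}$ produces the skew-symmetric eigenvector $\tfrac{1}{\sqrt2}(\vec{u_i},-\mat J\vec{u_i})^{\T}$. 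I expect no single step to be hard; the real work is keeping the block bookkeeping honest, and the one conceptual pivot throughout is the persymmetry identity $\mat C^{\T}\mat J=\mat{JC}$, without which neither the off-diagonal cancellation in b) nor the clean symmetric/skew split in c) would occur.
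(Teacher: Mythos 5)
Your proposal is correct and takes essentially the same approach as the paper: the orthogonal block-diagonalization you carry out, hinging on the persymmetry identity $\mat C^{\T}\mat J = \mat{JC}$, is exactly the computation displayed in part b) of the lemma itself. Your parts a) and c) supply precisely the block bookkeeping and the transport of orthonormal eigenvectors of $\mat B \pm \mat{JC}$ through the orthogonal transformation that the paper defers to the original Cantoni--Butler reference, so there is no gap.
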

The discussed transformation (\ref{eq:Transform2BlocksEvenN})
to block diagonal matrices of smaller size is quite cheap
and can be exploited to save computational costs, see, e.g., \cite{Auckenthaler08ExpmPrefix}.
\begin{remark}
In general, the transformation of symmetric persymmetric matrices to block diagonal form
\myref{eq:Transform2BlocksEvenN} cannot be continued recursively
because the matrix $\mat B \pm \mat{JC}$ is symmetric but usually no longer persymmetric.
\end{remark}

Altogether, any symmetric persymmetric matrix has eigenvectors which are either symmetric or skew-symmetric,
i.e. $\mat J \vec{v} = \vec{v}$ or $\mat J \vec{v} = - \vec{v}$.
However, one has to be careful with these statements in the case of degenerate 
eigenvalues.
If the two blocks share an eigenvalue, $\mat A$ has as eigenvectors linear combinations of symmetric
and skew-symmetric vectors, so the eigenvectors themselves are in general neither symmetric nor skew-symmetric.

A matrix is called \defini{Toeplitz matrix}, if it is of the form
\begin{equation*}\label{eq:Toeplitz}
\mat T = \left(
      \begin{array}{cccc}
        r_0 & r_1 &  & r_{n-1} \\
        r_{-1} & r_0 & \ddots &  \\
         & \ddots & \ddots & r_1 \\
        r_{-n+1} &  & r_{-1} & r_0 \\
      \end{array}
    \right) \; .
\end{equation*}
Toeplitz matrices obviously belong to the larger class of persymmetric matrices.
Therefore, real symmetric Toeplitz matrices are symmetric persymmetric.
An important class of Toeplitz matrices are the \defini{circulant} matrices
taking the form
\begin{equation*}\label{eq:circulant}
\mat C = \left(
      \begin{array}{cccc}
        r_0 & r_1 &  & r_{n-1} \\
        r_{n-1} & r_0 & \ddots &  \\
         & \ddots & \ddots & r_1 \\
        r_{1} &  & r_{n-1} & r_0 \\
      \end{array}
    \right) \; .
\end{equation*}
Any circulant matrix $\mat C$ with entries $\vec r := (r_0, r_1, \dots, r_{n-1})^{\mathrm T}$ can be diagonalized
by the Fourier matrix $\mat{F_n} = (f_{j,k}); f_{j,k} = \tfrac{1}{\sqrt{n}} e^{2 \pi i j k / n}$ \cite{GolubLoan} via
\begin{equation}\label{eq:DiagonalizeCirculantDFT}
\mat C = \mat {F_n^{\text{$-1$}}} \diag( \mat{F_n} \vec{r}) \mat{F_n} = \mat{F_n} \diag( \mat{F_n} \vec{r}) \mat{F_n} \; .
\end{equation}
Analogously, a \defini{skew-circulant} matrix looks like
\begin{equation*}\label{eq:skewCirculant}
\mat{C_s} = \left(
      \begin{array}{cccc}
        r_0 & r_1 &  & r_{n-1} \\
        -r_{n-1} & r_0 & \ddots &  \\
         & \ddots & \ddots & r_1 \\
        -r_{1} &  & -r_{n-1} & r_0 \\
      \end{array}
    \right) \; .
\end{equation*}
In general, an \defini{$\omega$-circulant} matrix with $\omega = e^{i \phi}$ is defined by
\begin{equation*}\label{eq:omCirculant}
\mat{C_{\boldsymbol \omega}} = \left(
      \begin{array}{cccc}
        r_0 & r_1 &  & r_{n-1} \\
        \omega r_{n-1} & r_0 & \ddots &  \\
         & \ddots & \ddots & r_1 \\
        \omega r_{1} &  & \omega r_{n-1} & r_0 \\
      \end{array}
    \right) \; .
\end{equation*}
These matrices can be transformed into a circulant matrix by the unitary
diagonal matrix $\mat{\Omega_{n; \boldsymbol \omega}} = \diag( \omega^{j/n} )_{j=0,...,n-1}$ :
\begin{equation}\label{eq:transformomegaCirculant2Circulant}
\mat{\Omega_{n; \boldsymbol \omega}^{\herm}} \mat{C_{\boldsymbol \omega}} \mat{\Omega_{n; \boldsymbol \omega}}
    = {\mat{\overline{\Omega}}_{n; \boldsymbol \omega}} \mat{C_{\boldsymbol \omega}} \mat{ \Omega_{n; \boldsymbol \omega}}
    = \left(
      \begin{array}{cccc}
        \tilde r_0 & \tilde r_1 &  & \tilde r_{n-1} \\
        \tilde r_{n-1} & \tilde r_0 & \ddots &  \\
         & \ddots & \ddots & \tilde r_1 \\
        \tilde r_{1} &  & \tilde r_{n-1} & \tilde r_0 \\
      \end{array}
    \right) \; ,
\end{equation}
where $\tilde r_k := \omega^{k/n} r_k$.
\defini{Multilevel circulant} matrices are defined by the property that
the eigenvector matrix is given by a tensor product of Fourier matrices
\mbox{$\mat{F_{n_1}} \otimes \cdots \otimes \mat{F_{n_k}}$}.
\defini{Block-Toeplitz-Toeplitz-Block} matrices, also called \defini{2-level Toeplitz} matrices,
have a Toeplitz block structure where each block itself is Toeplitz.
More general, a \defini{multilevel Toeplitz matrix} has a hierarchy of blocks
with Toeplitz structure.

\subsection{Representations of Spin Hamiltonians}
\label{subsec:HamiltonianRepresentations}
For spin-$\tfrac{1}{2}$ particles such as electrons or protons, the spin angular momentum operator
describing their internal degree of freedom (i.e. \defini{spin-up} and \defini{spin-down})
is usually expressed in terms of the \defini{Pauli matrices}
\begin{equation*}\label{eq:Pauli}
\mat{P_x}= \pmat{0 \ & 1 \\ 1 \ & 0 }, \
\mat{P_y}= \pmat{0 \ & -i \\ i \ & 0 } = i \pmat{0 \ & -1 \\ 1 \ & 0} \ \text{and} \
\mat{P_z}= \pmat{1 \ & 0 \\ 0 \ & -1}\quad.
\end{equation*}
For further details, a reader wishing to
approach quantum physics from linear and multilinear algebra may refer to
\cite{Faddeev2009Lectures}. 
Being traceless and Hermitian, $\{\mat{P_x}, \mat{P_y},\mat{P_z}\}$ forms a basis of the Lie algebra $\mathfrak{su}(2)$,
while by appending the $2\times2$ identity matrix $\mat I$ one obtains a basis of the Lie algebra $\mathfrak{u}(2)$.
This fact can be generalized in the following way:
for any integer $p$ a basis of the Lie algebra $\mathfrak{u}(2^p)$ is given by
$$ \bigl\{ \mat{Q_1} \otimes \mat{Q_2} \otimes \cdots \otimes \mat{Q_p} \; ; \;
\mat{Q_i} \in \{\mat{P_x}, \mat{P_y}, \mat{P_z},\mat I\} \bigr\} \; .$$
To get a basis for $\mathfrak{su}(2^p)$ we have to consider only traceless matrices
and therefore we have to exclude the identity, which results in the basis
$$ \bigl\{ \mat{Q_1} \otimes \mat{Q_2} \otimes \cdots \otimes \mat{Q_p} \; ; \;
\mat{Q_i} \in \{\mat{P_x}, \mat{P_y}, \mat{P_z},\mat{I}\} \bigr\}
    \setminus \{ \mat{I} \otimes \cdots \otimes \mat{I} \} \; . $$

Now, spin Hamiltonians are built by summing $M$ terms,
each of them representing a physical (inter)action.
These terms are themselves tensor products of Pauli matrices or identities
\begin{equation}\label{eq:Hamiltonian}
\mat H = \sum_{k=1}^M\underbrace{\alpha_k \bigl(\mat{Q_{1}^{(k)}}\otimes \mat{Q_{2}^{(k)}}\otimes %
		\cdots \otimes \mat{Q_{p}^{(k)}} \bigr)}_{=:\mat{H^{(k)}}}= \sum_{k=1}^M \mat{H^{(k)}}\;,
\end{equation}
where the coefficients $\alpha_k$ are real and the matrices
$\mat{Q_j^{(k)}}$ can be $\mat{P_x}$, $\mat{P_y}$, $\mat{P_z}$ or $\mat I$.

In each summand $\mat{H^{(k)}}$ most of the $\mat{Q_{j}^{(k)}}$ are $\mat I$:
{\em local terms} have just one nontrivial tensor factor, while {\em pair interactions} have two of them.
Higher \mbox{$m$-body} interactions (with $m>2$) usually do not occur as physical primitives, but could be
represented likewise by $m$ Pauli matrices in the tensor product representing the $m$-order interaction
term.
For defining spin Hamiltonians we will need tensor powers of the $2 \times 2$ identity~$\mat I$:
$$ \mat{I}^{\otimes k} := \underbrace{\mat I \otimes \cdots \otimes \mat I}_{k} \; .$$
For instance, in the \defini{Ising} ($ZZ$) model (see e.g. \cite{P70}) for the 1D chain with $p$ spins
and open boundary conditions, the spin Hamiltonian takes the form
\begin{equation}\label{eq:IsingModel}
\begin{split}
\mat H  & = \sum_{k=1}^{p-1} \mat I^{\otimes (k-1)}\otimes
        (\mat{P_z})_k\otimes (\mat{P_z})_{k+1}\otimes \mat{I}^{\otimes (p-k-1)} \\
   & \qquad +  \lambda \sum_{k=1}^p \mat{I}^{\otimes (k-1)}\otimes (\mat{P_x})_k\otimes \mat{I}^{\otimes (p-k)}\; ,
\end{split}
\end{equation}
where the index $k$ denotes the position in the spin chain and the real number $\lambda$ describes
the ratio of the strengths of the magnetic field and the pair interactions.
Using $\mu,\nu\in\{x,y,z\}$, one may define
{\allowdisplaybreaks\begin{align}
\mat{H_\nu} & := \sum_{k=1}^{p} \mat{I}^{\otimes (k-1)} \otimes
                    (\mat{P_\nu})_k \otimes \mat I^{\otimes (p-k) } \; , \label{eq:defHx} \\
 \mat{H_{\mu\mu}}& := \sum_{k=1}^{p-1} \mat I^{\otimes (k-1)} \otimes
                    (\mat{P_\mu})_k \otimes (\mat{P_\mu})_{k+1} \otimes \mat I^{\otimes (p-k-1)} \; . \label{eq:defHxx}
\end{align}}
The terms \myref{eq:defHxx} correspond to the so-called \defini{open boundary case}.
In the \defini{periodic boundary case} there are also connections between sites $1$ and $p$, which reads
\begin{align}\label{eq:defHxxPBC}
\mat{H_{\mu\mu}^{'}} = \mat{H_{\mu\mu}} + (\mat{P_\mu})_1 \otimes \mat I^{\otimes (p-2)} \otimes (\mat{P_\mu})_p \; .
\end{align}
Note that in the literature often the identity matrices and the tensor
products are ignored giving the equivalent notation
\begin{align}\label{eq:defHxxPBCI}
\mat{H_{\mu\mu}^{'}} & := \sum_{k=1}^{p}(\mat{P_\mu})_k  (\mat{P_\mu})_{k+1 \ \mathrm{mod} \ p}   \; .
\end{align}
In analogy to the Ising model \myref{eq:IsingModel}, it is customary to define
various types of Heisenberg models (\cite{LSM61,AKLT87})
in terms of (either vanishing or degenerate) real constants $j_x$, $j_y$ and $j_z$.
Table \ref{tab:Heisenberg} gives a list of possible 1D models
\begin{table}
  \tbl{List of different 1D models.}
{\begin{tabular}{@{}ll}\toprule
   Interaction  & Hamiltonian \\
\colrule
  Ising-ZZ & $j_z \mat{H_{zz}} + \lambda \mat{H_x}$ \\[2mm]
  Heisenberg-XX & $j_x \mat{H_{xx}} + j_x \mat{H_{yy}} + \lambda \mat{H_x}$ \\[1mm]
  Heisenberg-XY & $j_x \mat{H_{xx}} + j_y \mat{H_{yy}} + \lambda \mat{H_x}$ \\[1mm]
  Heisenberg-XZ & $j_x \mat{H_{xx}} + j_z \mat{H_{zz}} + \lambda \mat{H_x}$ \\[1mm]
  Heisenberg-XXX & $j_x \mat{H_{xx}} + j_x \mat{H_{yy}} + j_x \mat{H_{zz}} + \lambda \mat{H_x}$ \\[1mm]
  Heisenberg-XXZ & $j_x \mat{H_{xx}} + j_x \mat{H_{yy}} + j_z \mat{H_{zz}} + \lambda \mat{H_x}$ \\[1mm]
  Heisenberg-XYZ & $j_x \mat{H_{xx}} + j_y \mat{H_{yy}} + j_z \mat{H_{zz}} + \lambda \mat{H_x}$ \\[1mm]
   \botrule
  \end{tabular}}
\label{tab:Heisenberg}
\end{table}
where, in addition, one may have either open or periodic boundary conditions.
The operators with the additional term $\lambda \mat{H_x}$ are sometimes called
generalized  Heisenberg models. The $XX$, resp. $XXX$ models are called
\defini{isotropic}.

For spin-$1$ models, the operators take the form
\begin{equation}\label{eq:spin1Operators}
\mat{S_x} = \frac{1}{\sqrt{2}} \pmat{0 \ & \ 1 \ & \ 0 \\ 1 \ & \ 0 \ & \ 1 \\ 0 \ & \ 1 \ & \ 0} \; , \;
\mat{S_y} = \frac{1}{\sqrt{2}} \pmat{0 \ & \ -i \ & \ 0 \\ i \ & \ 0 \ & \ -i \\ 0 \ & \ i \ & \ 0} \; , \;
\mat{S_z} = \pmat{1 \ & \ 0 \ & \ 0 \\ 0 \ & \ 0 \ & \ 0 \\ 0 \ & \ 0 \ & \ -1} \; .
\end{equation}

The AKLT model is defined as (\cite{LSM61,AKLT87})
\begin{equation}\label{eq:defAKLT}
\mat{H}=\sum_k \vec S_k\vec S_{k+1}+\tfrac{1}{3}(\vec S_k\vec S_{k+1})^2
\end{equation}
where
$\vec{S_k} \vec{S_{k+1}} := (\mat{S_x})_k (\mat{S_x})_{k+1} +
(\mat{S_y})_k (\mat{S_y})_{k+1} + (\mat{S_z})_k (\mat{S_z})_{k+1}$.
More generally, the bilinear biquadratic model has Hamiltonian
\begin{equation}\label{eq:defBilinearBiquadratic}
\mat{H}=\sum_k \cos(\theta )\vec S_k\vec S_{k+1}+\sin(\theta )(\vec S_k\vec S_{k+1})^2\; .
\end{equation}
These 1D models can also be extended to 2 and higher dimensions.
Then the neighbor relations cannot be represented linearly but they appear in each direction.
For example, Eqn. \ref{eq:defHxx} would read
$$\mat{H_{\mu\mu}} = \sum_{<j,k>} (\mat{P_\mu})_j (\mat{P_\mu})_k \; ,$$
where $<j,k>$ denotes an interaction between particles $j$ and $k$.

Being a sum \myref{eq:Hamiltonian} of Kronecker products of structured $2 \times 2$ matrices,
many Hamiltonians have special properties, e.g.,
they can be multilevel-circulant (\cite{Davis94Circulant,Tyrtyshnikov00Circulant}) or skew-circulant,
diagonal or persymmetric (\cite{CantoniButler}),
which can be exploited to derive properties of the respective eigenvalues and eigenvectors.

\subsection{Symmetry Properties of the Hamiltonians}
\noindent
To begin, we list some properties of the Pauli matrices.
\subsubsection*{Properties of the Pauli Matrices}
$\mat{P_x}$ is symmetric persymmetric and circulant.
Following Eqn. \ref{eq:DiagonalizeCirculantDFT}, $\mat{P_x}$ can be diagonalized via the Fourier matrix $\mat{F_2}$:
\begin{equation}\label{eq:transformPx}
 \mat{F_2 P_x F_2} = \frac{1}{2} \left(
                                    \begin{array}{cc}
                                      1 \ & \ 1 \\
                                      1 \ & \ -1 \\
                                    \end{array}
                                  \right) \left(
                                            \begin{array}{cc}
                                              0 \ & \ 1 \\
                                              1 \ & \ 0 \\
                                            \end{array}
                                          \right) \left(
                                    \begin{array}{cc}
                                      1 \ & \ 1 \\
                                      1 \ & \ -1 \\
                                    \end{array}
                                  \right) = \left(
                                              \begin{array}{cc}
                                                1 \ & \ 0 \\
                                                0 \ & \ -1 \\
                                              \end{array}
                                            \right) = \mat{P_z} \; .
\end{equation}
The matrix $\mat{P_y}/i$ is skew-symmetric persymmetric.
$\mat{P_y}$ is skew-circulant and
by using Eqn. \ref{eq:transformomegaCirculant2Circulant},
it can be transformed into a circulant (and even real) matrix:
\begin{equation}\label{eq:transformPy2Px}
\mat{\overline{\Omega}_{2;-1} P_y \Omega_{2;-1}} = \left(
     \begin{array}{cc}
       1 \ & \ 0 \\
       0 \ & \ -i \\
     \end{array}
   \right) \left(
             \begin{array}{cc}
               0 \ & \ -i \\
               i \ & \ 0 \\
             \end{array}
           \right) \left(
                     \begin{array}{cc}
                       1 \ & \ 0 \\
                       0 \ & \ i \\
                     \end{array}
                   \right) = \left(
                               \begin{array}{cc}
                                 0 \ & \  1 \\
                                 1 \ & \ 0 \\
                               \end{array}
                             \right) = \mat{P_x} \; ,
\end{equation}
which is due to \myref{eq:transformPx} orthogonally similar to $\mat{P_z}$.

$\mat{P_z}$ is diagonal and symmetric skew-persymmetric.
The $2 \times 2$ identity matrix~$\mat{I}$ is of course circulant, symmetric persymmetric and diagonal.

Now we list symmetry properties of the matrices
defined in Eqn. (\ref{eq:defHx}) and (\ref{eq:defHxxPBC}).
As the matrices are built by Kronecker products of $2 \times 2$-matrices
it will be useful to exploit the fact that the exchange matrix can also be expressed
as Kronecker product of $2 \times 2$-matrices:
$$ \mat{J_{2^p}} = \mat{J_2}\otimes \cdots \otimes \mat{J_2} = \mat{P_x} \otimes \cdots \otimes \mat{P_x} \; .$$
Due to Lemma~\ref{lemma:KroneckerPersymm} applied on this factorization
the matrix $\mat{H_x}$ --- as a sum of Kronecker products of symmetric persymmetric matrices ---
is again symmetric persymmetric.
Moreover, $\mat{H_x}$ is multilevel-circulant as it can be diagonalized
by the Kronecker product of the $2\times2$ Fourier matrix $\mat{F_2}$:
\begin{equation}\label{eq:transformHx2Hz}
\begin{split}
& \left( \mat{F_2} \otimes \cdots \otimes  \mat{F_2}  \right)
    \left( \sum_{k=1}^{p} \mat{I}^{\otimes (k-1)} \otimes (\mat{P_x})_k \otimes \mat I^{\otimes (p-k)}\right)
    (\mat{F_2} \otimes \cdots \otimes \mat{F_2}) \\
= \ & \sum_{k=1}^{p} (\underbrace{\mat{F_2 I F_2}}_{=  \mat{I}})^{\otimes (k-1)} \otimes
\underbrace{(\mat{F_2 P_x F_2})_k}_{\stackrel{\myref{eq:transformPx}}{=} (\mat{P_z})_k}
\otimes ( \underbrace{\mat{F_2 I F_2}}_{=\mat I})^{\otimes (p-k)}  \\
= \ & \sum_{k=1}^{p} \mat I^{\otimes (k-1) } \otimes (\mat{P_z})_k  \otimes \mat{I}^{\otimes (p-k)}
        \stackrel{\myref{eq:defHx}}{=} \mat{H_z} \; .
\end{split}
\end{equation}
Therefore the eigenvalues of $\mat{H_x}$ are all $2^p$ possible combinations
$$ \pm 1 \pm 1 \pm \cdots \pm 1 \; .$$
Trivially, the matrix $\mat{H_y}/i$ is skew-symmetric persymmetric and thus $\mat{H_y}$ is Hermitian.
It can be transformed to $\mat{H_x}$ via
the Kronecker product of the diagonal transforms considered in Eqn. \myref{eq:transformPy2Px}.

Even for the generalized anisotropic case $\mat{H^{an}} = \mat{H_x^{{an}}} + \mat{H_y^{{an}}}$,
where each summand $k$ in both sums 
may have an individual coefficient
$a_k$ and $b_k$, respectively, one can find an appropriate transform.
To this end, consider
{\allowdisplaybreaks
\begin{align*}
 \mat{H^{an}}& = \sum_{k=1}^{p} a_k \cdot \mat{I}^{\otimes (k-1)} \otimes (\mat{P_x})_k \otimes \mat{I}^{\otimes (p-k)}
                    + \sum_{k=1}^{p} b_k \mat I^{\otimes (k-1)} \otimes (\mat{P_y})_k \otimes \mat{I}^{\otimes (p-k)} \\
& = \sum_{k=1}^{p} \mat I^{\otimes (k-1)} \otimes \bigl( a_k (\mat{P_x})_k + b_k (\mat{P_y})_k \bigr)
                        \otimes \mat{I}^{\otimes (p-k)} \\
& = \sum_{k=1}^{p} \mat{I}^{\otimes (k-1)} \otimes \left(
                               \begin{array}{cc}
                                 0 & a_k - i b_k \\
                                 a_k + i b_k & 0 \\
                               \end{array}
                             \right)
 \otimes \mat{I}^{\otimes (p-k)} \\
 & = \sum_{k=1}^{p} \mat{I}^{\otimes (k-1)} \otimes \left(
                               \begin{array}{cc}
                                 0 & r_k e^{- i \phi_k} \\
                                 r_k e^{i \phi_k} & 0 \\
                               \end{array}
                             \right)
 \otimes \mat{I}^{\otimes (p-k)} \; .
\end{align*}}
Each tensor factor
$$ \mat{C_k} := \left(
                               \begin{array}{cc}
                                 0 & r_k e^{- i \phi_k} \\
                                 r_k e^{i \phi_k} & 0 \\
                               \end{array}
                             \right) = \left(
                                         \begin{array}{cc}
                                           0 & r_k e^{- i \phi_k} \\
                                           e^{2 i \phi_k}  (r_k e^{- i \phi_k}) & 0 \\
                                         \end{array}
                                       \right)
                             $$
is $\omega$-circulant ($\omega_k = e^{2 i \phi_k}$).
Following \myref{eq:transformomegaCirculant2Circulant},
$\mat{C_k}$ can be transformed to a real matrix using the diagonal transform
$\mat{D_k} = \mat{\Omega_{2;\boldsymbol \omega_k}}$:
\begin{equation*}
 \mat{\bar D_k} \mat{C_k D_k} = \left(
          \begin{array}{cc}
            0 \ & \ r_k \\
            r_k \ & \ 0 \\
          \end{array}
        \right) = r_k \mat{P_x} \; .
\end{equation*}
Therefore, the overall Hamiltonian $\mat{H_x^{an}} + \mat{H_y^{an}}$ can be transformed to an anisotropic $\mat{H_x}$ term:
\begin{eqnarray*}
\lefteqn{\left( \mat{\bar D_1} \otimes \cdots \otimes \mat{\bar D_p} \right)
    \left(  \sum_{k=1}^{p} \mat{I}^{\otimes (k-1)} \otimes \mat{C_k}
            \otimes \mat{I}^{\otimes (p-k)} \right)
    \left( \mat{D_1} \otimes \cdots \otimes \mat{D_p} \right)} \\
& = & \sum_{k=1}^{p} \mat{I}^{\otimes (k-1)} \otimes \left( \mat{\bar D_k C_k D_k} \right)
 \otimes \mat{I}^{\otimes (p-k)} \\
 & = & \sum_{k=1}^{p} r_k \mat{I}^{\otimes (k-1)} \otimes (\mat{P_x})_k \otimes \mat{I}^{\otimes (p-k)}
 = \mat{\tilde H_x^{an}} \; .
\end{eqnarray*}
Analogously to $\mat{H_x}$ (see Eqn. \ref{eq:transformHx2Hz}),
the resulting matrix $\mat{\tilde H_x^{\text{an}}}$ can be diagonalized by the Kronecker product
$\mat{F_2} \otimes \cdots \otimes \mat{F_2}$.
Therefore, the eigenvalues of $\mat{H_x^{\text{an}}} + \mat{H_y^{\text{an}}}$ are given by all combinations
$$\pm r_1 \pm r_2 \pm \cdots \pm r_p \; .$$

Let us  return to analyzing the properties of Hamiltonians.
The matrix $\mat{H_z}$ is obviously diagonal and skew-persymmetric.
The matrix $\mat{H_{xx}}$ is again symmetric persymmetric (see Lemma \ref{lemma:KroneckerPersymm}).
Similar to $\mat{H_x}$, $\mat{H_{xx}}$ is again multilevel-circulant as it can be diagonalized by the
Kronecker product of the $2\times2$ Fourier matrix $\mat{F_2}$.
A computation similar to Eqn. \myref{eq:transformHx2Hz} results in
\begin{equation*}
 \left( \mat{F_2} \otimes \cdots \otimes  \mat{F_2}  \right) \left( \mat{H_{xx}} \right)
    (\mat{F_2} \otimes \cdots \otimes \mat{F_2})  = \mat{H_{zz}} \; .
\end{equation*}
The matrix $\mat{H_{yy}}$ is real symmetric persymmetric as becomes obvious from
\begin{equation*}
\mat{P_y} \otimes \mat{P_y} = \left(
                    \begin{array}{cc}
                      0 \ & \ -i \\
                      i \ & \ 0 \\
                    \end{array}
                  \right) \otimes \left(
                    \begin{array}{cc}
                      0 \ & \ -i \\
                      i \ & \ 0 \\
                    \end{array}
                  \right) = \left(
                              \begin{array}{cccc}
                                0 \ & \ 0 \ & \ 0 \ & \ -1 \\
                                0 \ & \ 0 \ & \ 1 \ & \ 0 \\
                                0 \ & \ 1 \ & \ 0 \ & \ 0 \\
                                -1 \ & \ 0 \ & \ 0 \ & \ 0 \\
                              \end{array}
                            \right) \;
\end{equation*}
being real and symmetric persymmetric,
which by Lemma \ref{lemma:KroneckerPersymm} translates into a real symmetric persymmetric matrix $\mat{H_{yy}}$.

The matrix $\mat{H_{zz}}$ is diagonal as it is constructed by a sum of Kronecker products of diagonal matrices.
Moreover $\mat{H_{zz}}$ is symmetric persymmetric via
\begin{equation*}
\mat{P_z} \otimes \mat{P_z} = \left(
                    \begin{array}{cc}
                      1 \ & \ 0 \\
                      0 \ & \ -1 \\
                    \end{array}
                  \right) \otimes \left(
                    \begin{array}{cc}
                      1 \ & \ 0 \\
                      0 \ & \ -1 \\
                    \end{array}
                  \right) = \left(
                              \begin{array}{cccc}
                                1 \ & \ 0 \ & \ 0 \ & \ 0 \\
                                0 \ & \ -1 \ & \ 0 \ & \ 0 \\
                                0 \ & \ 0 \ & \ -1 \ & \ 0 \\
                                0 \ & \ 0 \ & \ 0 \ & \ 1 \\
                              \end{array}
                            \right) \;
\end{equation*}
according to Remark \ref{rem:skewpersymm}.

Obviously, the spin-$1$ operators \myref{eq:spin1Operators} have similar symmetry properties
as their $2 \times 2$ counterparts:
the matrix $\mat{S_x}$ is real symmetric persymmetric and has Toeplitz format,
$\mat{S_y}/i$ is a real and skew-symmetric persymmetric Toeplitz matrix,
and $\mat{S_z}$ is symmetric skew-persymmetric and diagonal.
The Kronecker product $\mat{S_y} \otimes \mat{S_y}$ reads
\begin{equation*}
\mat{S_y} \otimes \mat{S_y}
= - \frac{1}{2} \pmat{0 \ & \ -1 \ & \ 0 \\ 1 \ & \ 0 \ & \ -1 \\ 0 \ & \ 1 \ & \ 0} \otimes
                \pmat{0 \ & \ -1 \ & \ 0 \\ 1 \ & \ 0 \ & \ -1 \\ 0 \ & \ 1 \ & \ 0} \; ,
\end{equation*}
a real symmetric persymmetric matrix (compare Remark \ref{rem:skewsymmetric}).
Following Remark \ref{rem:skewpersymm},
the Kronecker product $\mat{S_z} \otimes \mat{S_z}$ is symmetric persymmetric.
Therefore, according to Remark \ref{rem:powerspersymm} and Lemma \ref{lemma:KroneckerPersymm},
Both the AKLT model \myref{eq:defAKLT} and the generalized bilinear biquadratic model \myref{eq:defBilinearBiquadratic}
result in real symmetric persymmtric matrices.

Altogether all previously introduced physical models such as the 1D models listed in Table \ref{tab:Heisenberg}
define real and symmetric persymmetric matrices.
Due to Lemma~\ref{lemma:EigenvectorsPersymmMatrices},
the related eigenvectors such as the ground state (which corresponds to the lowest-lying eigenvalue)
are either symmetric or skew-symmetric.

\section{Application to Matrix Product States}
For efficiently simulating quantum many-body systems, one has to find a sparse (approximate)
representation, because otherwise the state space would
grow exponentially with the number of particles.
Here \/`efficiently\/' means using resources (and hence representations)
growing only polynomially in the system size~$p$.
In the quantum information (QI) society, Matrix Product States are in use to treat 1D problems.

\subsection{Matrix Product States: Formalism and Normal Forms}
This paragraph summarizes some well-known basics about \mps.
We provide both the \mps formalism and normal forms for \mps,
which are well-known in the QI society, from a (multi-)linear algebra point of view.
Afterwards we present own findings to construct normal forms and discuss the benefit of such forms.

\subsubsection{Formalism}

For 1D spin systems, consider Matrix Product States,
where every physical site $j$ is associated with a pair of matrices
$\mat{A_j^{(0)}}, \mat{A_j^{(1)}} \in \mathbb C^{D_j \times D_{j+1}}$,
representing one of the two possibilities spin-up or spin-down.

Let $(i_1,i_2,\dots,i_p)$ denote the binary representation of the integer index $i$.
Then the $i$th vector component takes the form
\begin{equation}\label{eq:MPSComponent}
x_i = x_{i_1,\dots, i_p} = \trace \left(\mat{A_1^{(i_1)}}\cdot \mat{A_2^{(i_2)}}\cdots \mat{A_p^{(i_p)}} \right) \; .
\end{equation}
Hence, the overall vector $\vec{x}$ can be expressed as
\begin{equation*}\label{eq:MPSVector}
\begin{split}
\vec{x} & = \sum\limits_{i=1}^{2^p} x_i \vec{e_i} =
                \sum\limits_{i_1,i_2,\dots,i_p} x_{i_1,\dots, i_p} \vec{e_{i_1}} \otimes \cdots \otimes \vec{e_{i_p}} \\
        & = \sum\limits_{i_1,\dots, i_p} \trace
                \left(\mat{A_1^{(i_1)}}\cdot \mat{A_2^{(i_2)}}\cdots \mat{A_p^{(i_p)}} \right)
                \vec{e_{i_1}} \otimes \cdots \otimes \vec{e_{i_p}}\\
        & = \sum\limits_{i_1,\dots, i_p} \biggl(
                \sum\limits_{m_1,\dots, m_p} A_{1;m_1,m_2}^{(i_1)}\cdot A_{2;m_2,m_3}^{(i_2)}\cdots A_{p;m_p,m_1}^{(i_p)}
                \biggr) \vec{e_{i_1}} \otimes \cdots \otimes \vec{e_{i_p}} \\
        & = \sum\limits_{m_1,\dots, m_p} \biggl( \sum_{i_1} A_{1;m_1,m_2}^{(i_1)} \vec{e_{i_1}} \biggr)
        \otimes \cdots \otimes \biggl( \sum_{i_p} A_{p;m_p,m_1}^{(i_p)} \vec{e_{i_p}} \biggr) \\
        & = \sum_{m_1,m_2,...,m_p} \vec{a_{1;m_1,m_2}} \otimes \vec{a_{2;m_2,m_3}} \otimes
\cdots\otimes   \vec{a_{p;m_p,m_1}}
\end{split}
\end{equation*}
with vectors $\vec{a_{j;m_j,m_{j+1\; \mathrm{mod}\; p}}}$ of length 2. These vectors are
pairs of entries at position $m_j,m_{j+1\; mod\; p}$ from the matrix pair
$\mat{A_j^{(i_j)}}$, $i_j=0,1$.

We distinguish between open boundary conditions, where $D_1 = D_{p+1} = 1$
and periodic boundary conditions,
where the first and last particles are also connected: $D_1=D_{p+1}>1$.
The first case corresponds to the Tensor Train format (\cite{Oseledets11tt}),
the latter to the Tensor Chain format(\cite{Khoromskij11QuanticsApproximation}).
Considerations on {\sc{mps}} from a mathematical point of view can be found in \cite{Huckle11Computations}.

\subsubsection{Normal Forms} \label{sec:normalFormsMPS}
The \MPS ansatz does not lead to unique representations,
because we can always introduce factors of the form $\mat{M_j}\mat{M_j^{\text{$-1$}}}$ between $\mat{A_j^{(i_j)}}$ and
$\mat{A_{j+1}^{(i_{j+1})}}$.
In order to reduce this ambiguity in the open boundary case one can use the SVD
to replace the matrix pair $(\mat{A_j^{(0)}},\mat{A_j^{(1)}})$ by parts of unitary matrices
(see, e.g. \cite{SchollDMRG2011}).
To this end, one may start from the left (right), carry out an SVD,
replace the current pair of \mps matrices by parts of unitary matrices,
shift the remaining SVD part to the right (left) neighbor, and proceed recursively with the neighboring site.
Starting from the left one obtains a \defini{left-normalized} \mps representation fulfilling the gauge condition
\begin{equation}\label{eq:mpsGaugeLeft}
\bigl(\mat{A_j^{(0)}} \bigr)^{\herm}\mat{A_j^{(0)}} + \bigl( \mat{A_j^{(1)}} \bigr)^{\herm}\mat{A_j^{(1)}}=\mat{I} \; .
\end{equation}
Analogously, if we start the procedure from the right, we end up with a \defini{right-normalized} \mps representation fulfilling
\begin{equation}\label{eq:mpsGaugeRight}
\mat{A_j^{(0)}} \bigl(\mat{A_j^{(0)}} \bigr)^{\herm} +
\mat{A_j^{(1)}} \bigl( \mat{A_j^{(1)}} \bigr)^{\herm} = \mat{I} \; .
\end{equation}
In the periodic boundary case these gauge conditions can only be achieved for all up to one site.

Still some ambiguity remains because we can insert $\mat{W_j W_j}^{\herm}$ with any unitary $\mat{W_j}$
in the \MPS representation \myref{eq:MPSComponent}
between the two terms at position $j$ and $j+1$
without any effect to the gauge conditions \myref{eq:mpsGaugeLeft} or \myref{eq:mpsGaugeRight}.
To overcome this ambiguity a stronger normalization can be derived (see, e.g. \cite{eckholt}).
It is based on different matricizations of the vector to be represented and can be written in the form
\begin{equation}\label{eq:mpsRepresentationLambdaGamma}
\vec{x_{i_1...i_p}} =
\mat{\Gamma_1^{(i_1)}} \bigl( \mat{\Lambda_1\Gamma_2^{(i_2)}} \bigr) \bigl( \mat{ \Lambda_2 \Gamma_3^{(i_3)}} \bigr)\cdots
\bigl( \mat{\Lambda_{p-1}\Gamma_p^{(i_p)}} \bigr) =
\mat{A_1^{(i_1)} A_2^{(i_2)} A_3^{(i_3)}} \cdots \mat{A_p^{(i_p)}}
\end{equation}
with diagonal matrices $\mat{\Lambda_j}$ containing the singular values of special
matricizations of the vector $\vec{x}$.
The following lemma states the existence of such an \mps representation.
\begin{lemma}[(\cite{Vidal03}) ]\label{lemma:ExistenceUniqueMPSGauge}
Any vector $\vec{x} \in \mathbb C^{2^p}$ of norm $1$ can be represented by an \mps representation
fulfilling the left conditions
\begin{subequations}\label{eq:MPS2normalizationConditionsFromLeft}
\begin{align} \label{eq:MPS2normalizationConditionsFromLeftConda}
  \bigl( \mat{A_j^{(0)}} \bigr)^{\herm} \mat{A_j^{(0)}} + \bigl( \mat{A_j^{(1)}} \bigr)^{\herm}\mat{A_j^{(1)}}
    & =\mat{I} \\ \label{eq:MPS2normalizationConditionsFromLeftCondb}
  \mat{A_j^{(0)}\Lambda_j^{\text{$2$}}} \bigl(\mat{A_j^{(0)}}\bigr)^{\herm} +
  \mat{A_j^{(1)}\Lambda_j^{\text{$2$}}} \bigl( \mat{A_j^{(1)}} \bigr)^{\herm}
    & = \mat{\Lambda}_{\mat{j-1}}^2
\end{align}
\end{subequations}
or the right conditions
\begin{subequations}\label{eq:MPS2normalizationConditionsFromRight}
\begin{align}
\mat{A_j^{(0)}}\bigl(\mat{A_j^{(0)}}\bigr)^{\herm}+\mat{A_j^{(1)}}\bigl(\mat{A_j^{(1)}}\bigr)^{\herm} & =\mat{I} \\
\bigl(\mat{A_j^{(0)}}\bigr)^{\herm}\mat{\Lambda}_{\mat{j-1}}^2 \mat{A_j^{(0)}} +
\bigl( \mat{A_j^{(1)}}\bigr)^{\herm}\mat{\Lambda}_{\mat{j-1}}^2 \mat{A_j^{(1)}} & =
        \mat{\Lambda}_{\mat{j}}^2 \; ,
\end{align}
\end{subequations}
where the $D_{j+1}\times D_{j+1}$ diagonal matrices $\mat{\Lambda_j}$ contain the non-zero singular values of the
matricization of $\vec{x}$ relative to index partitioning $(i_1,...,i_j),(i_{j+1},...,i_p)$,
diagonal entries ordered in descending order.
\end{lemma}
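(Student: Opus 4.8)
The plan is to construct the matrices $\Gamma_j^{(i_j)}$ and the diagonal weight matrices $\Lambda_j$ explicitly by a left-to-right sweep of singular value decompositions, and then to read off conditions \myref{eq:MPS2normalizationConditionsFromLeftConda} and \myref{eq:MPS2normalizationConditionsFromLeftCondb} directly from the orthonormality of the singular vectors produced along the way. First I would fix, for each cut $j$, the matricization $X^{(j)} \in \mathbb C^{2^j \times 2^{p-j}}$ of $\vec x$ obtained by grouping $(i_1,\dots,i_j)$ as the row index and $(i_{j+1},\dots,i_p)$ as the column index, and let $\Lambda_j$ be the diagonal matrix of its $D_{j+1}$ non-zero singular values in descending order. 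Since $\enorm{\vec x}=1$ we get $\trace(\Lambda_j^2)=1$ for every $j$, and the SVD (equivalently, the Schmidt decomposition across that cut) reads $X^{(j)} = U_j\,\Lambda_j\,V_j^{\herm}$ with $U_j$ and $V_j$ having orthonormal columns.

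Next I would relate consecutive cuts. The crucial structural fact is that the column space of $X^{(j)}$ sits inside the column space of $X^{(j-1)}$ tensored with the local space $\mathbb C^2$ at site $j$; concretely, reshaping the columns of $U_{j-1}$ against the site-$j$ index and performing a further SVD yields $U_j$ together with a transition tensor. This lets me \emph{define} $\Gamma_j^{(i_j)}$ as the coefficients expressing the left singular vectors at bond $j$ through those at bond $j-1$ and the local basis vector $\ket{i_j}$, and to set $A_1^{(i_1)} = \Gamma_1^{(i_1)}$ and $A_j^{(i_j)} = \Lambda_{j-1}\Gamma_j^{(i_j)}$ exactly as in \myref{eq:mpsRepresentationLambdaGamma}. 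Telescoping these relations across all sites then reproduces the component formula \myref{eq:MPSComponent}, so that the constructed \mps indeed represents $\vec x$.

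It then remains to verify the gauge relations. Condition \myref{eq:MPS2normalizationConditionsFromLeftConda} is precisely the statement that the accumulated left factor $U_j$ is an isometry, i.e. that $\sum_{i_j}(A_j^{(i_j)})^{\herm} A_j^{(i_j)} = I$, which is immediate from orthonormality of the columns of $U_j$. Condition \myref{eq:MPS2normalizationConditionsFromLeftCondb} is the Schmidt-coefficient bookkeeping that the weighted right factors at bond $j$ push forward to the weights at bond $j-1$; after substituting $A_j^{(i_j)} = \Lambda_{j-1}\Gamma_j^{(i_j)}$ it reduces to orthonormality of the columns of $V_j$ rephrased through the transition tensor. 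The right-hand conditions \myref{eq:MPS2normalizationConditionsFromRight} follow symmetrically by sweeping from the right instead.

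I expect the main obstacle to be the nesting/compatibility step: one must show that the intermediate matrices fed into each SVD have the \emph{same} non-zero singular values as the global matricization $X^{(j)}$, so that the recursively produced $\Lambda_j$ genuinely are the Schmidt coefficients claimed in the statement. This holds because left-multiplication by the isometry $U_{j-1}$ leaves singular values invariant, but making that argument precise — and in particular handling rank drops from vanishing or degenerate singular values so that the bond dimensions $D_{j+1}$ remain consistent across the sweep — is the part that requires genuine care rather than routine computation.
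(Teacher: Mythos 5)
Your proposal is correct and takes essentially the same route as the paper's proof: a left-to-right sweep of SVDs of the matricizations at successive cuts, the nesting relation $\mat{U_j^{(i_j)}} = \mat{U_{j-1}}\mat{A_j^{(i_j)}}$ with $\mat{A_j^{(i_j)}} = \mat{\Lambda_{j-1}\Gamma_j^{(i_j)}}$, condition \myref{eq:MPS2normalizationConditionsFromLeftConda} read off from orthonormality of the left singular vectors, and condition \myref{eq:MPS2normalizationConditionsFromLeftCondb} from orthonormality of the right singular vectors. The only difference is bookkeeping: where you justify the cut-to-cut consistency (your flagged ``main obstacle'') by invariance of singular values under left multiplication by an isometry, the paper instead inverts a full-rank submatrix of $\mat{\Lambda_2 W_3}$ to express $\mat{U_2^{(i_2)}}$ through $\mat{\Gamma_1 \Lambda_1}$ -- two equivalent ways of making the same step precise.
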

The proof of this lemma is constructive and provides \mps factors~$\mat{A_j^{(i_j)}}$ again
as parts of unitary matrices, but satisfying {\bf two} normalization conditions.
These conditions are well-known in the QI society,
see, e.g., \cite{Vidal03, Verstraete06}.
The following proof is adapted from \cite{eckholt}, but we reformulate it in mathematical (matrix) notation.
\begin{proof}
Let us prove representation \myref{eq:MPS2normalizationConditionsFromLeft}
for a given vector $\vec{x}$ by orthogonalization from the left.
We start with considering the SVD of the first matricization relative to $i_1$,
\begin{equation}\label{eq:FirstMatricastion}
\mat{X_{i_1,(i_2,...,i_p)}}=\mat{U_1 \Lambda_1 W_2}=
\left(\begin{array}{c}
\mat{A_1^{(0)}\Lambda_1 W_2}\\
\mat{A_1^{(1)}\Lambda_1 W_2}
\end{array}
\right) = \left(
            \begin{array}{c}
              \mat{\Gamma_1^{(0)}\Lambda_1 W_2} \\
              \mat{\Gamma_1^{(1)}\Lambda_1 W_2} \\
            \end{array}
          \right)
\end{equation}
with the notation $\mat{A_1}=\mat{U_1}=\mat{\Gamma_1}$ and $\mat{\Lambda_1}$ containing all positive singular values.
Therefore, the columns of $\mat{U_1}$ are
pairwise orthonormal satisfying
$$\mat{I}=\mat{A_1^{\herm} A_1} =
    \bigl( \mat{A_1^{(0)}} \bigr)^{\herm}\mat{A_1^{(0)}}+\bigl(\mat{A_1^{(1)}}\bigr)^{\herm}\mat{A_1^{(1)}}\; .$$
Now, the second matricization gives the SVD
\begin{equation}\label{eq:SecondMatricasation}
\mat{X_{(i_1,i_2),(i_3,...,i_p)}}=\mat{U_2 \Lambda_2 W_3}=
\left(
\begin{array}{c}
\mat{U_2^{(0)}}\\
\mat{U_2^{(1)}}
\end{array}
\right)
\mat{\Lambda_2 W_3}\; .
\end{equation}
Note that because both matricizations \myref{eq:FirstMatricastion} and \myref{eq:SecondMatricasation}
represent the same vector $\vec X$, each column of $\mat{U_2^{(0)}}$ can be
represented as $\mat{\Gamma_1\Lambda_1}\cdot \mat{\Gamma_2^{(0)}}$ for some $\mat{\Gamma_2^{(0)}}$.
This follows by
$$\mat{U_2^{(0)}\Lambda_2 W_3}= \mat{\Gamma_1 \Lambda_1 W_2}\; .$$
Picking a full-rank submatrix $\mat{C}$ of $\mat{\Lambda_2 W_3}$ and applying the inverse from the right yields
$$\mat{U_2^{(0)}}=\mat{\Gamma_1 \Lambda_1 \hat W_2} \; .$$
The same holds for $\mat{U_2^{(1)}}$ with some $\mat{\Gamma_2^{(1)}}$.
With these matrices $\mat{\Gamma_2^{(0)}}$ and $\mat{\Gamma_2^{(1)}}$ we can write
\begin{equation*}
\mat{U_2}=\pmat{
\mat{U_2^{(0)}}\\
\mat{U_2^{(1)}}
}=
\pmat{
\mat{\Gamma_1\Lambda_1\Gamma_2^{(0)}}\\
\mat{\Gamma_1\Lambda_1\Gamma_2^{(1)}}
}=
\pmat{
\mat{\Gamma_1 A_2^{(0)}}\\
\mat{\Gamma_1 A_2^{(1)}}}
\end{equation*}
with
\begin{equation}\label{eq:definitionA_2}
\pmat{
\mat{A_2^{(0)}}\\
\mat{A_2^{(1)}}
}:=
\pmat{
\mat{\Lambda_1 \Gamma_2^{(0)}}\\
\mat{\Lambda_1 \Gamma_2^{(1)}}
} =
\pmat{
\mat{\Gamma_1^{\herm} \Gamma_1\Lambda_1 \Gamma_2^{(0)}}\\
\mat{\Gamma_1^{\herm} \Gamma_1 \Lambda_1 \Gamma_2^{(1)}}
}=
\pmat{
\mat{\Gamma_1^{\herm}U_2^{(0)}}\\
\mat{\Gamma_1^{\herm}U_2^{(1)}}
}\; .
\end{equation}
In view of the SVD representation \myref{eq:SecondMatricasation} of $\mat{X_{(i_1,i_2),(i_3,...,i_p)}}$ one finds
\begin{eqnarray*}
\mat{I}  = \mat{U_2^{\herm}}\mat{U_2} & = & \bigl(\mat{A_2^{(0)}}\bigr)^{\herm} \mat{\Gamma_1^{\herm}}
                \mat{\Gamma_1 }\mat{A_2^{(0)}}+\bigl(\mat{A_2^{(1)}}\bigr)^{\herm} \mat{\Gamma_1^{\herm}}
                \mat{\Gamma_1 A_2^{(1)}} \\
& = & \bigl(\mat{A_2^{(0)}}\bigr)^{\herm}\mat{A_2^{(0)}}+\bigl(\mat{A_2^{(1)}}\bigr)^{\herm}\mat{A_2^{(1)}}\; ,
\end{eqnarray*}
which corresponds to the first normalization condition \myref{eq:MPS2normalizationConditionsFromLeftConda}.
Now we can rewrite the second matricization \myref{eq:SecondMatricasation} as
\begin{equation*}
\mat{X_{(i_1,i_2),(i_3,...,i_p)}}=
\pmat{
\mat{U_2^{(0)}}\\
\mat{U_2^{(1)}}
}\mat{ \Lambda_2 W_3}=
\pmat{
\mat{\Gamma_1 \Lambda_1 \Gamma_2^{(0)}\Lambda_2 W_3}\\
\mat{\Gamma_1 \Lambda_1 \Gamma_2^{(1)}\Lambda_2 W_3}
}\; .
\end{equation*}
Comparing this form of the vector $\vec X$ with the first matricization \myref{eq:FirstMatricastion} gives
\begin{equation*}
\mat{W_2}=\pmat{
\mat{\Gamma_2^{(0)}\Lambda_2 W_3} \ & \ \mat{\Gamma_2^{(1)}\Lambda_2 W_3}
}
\end{equation*}
and therefore
{\allowdisplaybreaks\begin{align*}
\mat{I} & =\mat{W_2W_2^{\herm}} =
\pmat{
\mat{\Gamma_2^{(0)}\Lambda_2 W_3} \ & \ \mat{\Gamma_2^{(1)}\Lambda_2 W_3}
}
\pmat{ \mat{W_3^{\herm} \Lambda_2} \bigl( \mat{\Gamma_2^{(0)}} \bigr)^{\herm} \\
\mat{W_3^{\herm} \Lambda_2} \bigl( \mat{\Gamma_2^{(1)}} \bigr)^{\herm}
} \\
& =
\mat{\Gamma_2^{(0)} \Lambda_2^{\text{$2$}}} \bigl( \mat{ \Gamma_2^{(0)} } \bigr)^{\herm} +
\mat{\Gamma_2^{(1)} \Lambda_2^{\text{$2$}}} \bigl( \mat{ \Gamma_2^{(1)} } \bigr)^{\herm} \; .
\end{align*}}
Multiplying from both sides with $\mat{\Lambda_1}$ is just the second
condition~\myref{eq:MPS2normalizationConditionsFromLeftCondb}:
\begin{equation*}
\begin{split}
\mat{\Lambda}_{\mat{1}}^2 & =
\mat{\Lambda_1 \Gamma_2^{(0)} \Lambda_2^{\text{$2$}}} \bigl( \mat{ \Gamma_2^{(0)} } \bigr)^{\herm} \mat{\Lambda_1} +
\mat{\Lambda_1 \Gamma_2^{(1)} \Lambda_2^{\text{$2$}}} \bigl( \mat{ \Gamma_2^{(1)} } \bigr)^{\herm} \mat{\Lambda_1} \\
& =
\mat{A_2^{(0)} \Lambda_2^{\text{$2$}}} \bigl( \mat{ A_2^{(0)} } \bigr)^{\herm} +
\mat{A_2^{(1)} \Lambda_2^{\text{$2$}}} \bigl( \mat{ A_2^{(1)} } \bigr)^{\herm} \; .
\end{split}
\end{equation*}

In the same way we can use the two matricizations $\mat{X_{(i_1,i_2),(i_3,...,i_p)}}$ and
$\mat{X_{(i_1,i_2,i_3),(i_4,...,i_p)}}$ to derive $\mat{A_3}$,
based on $\mat{\Lambda_2}$, $\mat{\Gamma_2}$, $\mat{U_3}$, $\mat{W_3}$, $\mat{W_4}$
and $\mat{\Lambda_3}$,
satisfying the normalization conditions \myref{eq:MPS2normalizationConditionsFromLeft}.
Then $\mat{A_4},...,\mat{A_p}$ follow similarly.

Starting from the right and using a similar procedure
gives the representation satisfying the normalization conditions \myref{eq:MPS2normalizationConditionsFromRight}.
\end{proof}
\begin{remark}
\begin{enumerate}
\item The resulting \mps representation is unique up to unitary diagonal matrices as long as the
singular values in each diagonal matrix $\mat{\Lambda_j}$ are in descending order
and have no degeneracy (are all different), compare \cite{PerezGarcia07MPS}.
\item One may consider the constructive proof as a possible introduction of \mps (\cite{Vidal03,SchollDMRG2011}).
Then the conditions \myref{eq:MPS2normalizationConditionsFromLeft} or
\myref{eq:MPS2normalizationConditionsFromRight} appear naturally.
\item The proof shows that, in general,
an exact representation comes at the cost of exponentially growing matrix dimensions $D_j$.
For keeping the matrix dimensions limited one would have to introduce SVD-based truncations.
\item The Vidal normalization~\cite{Vidal03} uses
$\mat{\Gamma_j^{(i_j)}}$ and $\mat{\Lambda_j}$ in \myref{eq:mpsRepresentationLambdaGamma}
instead of $\mat{A_j^{(i_j)}}$.
\item Starting from a given \mps $\mat A$-representation~\myref{eq:MPSComponent} it is possible (\cite{SchollDMRG2011})
to build an equivalent $\mat{\Lambda \Gamma}$-representation~\myref{eq:mpsRepresentationLambdaGamma}
without considering the matricizations explicitly.
The construction starts from a right-normalized \mps representation~\myref{eq:mpsGaugeRight}
and then iteratively computes SVDs of modified decompositions related to two neighboring sites.
The conversion from the $\mat{\Lambda \Gamma}$-form to the $\mat A$-form is simpler:
From \myref{eq:definitionA_2} it becomes obvious to set $\mat{A_j^{(i_j)}}=\mat{\Lambda_{j-1} \Gamma_j^{(i_j)}}$
($\mat{\Lambda_0} := 1$) in the left-normalized case \myref{eq:MPS2normalizationConditionsFromLeft}.
Analogously, in the right normalized case \myref{eq:MPS2normalizationConditionsFromRight}
we would define $\mat{A_j^{(i_j)}}=\mat{\Gamma_j^{(i_j)} \Lambda_{j}}$, where $\mat{\Lambda_p} := 1$.
\item The $\mat{\Lambda \Gamma}$-representation \myref{eq:mpsRepresentationLambdaGamma}
corresponds to the Schmidt decomposition,
which is well-known in QI.
The Schmidt coefficients are just the diagonal entries of $\mat{\Lambda_j}$ (\cite{SchollDMRG2011}).
\item The diagonal matrices $\mat{\Lambda_j}$ contain the singular values of special matricizations
of the vector to be represented. Hence, local matrices $\mat{A_j}$ reflect global information on the tensor
via the normalization conditions and the diagonal matrices $\mat{\Lambda_j}$.
That is one of the reasons why \mps has proper approximation properties (\cite{Verstraete06}).
\end{enumerate}
\end{remark}

\subsubsection{Further Normal Forms}
Finally we propose own findings of concepts to introduce possible normal forms for \mps.

As an alternative to construct the gauge conditions \myref{eq:mpsGaugeLeft} or \myref{eq:mpsGaugeRight}
we propose (compare \cite{Huckle11Computations}) to consider two neighboring pairs
(compare two-site \DMRG \cite{SchollDMRG2011})
\begin{eqnarray}\nonumber
 \left(
  \begin{array}{c}
        \mat{A_j^{(0)}} \\
        \mat{A_j^{(1)}} \\
  \end{array}
 \right)  \cdot
 \left(
  \begin{array}{cc}
        \mat{A_{j+1}^{(0)}} \ & \ \mat{A_{j+1}^{(1)}} \\
  \end{array}
 \right) & = &
\left(
  \begin{array}{cc}
        \mat{A_j^{(0)}A_{j+1}^{(0)}} \ & \ \mat{A_j^{(0)}A_{j+1}^{(1)}} \\
        \mat{A_j^{(1)}A_{j+1}^{(0)}} \ & \ \mat{A_j^{(1)}A_{j+1}^{(1)}} \\
  \end{array}
 \right) \stackrel{\rm SVD}{=}  \\ \label{eq:dmrgSVDLeft}
 \left(
  \begin{array}{c}
        \mat{U_j^{(0)}} \\
        \mat{U_j^{(1)}} \\
  \end{array}
 \right)  \mat{\Lambda_j}
 \left(
  \begin{array}{cc}
        \mat{U_{j+1}^{(0)}} \ & \ \mat{U_{j+1}^{(1)}} \\
  \end{array}
 \right)
  & = & \left(
  \begin{array}{c}
        \mat{U_j^{(0)}} \\
        \mat{U_j^{(1)}} \\
  \end{array}
 \right) \left(
           \begin{array}{cc}
             \mat{\Lambda_j U_{j+1}^{(0)}} \ & \ \mat{\Lambda_j} \mat{U_{j+1}^{(1)}} \\
           \end{array}
         \right)\\ \label{eq:dmrgSVDRight}
    & = & \left(
  \begin{array}{c}
        \mat{U_j^{(0)} \Lambda_j} \\
        \mat{U_j^{(1)} \Lambda_j} \\
  \end{array}
 \right) \left(
           \begin{array}{cc}
             \mat{U_{j+1}^{(0)}} \ & \ \mat{U_{j+1}^{(1)}} \\
           \end{array}
         \right)
     \; .
 \end{eqnarray}
In this way all matrix pairs $(\mat{A_j^{(0)}},\mat{A_j^{(1)}})$ (up to one in the periodic boundary case)
can be assumed as part of a unitary
matrix giving the normalization conditions \myref{eq:mpsGaugeLeft}
in the left-normalized case \myref{eq:dmrgSVDLeft} or \myref{eq:mpsGaugeRight}
in the right-normalized case \myref{eq:dmrgSVDRight}.

To circumvent the fact that the gauge conditions \myref{eq:mpsGaugeLeft} or \myref{eq:mpsGaugeRight}
still introduce some ambiguity
we propose the following way to derive a stronger normalization.
Suppose that the \mps matrices are already in the left-normalized form
$$ \Bigl( \mat{A_j^{(0)}} \Bigr)^{\herm} \mat{A_j^{(0)}} +
\Bigl( \mat{A_j^{(1)}} \Bigr)^{\herm} \mat{A_j^{(1)} } = \mat{I} \qquad \text{for} \ j=1,\dots,p \; .$$
The proposed normal form is now based on the SVD of the
upper matrices \mbox{$\mat{A_j^{(0)}}=\mat{U_j\Sigma_jV_j}$} with unitary $\mat{U_j,V_j}$ ($\mat{V_0}:=1$) and
diagonal non-negative $\mat{\Sigma_j}$, diagonal entries ordered relative to absolute value.
Then, every pair $\bigl(\mat{A_j^{(0)}},\mat{A_j^{(1)}}\bigr)$ is replaced by
\begin{equation}\label{eq:mpsConstructFurtherNormalForm}
\left(\mat{\tilde A_j^{(0)}} \; , \; \mat{\tilde A_j^{(1)}}\right) =
        \left( \mat{V_{j-1}U_j\Sigma_j} \; , \; \mat{V_{j-1}A_j^{(1)}V_j^{\herm}} \right)
\end{equation}
leading to the stronger normalization conditions
\begin{equation}\label{eq:mpsFurtherNormalForm}
\bigl(\mat{\tilde A_j^{(0)}}\bigr)^{\herm}\mat{\tilde A_j^{(0)}}+\bigl(\mat{\tilde A_j^{(1)}}\bigr)^{\herm}
\mat{\tilde A_j^{(1)}}=\mat{\Sigma_j^{\herm}}\mat{\Sigma_j}+\mat{\Delta_j^{\herm}}\mat{\Delta_j}=\mat{I}
\end{equation}
with diagonal matrices $\mat{\Sigma_j}$ and $\mat{\Delta_j}$.
From \myref{eq:mpsFurtherNormalForm} we can read that this normal form provides \mps matrices with orthogonal columns.
For the upper matrices $\mat{\tilde A_j^{(0)}}$ this fact is caused by construction,
but it then automatically follows also for the $\mat{\tilde A_j^{(1)}}$ matrices.
Especially for the left-most site $j=1$, the normalization condition~\myref{eq:mpsFurtherNormalForm}
leads to $\mat{\tilde A_1^{(0)}}=(1,0)$ and $\mat{\tilde A_1^{(1)}}=(0,1)$.
We may of course also start the proposed normalization procedure with a right-normalized form,
resulting in a representation where the \mps matrices have orthogonal rows.

\subsubsection{Comparison of the Normal Forms}
All of the presented normal forms introduce some kind of uniqueness to the \mps formalism,
which initially is not unique.
Therefore, these normal forms help to prevent redundancy in the representations.
As a consequence we may expect less memory demands as well as better properties of numerical algorithms
such as faster convergence, better approximation, and improved stability.
The normal form~\myref{eq:MPS2normalizationConditionsFromLeft}
is advantageous as it connects local and global information.
However, the construction involves the inverse of the diagonal SVD matrices which may cause numerical problems.
Our normal form \myref{eq:mpsFurtherNormalForm} can be built without division by singular values,
but the information is more local.

\subsection{Symmetries in \MPS}
The results from Section~\ref{sec:structuredMatrices} show that the matrices which describe the physical model systems
have special symmetry properties which result in symmetry properties of the related eigenvectors:
the eigenvector of a symmetric persymmetric Hamiltonian has to be symmetric or skew-symmetric,
i.e. $\mat J \vec v = \pm \vec v$.
One might also think about other symmetries which could be of the form
$$ \vec{v} = \left(
               \begin{array}{c}
                 \vec a \\
                 \vec a \\
               \end{array}
             \right) \; , \ \vec{v} = \left(
                                        \begin{array}{c}
                                          \vec a \\
                                          - \vec a \\
                                        \end{array}
                                      \right) \; ,
\quad\text{or more general}\quad \mat{P} \vec{v}=\pm \vec{v}
$$
with a general permutation $\mat{P}$. Furthermore, we can
have vectors satisfying $k$ different independent symmetry properties, e.g.
$\mat{P_j} \vec{v}=\pm \vec{v}$ for permutations $\mat{P_j}$, $j=1,...,k$.

At this point the question arises how these symmetry properties can be
expressed in terms of \MPS,
and, vice versa, how special properties such as certain relations between the \MPS matrices
emerge in the represented vector.

Symmetries in \mps already appear in different QI publications:
theoretical considerations on symmetries in \MPS can be found in \cite{PWSVC08String,SWPC09MPS},
symmetries in \TIMPS representations are exploited in \cite{Pirvu11ExploitingTI},
and the application of involutions has been analyzed in \cite{Sandvik07Variational}.
The main goal of this paragraph is to present an overview of different types of symmetries in a unifying way
and to give results concerning the uniqueness of such symmetry-adapted representation approaches
by proposing possible normal forms.
Our results are intended for a theoretical purpose (similar to \cite{PWSVC08String,SWPC09MPS})
but are also interesting for numerical applications (similar to \cite{Sandvik07Variational,Pirvu11ExploitingTI}).

After some technical considerations we discuss which properties of the matrices $\mat{A_j^{(i_j)}}$
that define an \MPS vector $\vec{x}$ are related to certain symmetry properties
of $\vec{x}$. Deriving normal forms
for different symmetries of \MPS vectors will also be of interest.

\subsubsection{Technical Remarks}
In view of the trace taken in the \MPS formalism \myref{eq:MPSComponent},
recall the following trivial yet useful properties 
{\allowdisplaybreaks
\begin{align}\label{eq:traceInvariance}
    \trace \left( \mat{A B} \right) & = \trace \left( \mat{B A} \right) \; , && \\ \label{eq:traceTranspose}
    \trace \left( \mat{A B} \right) & =  \trace \left( \mat{A B }\right)^{\T} =
	\trace \left( \mat{B}^{\T} \mat{A}^{\T} \right)
        && \text{ for } \trace \left( \mat{A} \mat{B} \right) \in \mathbb R \; , \\ \label{eq:traceHermitian}
\trace \left( \mat{A B} \right) & =  \overline{\trace \left( \mat{A B} \right)^{\herm}} =
	\overline{\trace \left( \mat{B}^{\herm} \mat{A}^{\herm} \right)}
        && \text{ for } \trace \left( \mat{A} \mat{B} \right) \in \mathbb C \;
\end{align}}
in order to arrive at relations of the form
{\allowdisplaybreaks
\begin{align*}
    \trace\left( \mat{A_1^{(i_1)}}\cdot \mat{A_2^{(i_2)}}\cdots \mat{A_p^{(i_p)}} \right) & \stackrel{\myref{eq:traceInvariance}}{=}
    \trace\left( \mat{A_{r+1}^{(i_{r+1})}} \cdots \mat{A_p^{(i_p)} A_1^{(i_1)}} \cdots \mat{A_r^{(i_r)}} \right) \\
    & \stackrel{\myref{eq:traceHermitian}}{=} \overline{
        \trace \left( \mat{A_r^{(i_r) \herm} A_{r-1}^{(i_{r-1}) \herm}} \cdots \mat{A_1^{(i_1) \herm} A_p^{(i_p) \herm}}
        \cdots \mat{A_{r+1}^{(i_{r+1}) \herm}}\right) } \; .
\end{align*}}
For the proof of the main theorems we will need the following three lemmata.
\begin{lemma} \label{lemmaTraceEqual}
Let $A,B \in \mathbb K^{n \times m}$, where $\mathbb K \in \{ \mathbb R, \mathbb C\}$.
If the equality
$$ \trace \left( \mat{A X} \right) = \trace \left( \mat{B X} \right)$$
holds for all matrices $\mat X \in \mathbb K^{m \times n}$, then $\mat{A} = \mat{B}$.
\end{lemma}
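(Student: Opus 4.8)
The plan is to prove the statement by exploiting the linearity of the trace and the freedom to choose the test matrix $\mat X$ cleverly. Since the equality $\trace(\mat{AX}) = \trace(\mat{BX})$ holds for \emph{all} $\mat X$, I would first rewrite it as $\trace\bigl( (\mat A - \mat B) \mat X \bigr) = 0$ for all $\mat X \in \mathbb K^{m \times n}$, using linearity of the trace. Setting $\mat D := \mat A - \mat B \in \mathbb K^{n \times m}$, the goal reduces to showing that $\mat D = \mat 0$ whenever $\trace(\mat{D X}) = 0$ for every $\mat X$.

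The key step is to unfold the trace into an explicit double sum over entries. Writing out $\trace(\mat{DX}) = \sum_{i=1}^{n} \sum_{k=1}^{m} d_{i,k} \, x_{k,i}$, I would then select, for each fixed pair of indices $(s,t)$ with $1 \le s \le n$ and $1 \le t \le m$, the test matrix $\mat X = \mat{E}_{t,s}$, the standard basis matrix whose only nonzero entry is a $1$ in position $(t,s)$. With this choice the double sum collapses to the single surviving term $d_{s,t}$, so the hypothesis $\trace(\mat{D E}_{t,s}) = 0$ yields immediately $d_{s,t} = 0$. Since $(s,t)$ was arbitrary, every entry of $\mat D$ vanishes, hence $\mat D = \mat 0$ and therefore $\mat A = \mat B$.

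This argument is elementary and I do not expect any genuine obstacle; the only point requiring a little care is the index bookkeeping in the product $\mat{D X}$, namely verifying that $(\mat{D X})_{i,i} = \sum_k d_{i,k} x_{k,i}$ and that picking $\mat X = \mat E_{t,s}$ indeed isolates $d_{s,t}$ rather than $d_{t,s}$ (the transpose of indices between $\mat D$ and $\mat X$ must be tracked correctly). An alternative, coordinate-free framing would observe that $(\mat A, \mat B) \mapsto \trace(\mat{A X})$ realizes the standard nondegenerate bilinear pairing $(\mat D, \mat X) \mapsto \trace(\mat{D X})$ on $\mathbb K^{n \times m} \times \mathbb K^{m \times n}$, so that the annihilator of the whole space is trivial; but the explicit basis-matrix computation is the most transparent and self-contained route, and it works uniformly for both $\mathbb K = \mathbb R$ and $\mathbb K = \mathbb C$ without any conjugation subtleties, since no Hermitian transpose appears in the hypothesis.
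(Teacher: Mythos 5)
Your proof is correct, but it takes a genuinely different route from the paper's. Both arguments start the same way, reducing the claim to showing that $\trace\bigl( (\mat{A}-\mat{B}) \mat{X} \bigr) = 0$ for all $\mat{X}$ forces $\mat{A} - \mat{B} = \mat{0}$. From there the paper makes a single clever choice of test matrix, $\mat{X} = (\mat{A}-\mat{B})^{\herm}$, which gives $\trace\bigl( (\mat{A}-\mat{B})(\mat{A}-\mat{B})^{\herm} \bigr) = \| \mat{A} - \mat{B} \|_{\operatorname F}^2 = 0$ and concludes by positive definiteness of the Frobenius norm. You instead test against the $nm$ standard basis matrices $\mat{E}_{t,s}$, each of which isolates a single entry $d_{s,t}$ of the difference (and your index bookkeeping is right: $\trace(\mat{D}\mat{E}_{t,s}) = \sum_{i,k} d_{i,k}\,\delta_{k,t}\delta_{i,s} = d_{s,t}$). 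What each buys: the paper's argument is shorter and coordinate-free, needing only one test matrix and no index manipulation; yours is more elementary and strictly more general, since the basis-matrix argument works over an arbitrary field, whereas the Frobenius-norm trick relies on $\mathbb K \in \{\mathbb R, \mathbb C\}$ (a vanishing sum of squared moduli must force every term to vanish) and uses the complex conjugation that your proof deliberately avoids. Your closing observation about the nondegeneracy of the trace pairing is exactly the abstract content both proofs establish.
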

\begin{proof}
The relation $\trace \left( \mat{A X} \right) = \trace \left( \mat{B X} \right)$ is equivalent to
$$ \trace \left( (\mat{A} - \mat{B}) \mat{X} \right) = 0 $$
for all matrices $\mat{X}$.
For the special choice $\mat{X} = (\mat{A}-\mat{B})^{\herm}$ we obtain
$$ \trace \left( (\mat{A} - \mat{B}) (\mat{A}-\mat{B})^{\herm} \right) = \| \mat{A} - \mat{B} \|_{\operatorname F}^2 = 0 \; ,$$
which shows $\mat{A} = \mat{B}$.
\end{proof}

\begin{lemma}\label{LemmaCommute1}
Assume that for $\mat{U} \in \mathbb K^{n \times n}$ and
$\mat{V} \in \mathbb K^{m \times m}$ it holds
\begin{equation}\label{eq:ConditionLemmaCommute1}
\mat{X}  = \mat{V X U}
\end{equation}
for all matrices $\mat{X} \in \mathbb K^{m \times n}$. Then $\mat{U} = c \mat{I_n}$, $\mat{V}=\mat{I_m}/c$
with some $c\neq 0$.
\end{lemma}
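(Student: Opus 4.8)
The plan is to exploit the fact that the relation $\mat{X} = \mat{V X U}$ must hold for \emph{every} matrix $\mat{X}$, which lets me probe the unknowns $\mat{U}$ and $\mat{V}$ by plugging in specially chosen $\mat{X}$. First I would rewrite the hypothesis \myref{eq:ConditionLemmaCommute1} as $\mat{V X U} = \mat{X}$ and test it on the rank-one matrices $\mat{X} = \vec{e_k}\vec{e_l}^{\T}$, where $\vec{e_k} \in \mathbb K^m$ and $\vec{e_l} \in \mathbb K^n$ are standard basis vectors. Since these rank-one matrices span $\mathbb K^{m \times n}$, verifying the identity on them is equivalent to the full hypothesis, and each such choice extracts a clean relation between a column of $\mat{V}$ and a row of $\mat{U}$.

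The key computational step is to observe that $\mat{V}(\vec{e_k}\vec{e_l}^{\T})\mat{U} = (\mat{V}\vec{e_k})(\mat{U}^{\T}\vec{e_l})^{\T}$, i.e. the outer product of the $k$-th column of $\mat{V}$ with the $l$-th row of $\mat{U}$. Setting this equal to $\vec{e_k}\vec{e_l}^{\T}$ forces the $k$-th column of $\mat{V}$ to be proportional to $\vec{e_k}$ and the $l$-th row of $\mat{U}$ to be proportional to $\vec{e_l}$. Running over all $k$ and $l$ shows that both $\mat{U}$ and $\mat{V}$ are diagonal. Writing $\mat{U} = \diag(u_1,\dots,u_n)$ and $\mat{V} = \diag(v_1,\dots,v_m)$ and feeding these back into $\mat{V X U} = \mat{X}$ componentwise gives $v_i u_j X_{i,j} = X_{i,j}$ for all $i,j$; choosing $\mat{X}$ with a nonzero $(i,j)$ entry yields $v_i u_j = 1$ for every admissible pair $(i,j)$.

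The final step is to extract the scalar constant from the system $v_i u_j = 1$. Since this holds for all $i \in \{1,\dots,m\}$ and all $j \in \{1,\dots,n\}$, fixing $j$ shows all $v_i$ are equal, say to a common value, and fixing $i$ shows all $u_j$ are equal; moreover no $u_j$ or $v_i$ can vanish. Hence $\mat{U} = c\,\mat{I_n}$ for some $c \neq 0$ and $\mat{V} = c^{-1}\mat{I_m}$, which is exactly the claim $\mat{U} = c\mat{I_n}$, $\mat{V} = \mat{I_m}/c$.

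I do not anticipate a genuine obstacle here; the lemma is elementary once one commits to testing the identity on rank-one matrices. The one point requiring a little care is that $\mat{U}$ and $\mat{V}$ are \emph{a priori} of different sizes ($n \times n$ versus $m \times m$), so I would keep the index ranges distinct throughout and only at the end conclude that the common diagonal values on both sides are forced to be reciprocal scalars. A subtle implicit requirement is that $\mat{X}$ range over matrices whose transpose conformably multiplies, which is automatic here since $\vec{e_k}\vec{e_l}^{\T}$ indeed lies in $\mathbb K^{m \times n}$ as required by the hypothesis.
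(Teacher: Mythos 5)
Your proof is correct. Both you and the paper prove the lemma by probing the identity $\mat{X} = \mat{V X U}$ with rank-one test matrices, but the arguments diverge after that common starting point. The paper takes $\mat{X} = \vec a \vec b^{\herm}$ for \emph{arbitrary} vectors: it first shows $\mat U$ and $\mat V$ must be nonsingular, rewrites the condition as $(\mat{V}^{-1}\vec a)\vec b^{\herm} = \vec a(\vec b^{\herm}\mat U)$, concludes that $\mat{V}^{-1}\vec a$ is collinear with $\vec a$ for every $\vec a$ (and likewise $\vec b^{\herm}\mat U$ with $\vec b^{\herm}$), and then invokes the classical fact that a matrix having every vector as an eigenvector is a scalar multiple of the identity. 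You instead test only the standard basis matrices $\vec{e_k}\vec{e_l}^{\T}$, which immediately forces $\mat U$ and $\mat V$ to be diagonal, and then the scalar system $v_i u_j = 1$ pins both diagonals to constant, mutually reciprocal values. Your route is more elementary and self-contained: the nonvanishing of the diagonal entries falls out of $v_i u_j = 1$, so no separate invertibility argument is needed, and you effectively inline the ``every vector is an eigenvector implies scalar'' lemma rather than cite it. You also make explicit that, by linearity in $\mat X$, the hypothesis need only hold on a basis of $\mathbb K^{m \times n}$ -- a mild but genuine strengthening of the statement. What the paper's coordinate-free variant buys is chiefly conceptual economy: the collinearity-of-images formulation avoids componentwise bookkeeping and exposes the eigenvector structure directly, which is the lens the authors reuse when handling the singular cases in Lemma~\ref{LemmaCommute2}.
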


\begin{proof}
Obviously, $\mat{V}$ and $\mat{U}$ have to be non-zero and, moreover,
they are regular.
Otherwise, if e.g.
$\mat{V} \vec a=\vec 0$ for $\vec a \neq \vec 0$,
we can define $\mat{X}=\vec a \vec b^{\herm}$ with some
$\vec b \neq \vec 0$ leading to a contradiction.
Choosing $\mat{X}= \vec a \vec b^{\herm}$ as rank-one matrix for any vectors $\vec a$ and $\vec b$,
it follows
$(\mat{V}^{-1}\vec a) \vec b^{\herm} = \vec a( \vec b^{\herm}\mat{U})$.
Therefore, $\mat{V}^{-1} \vec a$ and $\vec a$
have to be collinear ($\mat{V}^{-1} \vec a = \lambda \vec a$ with some $\lambda \in \mathbb K$),
and $\vec b^{\herm}\mat{U}$ and $\vec b^{\herm}$ also have to
be collinear ($\vec b^{\herm} \mat{U} = \mu \vec b^{\herm}$).
Hence, $\mat U$ and $\mat V^{-1}$ (and therefore also $\mat{V}$) have all vectors of appropriate size as eigenvectors,
and therefore they are nonzero multiples of the identity matrix, $\mat{U} = c_1 \mat{I_n}$ and $\mat{V} = c_2 \mat{I_m}$.
Condition \myref{eq:ConditionLemmaCommute1} finally shows $c = c_1 = 1/c_2$.
\end{proof}

Similarly, we can derive the following result:

\begin{lemma}\label{LemmaCommute2}
Assume that for $\mat{U} \in \mathbb K^{n \times n}$ and
$\mat{V} \in \mathbb K^{m \times m}$ it holds
$$ \mat{X U} = \mat{V X} $$
for all matrices $\mat{X} \in \mathbb K^{m \times n}$.
Then $\mat{U} = c \mat{I_n}$, $\mat{V}=c \mat{I_m}$ with a scalar $c \in \mathbb K$.
\end{lemma}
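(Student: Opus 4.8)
The plan is to imitate the proof of Lemma~\ref{LemmaCommute1} by testing the identity $\mat{XU}=\mat{VX}$ against rank-one matrices. First I would set $\mat{X}=\vec a\,\vec b^{\herm}$ for arbitrary $\vec a\in\mathbb K^{m}$ and $\vec b\in\mathbb K^{n}$, which turns the hypothesis into
\begin{equation*}
\vec a\,(\vec b^{\herm}\mat U)=(\mat V\vec a)\,\vec b^{\herm}\;.
\end{equation*}
Both sides are matrices of rank at most one, so the relevant information can be read off by comparing their column spaces.

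Second, I would show that $\mat V\vec a$ is a multiple of $\vec a$ for every $\vec a\neq\vec 0$. If $\mat V\vec a=\vec 0$ this is immediate; otherwise, choosing any $\vec b\neq\vec 0$ makes the right-hand side a nonzero rank-one matrix whose column space is spanned by $\mat V\vec a$, while the left-hand side has column space spanned by $\vec a$. Equality then forces collinearity, $\mat V\vec a=\lambda\vec a$ (and incidentally $\vec b^{\herm}\mat U\neq\vec 0$). Hence every vector is an eigenvector of $\mat V$, and, exactly as argued in Lemma~\ref{LemmaCommute1}, this forces $\mat V=c\,\mat{I_m}$ for a single scalar $c\in\mathbb K$.

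Third, I would substitute $\mat V=c\,\mat{I_m}$ back into the hypothesis to obtain $\mat{XU}=c\,\mat X=\mat X\,(c\,\mat{I_n})$, that is $\mat X(\mat U-c\,\mat{I_n})=\mat 0$ for all $\mat X\in\mathbb K^{m\times n}$. Testing this against the matrices having a single nonzero row then yields $\mat U=c\,\mat{I_n}$, finishing the proof with the same scalar $c$ appearing on both sides.

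The step needing the most care is the second one, namely the bookkeeping of the degenerate rank-one cases. In contrast to Lemma~\ref{LemmaCommute1}, here neither $\mat U$ nor $\mat V$ need be invertible --- the choice $c=0$, i.e.\ $\mat U=\mat V=\mat 0$, is a legitimate solution --- so the argument must never divide by $c$. Separating the subcase $\mat V\vec a=\vec 0$ (trivial) from $\mat V\vec a\neq\vec 0$ (a genuine rank-one comparison) keeps the logic clean and simultaneously explains why $c=0$ is admissible here whereas it had to be excluded in Lemma~\ref{LemmaCommute1}.
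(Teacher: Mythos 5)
Your proof is correct, but it takes a different route than the paper. The paper's proof is a two-case reduction: it first shows that if either $\mat U$ or $\mat V$ is singular then both must be zero (via a rank-one test matrix $\mat X=\vec a\,\vec b^{\herm}$ with $\mat V\vec a=\vec 0$ and $\vec b^{\herm}\mat U\neq \vec 0^{\herm}$, which contradicts $\mat{XU}=\mat{VX}$), and then, when both are regular, rewrites the hypothesis as $\mat X=\mat{V}^{\inv}\mat X\mat U$ and invokes Lemma~\ref{LemmaCommute1} as a black box. You instead re-run the rank-one collinearity machinery from the proof of Lemma~\ref{LemmaCommute1} directly on $\vec a\,(\vec b^{\herm}\mat U)=(\mat V\vec a)\,\vec b^{\herm}$, conclude that every nonzero vector is an eigenvector of $\mat V$ (so $\mat V=c\,\mat{I_m}$), and then recover $\mat U=c\,\mat{I_n}$ by substituting back and testing against single-row matrices. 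What your route buys is uniformity and self-containedness: no case distinction on invertibility, and the degenerate solution $c=0$ is absorbed into the same argument rather than treated separately --- your explicit handling of the subcase $\mat V\vec a=\vec 0$, and the observation that $\vec b^{\herm}\mat U\neq\vec 0$ is forced whenever $\mat V\vec a\neq\vec 0$, is exactly what makes this work. What the paper's route buys is brevity given that Lemma~\ref{LemmaCommute1} is already proved: only the singular case requires a new two-line argument. Both proofs are complete and valid.
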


\begin{proof}
First we prove that, if at least one of the two matrices $\mat U$ or $\mat V$ is singular,
both of them have to be zero.
Obviously, if one of the two matrices is zero, the other one has to be zero as well.
If we now suppose $\mat V$ to be singular and $\mat U$ to be nonzero,
we can find vectors $\vec a \not= \vec{0}$ and $\vec b$, such that $\mat V \vec a = \vec{0}$ and
$\vec b^{\herm} \mat U \not= \vec 0^{\herm}$.
The choice $\mat X = \vec a \vec b^{\herm} \not= \mat 0$ leads to a contradiction.
The same argument counts if we change the roles of $\mat U$ and $\mat V$.

Otherwise, if both matrices are regular, the statement of the lemma is a direct consequence
from Lemma~\ref{LemmaCommute1}.
\end{proof}

%

\subsubsection{Bit-Shift Symmetry and Translational Invariance}
To begin, consider the case where all matrix pairs are equal,
i.e.
\begin{equation}\label{eq:defTIMPS}
 \left(
     \begin{array}{c}
       \mat{A_j^{(0)}} \\
       \mat{A_j^{(1)}} \\
     \end{array}
   \right) = \left(
               \begin{array}{c}
                 \mat{A^{(0)}} \\
                 \mat{A^{(1)}} \\
               \end{array}
             \right)
\end{equation}
for all $j=1,\dots,p$.
Then the \MPS is site-independent and describes a \defini{translational invariant} (TI) state
on a spin system with periodic boundary conditions \cite{PerezGarcia07MPS}.
The following theorem states that the result of such a relation is a \defini{bit-shift symmetry}, i.e.
$$ x_{i_1,i_2,\dots,i_p} = x_{i_2,i_3,\dots,i_p,i_1}  = \dots = x_{i_p,i_1,i_2,\cdots, i_{p-1}} \; .$$
\begin{theorem}[(\cite{PerezGarcia07MPS}) ]
If the \mps matrices are site-independent (and thus fulfill Eqn.~\ref{eq:defTIMPS})
the represented vector has the bit-shift symmetry and in turn every vector with the bit-shift symmetry
can be represented by a site-independent \mps.
\end{theorem}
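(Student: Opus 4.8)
The plan is to prove the two implications separately: the forward direction is immediate from the cyclic invariance of the trace, while the converse needs an explicit construction. For the forward direction, assume \myref{eq:defTIMPS}, so that by \myref{eq:MPSComponent} every component equals $x_{i_1,\dots,i_p} = \trace\bigl(\mat{A^{(i_1)}}\mat{A^{(i_2)}}\cdots \mat{A^{(i_p)}}\bigr)$ with a single site-independent pair. Applying the cyclic property \myref{eq:traceInvariance} once moves the leftmost factor to the right,
\[
\trace\bigl(\mat{A^{(i_1)}}\mat{A^{(i_2)}}\cdots \mat{A^{(i_p)}}\bigr)
= \trace\bigl(\mat{A^{(i_2)}}\cdots \mat{A^{(i_p)}}\mat{A^{(i_1)}}\bigr)
= x_{i_2,\dots,i_p,i_1}\; ,
\]
and iterating yields the full chain of equalities defining the bit-shift symmetry.

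For the converse I would argue constructively. The key observation is that bit-shift symmetry says precisely that $\vec x$ is constant on the orbits of the cyclic left-shift $\sigma$ acting on index strings $s=(i_1,\dots,i_p)\in\{0,1\}^p$; hence $\vec x=\sum_O x_O\,\mathbf{1}_O$, where the sum is over cyclic orbits $O$, $x_O$ is the common value on $O$, and $\mathbf{1}_O$ is the (manifestly bit-shift-symmetric) indicator of $O$. Site-independent \mps are closed under direct sums: placing two site-independent pairs block-diagonally produces again a site-independent pair, and since the trace of a block-diagonal product is the sum of the block traces, the represented vector is the sum of the two pieces. It therefore suffices to realize each scaled indicator $x_O\,\mathbf{1}_O$ by a site-independent \mps and then stack the blocks.

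To realize a single orbit $O$, I would build a small deterministic shift automaton on a $p$-dimensional bond space whose basis vectors are identified with the $p$ positions around one unrolled copy of the cycle. Fixing a representative string of $O$, the pair $\bigl(\mat{A^{(0)}},\mat{A^{(1)}}\bigr)$ is designed so that reading the bits $i_1,\dots,i_p$ advances the bond state one step around the cycle exactly when the read bit agrees with the corresponding bit of a cyclic rotation of the representative, and maps it to $0$ otherwise, with one distinguished edge of the cycle carrying the scalar weight $x_O$. Closing the product by the trace then forces a full traversal of the cycle, so that $\trace\bigl(\mat{A^{(i_1)}}\cdots \mat{A^{(i_p)}}\bigr)=x_O$ precisely on the strings of $O$ and vanishes on all others.

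The main obstacle will be the bookkeeping in this converse step, and in particular two subtleties: fixing the orientation of the shift so that the automaton detects the orbit of the input string itself rather than its reversal, and handling strings whose period properly divides $p$, i.e.\ orbits of size smaller than $p$. Both are resolved by taking the \emph{unrolled} length-$p$ cycle as bond space (a position counter) instead of the orbit itself: this keeps every block of uniform size $p$ and lets the single weighted edge insert the factor $x_O$ exactly once per traversal, avoiding any $p$-th roots. Verifying that the resulting block-diagonal pair reproduces $\vec x$ componentwise is then a routine check using \myref{eq:traceInvariance} and the block structure of the trace.
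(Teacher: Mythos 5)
Your forward direction is exactly the paper's: site-independence plus the cyclic invariance \myref{eq:traceInvariance} of the trace. For the converse you take a genuinely different route. The paper starts from an arbitrary \mps representation with matrices $\mat{B_j^{(i_j)}}$ (which exists by Lemma~\ref{lemma:ExistenceUniqueMPSGauge}) and wraps it into the single block-cyclic pair \myref{eq:BitShiftSymmetryConstructMPS}; the trace of the resulting product is $\tfrac{1}{p}$ times the sum of the $p$ cyclically shifted components of $\vec{x}$, all of which equal $x_{i_1,\dots,i_p}$ by the assumed symmetry. Your orbit decomposition $\vec{x}=\sum_O x_O\,\mathbf{1}_O$ and the direct-sum (block-diagonal) closure of site-independent \mps are both correct, so your strategy reduces correctly to realizing a single scaled orbit indicator $x_O\,\mathbf{1}_O$ by a site-independent pair.

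It is in that last step that there is a genuine error, located precisely at the subtlety you flagged; the unrolled cycle does not resolve it. Write $s=(s_1,\dots,s_p)$ for the representative and $d=|O|$ for its primitive period, so $d$ divides $p$. In your automaton the trace is a sum over all $p$ starting bond positions $k$ of the weight of the closed length-$p$ path starting at $k$; that path is nonzero exactly when $(i_1,\dots,i_p)$ coincides with the rotation of $s$ beginning at position $k$, and in that case it crosses the distinguished edge exactly once, contributing $x_O$. When $d<p$ there are $p/d$ such starting positions for every string in $O$, so the trace equals $(p/d)\,x_O$, not $x_O$. The smallest counterexample is $p=2$ with $O=\{(0,0)\}$: your construction gives $\mat{A^{(0)}}=\pmat{0 & 1\\ x_O & 0}$ and $\mat{A^{(1)}}=\mat{0}$, hence $\trace\bigl(\mat{A^{(0)}}\mat{A^{(0)}}\bigr)=2x_O$. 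Since the all-zeros and all-ones strings always form orbits of size one, this is not a marginal case: every vector with $x_{0,\dots,0}\neq 0$ is misrepresented. What the unrolled length-$p$ bond space buys is only what you say about roots (a bond space of size $d$ would force the weight $x_O^{d/p}$, since the weighted edge would then be crossed $p/d$ times); it does not remove the multiplicity $p/d$ coming from the sum over starting positions. The repair is a one-line change: give the distinguished edge the weight $\tfrac{d}{p}\,x_O$ instead of $x_O$. With that correction each block represents $x_O\,\mathbf{1}_O$ exactly and your proof goes through, though at a price: your bond dimension is $p$ times the number of orbits (generically of order $2^p$), whereas the paper's construction only inflates the bond dimension of a given \mps representation by the factor $p$.
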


\begin{proof}
To see that a \TIMPS~\myref{eq:defTIMPS} leads to a bit-shift symmetry, consider
\begin{eqnarray*}
x_{i_1,i_2,\dots,i_p} & = & \trace \left( \mat{A_1^{(i_1)} A_2^{(i_2)}} \cdots \mat{A_p^{(i_p)}} \right) \\
        & \stackrel{\myref{eq:traceInvariance}}{=} & \trace \left( \mat{A^{(i_2)} A^{(i_3)}}
            \cdots \mat{A^{(i_p)} A^{(i_1)}} \right) = x_{i_2,i_3,\dots,i_p,i_1} \\
        & \stackrel{\myref{eq:traceInvariance}}{=} & \trace
        \left( \mat{A^{(i_3)} A^{(i_4)}} \cdots \mat{A^{(i_p)} A^{(i_1)} A^{(i_2)}} \right) = x_{i_3,\dots,i_p,i_1,i_2} \\
        & = & \cdots \\
        & \stackrel{\myref{eq:traceInvariance}}{=} & \trace \left( \mat{A^{(i_p)} A^{(i_1)}} \cdots \mat{A^{(i_{p-1})}} \right)
                                                        = x_{i_p,i_1,i_2,\cdots, i_{p-1}} \; . \\
\end{eqnarray*}

Let us now suppose that the vector $\vec x$ has the bit-shift symmetry and let
$$ x_{i_1,i_2,\dots,i_p} = \trace \left( \mat{ B_1^{(i_1)} B_2^{(i_2)} \cdots B_p^{(i_p)} } \right)$$
be any \mps representation for $\vec x$.
Then the construction
\begin{equation}\label{eq:BitShiftSymmetryConstructMPS}
\mat{A^{(i_j)}} = \frac{1}{\sqrt[p]{p}}
\pmat{ \mat{0} & \mat{B_1^{(i_j)}} & \mat{0} &  &  \\
               & \mat{0} & \mat{B_2^{(i_j)}} & \ddots & \\
               &         &      \ddots       & \ddots & \mat{0}\\
               &         &                   &  \mat{0} & \mat{B_{p-1}^{(i_j)}} \\
\mat{B_p^{(i_j)}} &      &                   &          & \mat{0}
}
\end{equation}
leads to a site-independent representation of $\vec x$.
\end{proof}
\begin{remark}
The construction~\myref{eq:BitShiftSymmetryConstructMPS}
introduces an augmentation of the matrix size by the factor $p$.
\end{remark}

The bit-shift symmetry can also be generalized to block-shift symmetry.
Assume that a block of $r$ \MPS matrix pairs is repeated, i.e.
$$ \left( \mat{A_1^{(i_1)} A_2^{(i_2)}} \cdots \mat{A_r^{(i_r)}} \right)
\left( \mat{A_1^{(i_{r+1})} A_2^{(i_{r+2})}} \cdots \mat{A_r^{(i_{2r})}} \right)
\cdots \left( \mat{A_1^{(i_{p-r+1})} A_2^{(i_{p-r+2})}} \cdots \mat{A_r^{(i_{p})}} \right) \; $$
to obtain symmetries of the form
$$x_{i_1,\dots,i_r;i_{r+1},\dots,i_{2r}; \dots ; i_{p-r+1},\dots,i_p} =
x_{i_{r+1},\dots,i_{2r};\dots; i_{p-r+1},\dots,i_p;i_1,\dots,i_r} \; .$$

\subsubsection*{Normal Form for the Bit-Shift Symmetry}
In the above periodic TI ansatz \myref{eq:defTIMPS} we can
replace each $\mat A$ by $\mat{MA}\mat M^{-1}$ with a nonsingular $\mat M$ resulting in the same
vector $\vec{x}$. Using the Schur normal form $\mat{A^{(0)}}=\mat Q^{\herm}\mat{R^{(0)}Q}$ or the
Jordan canonical form $\mat{A^{(0)}}=\mat S^{-1}\mat{J_{A}^{(0)}S}$
we propose to normalize the
\MPS form by replacing the matrix pair $(\mat{A^{(0)}},\mat{A^{(1)}})$ by
$$
\bigl(\mat{\tilde A^{(0)}},\mat{\tilde A^{(1)}}\bigr)=\bigl(\mat{R^{(0)}},\mat{QA^{(1)}} \mat Q^{\herm}\bigr)
    \quad \text{or} \quad
    \bigl(\mat{\tilde A^{(0)}},\mat{\tilde A^{(1)}}\bigr)=\bigl(\mat{ J_A^{(0)}},\mat{S} \mat{A^{(1)}} \mat S^{-1}\bigr)
$$
resulting in a more
compact representation of $\vec{x}$ with less free parameters. For Hermitian
$\mat{A^{(0)}}$ and $\mat{A^{(1)}}$ the eigenvalue decomposition of $\mat{A^{(0)}}=\mat Q^{\herm}\mat{D^{(0)}Q}$
with diagonal matrix $\mat{D^{(0)}}$ can be used in the same way leading to
the normal form
$$(\mat{\tilde A^{(0)}},\mat{\tilde A^{(1)}})=(\mat{D^{(0)}},\mat{Q} \mat{A^{(1)}} \mat Q^{\herm})$$
with $\mat{\tilde A^{(0)}}$ as real diagonal matrix and $\mat{\tilde A^{(1)}}$ as Hermitian matrix.

\subsubsection{Reverse Symmetry}
In this subsection we consider the \defini{reverse symmetry}
\begin{equation}\label{eq:reverseSymmetry}
x_{i_1,\dots,i_p} = \bar x_{i_p,\dots,i_1} \; .
\end{equation}
The following theorem shows a direct connection between the reverse symmetry and
an \MPS representation with the special symmetry relations
\begin{equation}\label{eq:MatricesReverseSymmetry}
\bigl( \mat{A_j^{(i_j)}} \bigr)^{\herm} = \mat{S_{p-j}^{\inv}} \mat{A_{p+1-j}^{(i_j)} S_{p+1-j}} \quad \text{ for all } j=1,\dots,p
\end{equation}
with regular matrices $\mat{S_j}$ of appropriate size,
which additionally fulfill the consistency conditions
\begin{equation}\label{eq:reverseSymmetryRelationsInS}
\mat{S_0} = \mat{S_p} \qquad \text{and} \qquad \mat{S_j^{\herm}} = \mat{S_{p-j}} \ \text{ for } j=1,\dots,p\; .
\end{equation}

\begin{theorem}\label{thm:reverseSymmetryRelations}
If the \MPS matrices fulfill the symmetry relations \myref{eq:MatricesReverseSymmetry},
the vector to be represented has the reverse symmetry property \myref{eq:reverseSymmetry}.
Vice versa, for any vector $\vec x$ fulfilling the reverse symmetry,
we may state an \MPS representation for $\vec x$ fulfilling the relations~\myref{eq:MatricesReverseSymmetry}.
\end{theorem}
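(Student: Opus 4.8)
The plan is to prove the two implications separately: the forward one by a direct telescoping computation, and the converse by an explicit doubling construction.

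For the forward direction I would start from the right-hand side of \myref{eq:reverseSymmetry}. Writing $\bar x_{i_p,\dots,i_1}$ as the conjugate of a trace and applying the conjugate--transpose identity \myref{eq:traceHermitian} reverses the order of the factors and replaces each $\mat{A_k^{(\cdot)}}$ by its Hermitian conjugate, giving
\[
\bar x_{i_p,\dots,i_1}
= \trace\left(\bigl(\mat{A_p^{(i_1)}}\bigr)^{\herm}\bigl(\mat{A_{p-1}^{(i_2)}}\bigr)^{\herm}\cdots\bigl(\mat{A_1^{(i_p)}}\bigr)^{\herm}\right).
\]
The $m$-th factor here is $\bigl(\mat{A_{p+1-m}^{(i_m)}}\bigr)^{\herm}$, so I would substitute \myref{eq:MatricesReverseSymmetry} with $j=p+1-m$, which reads $\bigl(\mat{A_{p+1-m}^{(i_m)}}\bigr)^{\herm}=\mat{S_{m-1}^{\inv}}\mat{A_{m}^{(i_m)}}\mat{S_{m}}$. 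Inserting this makes the inner factors $\mat{S_m}\mat{S_m^{\inv}}$ telescope, leaving $\mat{S_0^{\inv}}\mat{A_1^{(i_1)}}\cdots\mat{A_p^{(i_p)}}\mat{S_p}$; taking the trace, using cyclicity \myref{eq:traceInvariance} together with the consistency condition $\mat{S_0}=\mat{S_p}$ from \myref{eq:reverseSymmetryRelationsInS} collapses the boundary factors to the identity and recovers $x_{i_1,\dots,i_p}$ as in \myref{eq:MPSComponent}.

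For the converse I would not try to gauge a given representation into the required form directly, since an arbitrary representation need not be related to its reverse by a site-wise similarity; instead I would use a doubling trick. Starting from any representation $x_{i_1,\dots,i_p}=\trace\bigl(\mat{B_1^{(i_1)}}\cdots\mat{B_p^{(i_p)}}\bigr)$, define the reversed--conjugated matrices $\mat{C_j^{(s)}}:=\bigl(\mat{B_{p+1-j}^{(s)}}\bigr)^{\herm}$; the identity \myref{eq:traceHermitian} together with the reverse symmetry \myref{eq:reverseSymmetry} shows that $\trace\bigl(\mat{C_1^{(i_1)}}\cdots\mat{C_p^{(i_p)}}\bigr)=x_{i_1,\dots,i_p}$ as well. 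I then set
\[
\mat{A_j^{(s)}}:=\frac{1}{\sqrt[p]{2}}\,\diag\!\left(\mat{B_j^{(s)}},\,\mat{C_j^{(s)}}\right),
\]
so that the block-diagonal product has trace $\tfrac12(x+x)=x$ and the new family again represents $\vec x$.

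It remains to exhibit the matrices $\mat{S_j}$. Conjugating the block-diagonal $\mat{A_j^{(s)}}$ and using $\bigl(\mat{C_j^{(s)}}\bigr)^{\herm}=\mat{B_{p+1-j}^{(s)}}$ and $\bigl(\mat{B_j^{(s)}}\bigr)^{\herm}=\mat{C_{p+1-j}^{(s)}}$ shows that $\bigl(\mat{A_j^{(s)}}\bigr)^{\herm}$ is exactly $\mat{A_{p+1-j}^{(s)}}$ with its two diagonal blocks interchanged. Hence I would take each $\mat{S_k}$ to be the real, symmetric, involutory permutation that swaps the two summands of the $k$-th bond space $D_{k+1}\oplus D_{p+1-k}$; the fact that this doubled bond space has the same dimension as the one at position $p+2-k$ guarantees the swaps are square and of matching size, so that \myref{eq:MatricesReverseSymmetry} holds with $\mat{S_{p-j}^{\inv}}$ and $\mat{S_{p+1-j}}$ the appropriate block swaps. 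The main obstacle here is purely index bookkeeping: one must check that the single family $\{\mat{S_k}\}$ serves simultaneously as the left factor for one value of $j$ and the right factor for another, and that it satisfies both consistency conditions in \myref{eq:reverseSymmetryRelationsInS}. Both follow because the swap at bond $k$ and the swap at bond $p-k$ interchange the same unordered pair of blocks $\{D_{k+1},D_{p+1-k}\}$, and because a real swap permutation equals its own conjugate transpose, giving $\mat{S_j^{\herm}}=\mat{S_j}=\mat{S_{p-j}}$ and $\mat{S_0}=\mat{S_p}$.
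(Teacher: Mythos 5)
Your proposal is correct and is essentially the paper's own proof. The forward direction is the same telescoping argument (you run it from $\bar x_{i_p,\dots,i_1}$ back to $x_{i_1,\dots,i_p}$, the paper runs it forwards), and your doubled representation $\mat{A_j^{(s)}}=\tfrac{1}{\sqrt[p]{2}}\diag\bigl(\mat{B_j^{(s)}},\mat{C_j^{(s)}}\bigr)$ with $\mat{C_j^{(s)}}=\bigl(\mat{B_{p+1-j}^{(s)}}\bigr)^{\herm}$ is literally the paper's construction \myref{eq:ReverseSymmetryConstructMPSWithRelations}, with the same swap matrices \myref{eq:proofDefinitionS_j}. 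One claim in your last paragraph needs repair: when $D_{j+1}\neq D_{p+1-j}$ the swap $\mat{S_j}=\pmat{\mat 0 & \mat{I_{D_{j+1}}}\\ \mat{I_{D_{p+1-j}}} & \mat 0}$ is neither symmetric nor an involution (as a permutation it is a cyclic shift by $D_{p+1-j}$ positions; for block sizes $1$ and $2$ it is a $3$-cycle), so the chain $\mat{S_j^{\herm}}=\mat{S_j}=\mat{S_{p-j}}$ is false in general. What you actually need, and what does hold, is the single identity $\mat{S_j^{\herm}}=\mat{S_{p-j}}$: transposing the swap with ordered block sizes $(D_{j+1},D_{p+1-j})$ yields the swap with ordered block sizes $(D_{p+1-j},D_{j+1})$, which is exactly $\mat{S_{p-j}}$, and $\mat{S_0}=\mat{S_p}$ holds because $D_1=D_{p+1}$; this is the correct version of your ``same unordered pair'' remark. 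With that fixed, your verification of \myref{eq:reverseSymmetryRelationsInS} and of the relations \myref{eq:MatricesReverseSymmetry} goes through unchanged; the only other, inessential, difference from the paper is that you do not give the OBC variant in which $\mat{A_1^{(i_1)}}$ and $\mat{A_p^{(i_p)}}$ are collapsed to a row and a column vector so that the representation stays in open-boundary form, but the theorem as stated does not require this.
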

\begin{proof}
For the vector $\vec x$ to be represented, the relations \myref{eq:MatricesReverseSymmetry}
lead to
{\allowdisplaybreaks
\begin{align*}
x_{i_1,\dots,i_p} & = \trace \bigl( \mat{A_1^{(i_1)}} \mat{A_2^{(i_2)}} \cdots \mat{A_p^{(i_p)}}   \bigr) \\
& = \overline{ \trace \left( \mat{A_1^{(i_1)}} \mat{A_2^{(i_2)}} \cdots \mat{A_p^{(i_p)}} \right)^{\herm} } \\
& = \overline{ \trace \left( \mat{A_p^{(i_p) \herm}} \mat{A_{p-1}^{(i_{p-1}) \herm }} \cdots
                \mat{A_2^{(i_2) \herm}} \mat{A_1^{(i_1) \herm}} \right) } \\
& = \overline {\trace \left( \bigl( \mat{S_p^{\inv}} \mat{A_1^{(i_p)}} \mat{S_1} \bigr)
                    \bigl( \mat{S_1^{\inv}} \mat{A_2^{(i_{p-1})}} \mat{S_2} \bigr)
                    \cdots
                    \bigl( \mat{S_{p-1}^{\inv}} \mat{A_p^{(i_1)}} \mat{S_p} \bigr)
            \right) } \\
& = \overline{ \trace \left( \mat{A_1^{(i_p)}} \mat{A_2^{(i_{p-1})}}  \cdots \mat{A_p^{(i_1)}} \right) } \\
& = \bar x_{i_p,\dots,i_1} \; ,
\end{align*}}
a reverse symmetric vector.

So far we have seen that the relations \myref{eq:MatricesReverseSymmetry} lead to the representation
of a vector having the reverse symmetry.
Contrariwise, it is possible to indicate an \mps representation fulfilling
the relations \myref{eq:MatricesReverseSymmetry} for any reverse symmetric vector.
To see this we consider any \MPS for $\vec x$:
\begin{equation*}
x_{i_1,i_2,\dots,i_p} = \trace \bigl( \mat{B_1^{(i_1)}} \mat{B_2^{(i_2)}} \cdots \mat{B_p^{(i_p)}}   \bigr)
\end{equation*}
with matrices $\mat{B_j^{(i_j)}}$ of size $D_j \times D_{j+1}$.
Such an \mps representation always exists, compare Lemma~\ref{lemma:ExistenceUniqueMPSGauge}.

Let us start with the case where this \mps representation is in PBC form.
The reverse symmetry $x_{i_1,i_2,\dots,i_p} = \bar x_{i_p,i_{p-1},\dots,i_1}$ leads to
\begin{equation}\label{eq:ReverseSymmetryConstructMPSWithRelations}
\begin{split}
x_{i_1,i_2,\dots,i_p} & = \tfrac{1}{2} \left( x_{i_1,i_2,\dots,i_p} + \bar x_{i_p,i_{p-1},\dots,i_1} \right) \\
& = \tfrac{1}{2} \left( \trace \bigl( \mat{B_1^{(i_1)}} \mat{B_2^{(i_2)}} \cdots \mat{B_p^{(i_p)}}   \bigr) +
                        \overline{ \trace \bigl( \mat{B_1^{(i_p)}} \mat{B_2^{(i_{p-1})}} \cdots \mat{B_p^{(i_1)}}   \bigr) }
                \right) \\
& = \tfrac{1}{2} \left( \trace \bigl( \mat{B_1^{(i_1)}} \mat{B_2^{(i_2)}} \cdots \mat{B_p^{(i_p)}}   \bigr) +
                        \trace \bigl( \mat{B_p^{(i_1) \herm }} \mat{ B_{p-1}^{(i_2) \herm }} \cdots \mat{B_1^{(i_p) \herm}}   \bigr)
                \right) \\
& = \tfrac{1}{2} \trace \left(
                          \begin{array}{cc}
                            \mat{B_1^{(i_1)}} \mat{B_2^{(i_2)}} \cdots \mat{B_p^{(i_p)}} & \mat 0 \\
                            \mat 0 & \mat{B_p^{(i_1) \herm }} \mat{ B_{p-1}^{(i_2) \herm }} \cdots \mat{B_1^{(i_p) \herm}}
                          \end{array}
                        \right) \\
& = \tfrac{1}{2} \trace \left( \pmat{ \mat{B_1^{(i_1)}} & \mat 0 \\ \mat 0 & \mat{B_p^{(i_1) \herm}}}
                  \pmat{ \mat{B_2^{(i_2)}} & \mat 0 \\ \mat 0 & \mat{B_{p-1}^{(i_2) \herm}}} \cdots
                  \pmat{ \mat{B_p^{(i_p)}} & \mat 0 \\ \mat 0 & \mat{B_1^{(i_p) \herm}}}
            \right) \; .
\end{split}
\end{equation}
We may now define
\begin{equation}
\mat{A_j^{(i_j)}} := \tfrac{1}{\sqrt[p]{2}} \pmat{ \mat{B_j^{(i_j)}} & \mat 0 \\ \mat 0 & \mat{B_{p+1-j}^{(i_j) \herm}} }
\end{equation}
and obtain
{\allowdisplaybreaks
\begin{align*}
\mat{A_j^{(i_j) \herm}} = \tfrac{1}{\sqrt[p]{2}} \pmat{ \mat{B_j^{(i_j) \herm}} & \mat 0 \\ \mat 0 & \mat{B_{p+1-j}^{(i_j)}} }
& = \tfrac{1}{\sqrt[p]{2}} \pmat{ \mat 0 \ & \ \mat I \\ \mat I \ & \ \mat 0}
    \pmat{ \mat{B_{p+1-j}^{(i_j)}} & \mat 0 \\ \mat 0 & \mat{B_{j}^{(i_j) \herm}} }
    \pmat{ \mat 0 \ & \ \mat I \\ \mat I \ & \ \mat 0} \\
& = \pmat{ \mat 0 \ & \ \mat I \\ \mat I \ & \ \mat 0}
    \mat{ A_{p+1-j}^{(i_j)}}
    \pmat{ \mat 0 \ & \ \mat I \\ \mat I \ & \ \mat 0}
\end{align*}}
with $\mat I$ being identities of appropriate size. Hence, the choice
\begin{equation}\label{eq:proofDefinitionS_j}
\mat{S_j} = \pmat{ \mat 0 \ & \ \mat{I_{D_{j+1} } } \\ \mat{I_{D_{p+1-j}}} \ & \ \mat 0} \quad \text{ for all } j=1,\dots,p
\end{equation}
gives $\mat{A_j^{(i_j) \herm}} = \mat{S_{p-j}^{\inv}} \mat{A_{p+1-j}^{(i_j)} S_{p+1-j}}$,
the desired matrix relations \myref{eq:MatricesReverseSymmetry}.

In the OBC case we can proceed in a similar way, but at both ends $j=1$ and $j=p$ something special happens:
as we want to preserve the OBC character of the \mps representation, the matrices
$\mat{A_1^{(i_1)}}$ and $\mat{A_p^{(i_p)}}$ have to be vectors as well.
Therefore we define
$$ \mat{A_1^{(i_1)}} = \tfrac{1}{\sqrt[p]{2}} \pmat{ \mat{B_1^{(i_1)}} \ & \ \mat{B_p^{(i_1) \herm}} }
\quad \text{and} \quad
\mat{A_p^{(i_p)}} = \tfrac{1}{\sqrt[p]{2}} \pmat{ \mat{B_p^{(i_p)}} \\ \mat{B_1^{(i_p) \herm}} } \; .
$$
The choice $\mat{S_p}=1$ leads to the  desired relation
$\bigl( \mat{A_1^{(i_1)}} \bigr)^{\herm} = \mat{S_{p-1}^{\inv}} \mat{A_{p}^{(i_1)} S_{p}}$.
\end{proof}
\begin{remark}
\begin{enumerate}
\item The proof shows that the reverse symmetry can occur in the periodic boundary case, but also
for the open boundary case where $\mat{A_1^{(i_1)}}$ and $\mat{A_p^{(i_p)}}$ specialize to vectors.
Then, $\mat{S_0} = \mat{S_p} \stackrel{\myref{eq:reverseSymmetryRelationsInS}}{=} \mat{S_0^{\herm}}$
are simply (even real) scalars.
\item In the proof, the matrices $\mat{S_j}$ can be chosen to be unitary, compare \myref{eq:proofDefinitionS_j}.
Thus, they can be diagonalized by a unitary transform $\mat{V_j}$ giving
$$ \mat{S_j} = \mat{V_j \Delta_j V_j^{\herm}}  \qquad \text{with diagonal and unitary } \mat{\Delta_j}  \; .$$
Because of
$\mat{S_{p-j}^{\inv}} = \mat{S_{p-j}^{\herm}} \stackrel{\myref{eq:reverseSymmetryRelationsInS}}{=} \mat{S_j}$
the relations \myref{eq:MatricesReverseSymmetry} read
\begin{align*}
\bigl( \mat{A_j^{(i_j)} \bigr)^{\herm}} & = \mat{S_j A_{p+1-j}^{(i_j)} S_{p+1-j}} \\
    & =  \bigl( \mat{V_j \Delta_j V_j^{\herm}} \bigr) \mat{A_{p+1-j}^{(i_j)}}
            \bigl( \mat{V_{p+1-j} \Delta_{p+1-j} V_{p+1-j}^{\herm}} \bigr) \; .
\end{align*}
The last equation can be rewritten to
$$ \mat{V_j^{\herm}} \bigl( \mat{A_j^{(i_j)} \bigr)^{\herm}} \mat{V_{p+1-j}} =
\mat{ \Delta_j V_j^{\herm} A_{p+1-j}^{(i_j)}} \mat{V_{p+1-j} \Delta_{p+1-j} } \; . $$
Defining $\mat{\tilde A_j^{(i_j)}} := \mat{ V_{p+1-j}^{\herm} A_j{(i_j)} V_j }$, the relations \myref{eq:MatricesReverseSymmetry} take the form
\begin{equation}\label{eq:reverseSymmetryHermitianSRelations}
 \bigl( \mat{\tilde A_j^{(i_j)}} \bigr)^{\herm} = \mat{ \Delta_j \tilde A_{p+1-j}^{(i_j)} \Delta_{p+1-j}}
 \end{equation}
with unitary diagonal matrices $\mat{\Delta_j}$ fulfilling $\mat{\Delta_j^{\herm}}=\mat{\Delta_{p-j}}$.
If the matrices $\mat{S_j}$ are unitary and also Hermitian, the diagonal matrices $\mat{\Delta_j}$
have values $\pm 1$ on the main diagonal.
\item In the PBC case with \mps matrices $\mat{B_j^{(i_j)}}$ of equal size $D \times D$,
the $\mat{S_j}$ matrices~\myref{eq:proofDefinitionS_j}
can be chosen to be site-independent, Hermitian, and unitary.
In this case the relations~\myref{eq:reverseSymmetryHermitianSRelations}
are fulfilled by $\mat{\Delta_j} = \operatorname{diag}(\mat{I_D},-\mat{I_D})$.
\end{enumerate}
\end{remark}

\subsubsection*{Normal Form for the Reverse Symmetry}
In the following theorem we propose a normal form for \MPS
representations of reverse symmetric vectors.
\begin{theorem}
Let $\vec x \in \mathbb C^{2^p}$ be a vector with the reverse symmetry.
If $p = 2 m$ is even, $\vec{x}$ can be represented by an \mps of the form
\begin{equation}\label{eq:ReverseSymmetryNormalFormEvenp}
x_{i_1,i_2,\dots,i_p} =
\trace \left( \bigl( \mat{U_1^{(i_1)}} \cdots \mat{U_{m}^{(i_{m})}} \bigr) \mat{\Sigma}
    \bigl( \mat{U_{m}^{(i_{m+1})\herm}} \cdots \mat{U_1^{(i_p) \herm}} \bigr) \mat{\Lambda} \right)
\end{equation}
and if $p = 2 m + 1$ is odd, the representation reads
\begin{equation}\label{eq:ReverseSymmetryNormalFormOddp}
x_{i_1,i_2,\dots,i_p} =
\trace \left( \bigl( \mat{U_1^{(i_1)}} \cdots \mat{U_{m}^{(i_{m})}} \bigr) \mat{U_{m+1}^{(i_{m+1})}} \mat{\Sigma}
    \bigl( \mat{U_{m}^{(i_{m+2})\herm}} \cdots \mat{U_1^{(i_p) \herm}} \bigr) \mat{\Lambda} \right)
\end{equation}
with unitary matrices  $\mat{U_j^{(i_j)}}$ and real and diagonal matrices $\mat{\Sigma}$ and $\mat{\Lambda}$.
\end{theorem}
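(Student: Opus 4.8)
The plan is to establish the existence direction constructively, after first recording the easy consistency check that a representation of the stated shape is automatically reverse symmetric: since $\mat{\Sigma}$ and $\mat{\Lambda}$ are real and diagonal, one has $\mat{\Sigma}^{\herm}=\mat{\Sigma}$ and $\mat{\Lambda}^{\herm}=\mat{\Lambda}$, so applying $\overline{\trace(\,\cdot\,)}=\trace\bigl((\,\cdot\,)^{\herm}\bigr)$ to \myref{eq:ReverseSymmetryNormalFormEvenp} together with cyclicity of the trace reproduces $\bar x_{i_p,\dots,i_1}=x_{i_1,\dots,i_p}$. This both confirms the target \myref{eq:reverseSymmetry} and identifies the algebraic features the construction must produce: $\mat{\Sigma},\mat{\Lambda}$ real and the second half being the exact Hermitian conjugate of the first.

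For the existence direction I would start from the representation already guaranteed by Theorem~\ref{thm:reverseSymmetryRelations}: any reverse symmetric $\vec x$ admits an \mps whose matrices satisfy \myref{eq:MatricesReverseSymmetry}, and by the accompanying remark I may pass to the gauge \myref{eq:reverseSymmetryHermitianSRelations}, i.e. $\bigl(\mat{A_j^{(i_j)}}\bigr)^{\herm}=\mat{\Delta_j}\mat{A_{p+1-j}^{(i_j)}}\mat{\Delta_{p+1-j}}$ with unitary diagonal $\mat{\Delta_j}$ obeying $\mat{\Delta_j^{\herm}}=\mat{\Delta_{p-j}}$. The essential feature of this starting form is that the second-half matrices are the Hermitian conjugates of the first-half matrices up to fixed diagonal conjugations, so it suffices to normalize the first half.

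Next I would sweep an SVD-based left normalization over the sites $1,\dots,m$, exactly as in \myref{eq:dmrgSVDLeft} and Lemma~\ref{lemma:ExistenceUniqueMPSGauge}, replacing each pair $\bigl(\mat{A_j^{(0)}},\mat{A_j^{(1)}}\bigr)$ by a pair $\bigl(\mat{U_j^{(0)}},\mat{U_j^{(1)}}\bigr)$ forming part of a unitary matrix and pushing the residual factor towards the centre. Because of the conjugate relation from the previous step, the same data simultaneously right normalizes the mirror sites $p,\dots,m+1$, turning them into the conjugate transposes $\bigl(\mat{U_j^{(i_j)}}\bigr)^{\herm}$; this folding produces the block $\mat{U_m^{(i_{m+1})\herm}}\cdots\mat{U_1^{(i_p)\herm}}$. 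The residuals accumulate on precisely the two bonds separating the two arcs of the ring, namely the centre bond between sites $m$ and $m+1$ and the boundary bond between sites $p$ and $1$, and a final SVD of each residual (absorbing its unitary factors into the neighbouring $\mat{U_m}$ and its conjugate, which preserves the pairing) turns them into the real nonnegative diagonal matrices $\mat{\Sigma}$ (centre) and $\mat{\Lambda}$ (boundary). For odd $p=2m+1$ the sites pair as $j\leftrightarrow p+1-j$ for $j\le m$, leaving the central site $m+1$ self-paired; it cannot be folded and instead survives as a single factor $\mat{U_{m+1}^{(i_{m+1})}}$ adjacent to $\mat{\Sigma}$, yielding \myref{eq:ReverseSymmetryNormalFormOddp}.

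The main obstacle I anticipate is the two-bond nature of the tensor chain: unlike the open-boundary, single-cut situation of Lemma~\ref{lemma:ExistenceUniqueMPSGauge}, here the trace couples the centre bond and the boundary bond, and in general one cannot gauge both to real diagonal form at once. The work therefore lies in showing that reverse symmetry makes the two arcs mutually Hermitian adjoint; concretely, that the centre matricization $\mat X$ of $\vec x$ satisfies $\mat X=\mat R\,\mat X^{\herm}\,\mat R$ for the bit-reversal permutation $\mat R$, so that a single normalization sweep diagonalizes both bonds simultaneously and, crucially, with real entries rather than merely unitary-diagonal ones. Keeping the phases consistent so that $\mat{\Sigma}$ and $\mat{\Lambda}$ come out real while the two halves remain exact Hermitian conjugates is the delicate bookkeeping, and it is exactly here that the relations \myref{eq:reverseSymmetryHermitianSRelations} and $\mat{\Delta_j^{\herm}}=\mat{\Delta_{p-j}}$ coming from \myref{eq:reverseSymmetryRelationsInS} are used.
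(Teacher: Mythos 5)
Your proposal is correct and follows essentially the same route as the paper's proof: you start from the representation guaranteed by Theorem~\ref{thm:reverseSymmetryRelations}, use the consistency conditions \myref{eq:reverseSymmetryRelationsInS} to turn the second half of the chain into Hermitian conjugates of the first half, sweep SVDs over sites $1,\dots,m$ (which, by the mirror relations \myref{eq:MatricesReverseSymmetry}, simultaneously normalizes sites $p,\dots,m+1$), and then reduce the Hermitian residuals sitting on the centre and boundary bonds to real diagonal form, with the odd case leaving a self-paired central factor next to $\mat{\Sigma}$. The only deviations are cosmetic: the paper eliminates the boundary factor $\mat{S_p^{-\herm}}=\mat{W\Lambda W^{\herm}}$ \emph{before} the sweep rather than after (both orderings work, since the absorbed unitaries attach to mirror-paired sites), and your final factorizations must be Hermitian eigendecompositions $\mat{C}=\mat{X\Sigma X^{\herm}}$ --- giving real but possibly \emph{negative} diagonal entries --- rather than literal SVDs with nonnegative diagonal, because a genuine SVD of an indefinite Hermitian residual has distinct left and right unitary factors and would spoil the requirement that the same family $\mat{U_m^{(i)}}$ appear on both sides; this is harmless since the statement only demands that $\mat{\Sigma}$ and $\mat{\Lambda}$ be real and diagonal.
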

\begin{proof}
We start with an \mps of the form~\myref{eq:MatricesReverseSymmetry} to represent the given vector~$\vec x$,
compare Theorem~\ref{thm:reverseSymmetryRelations}.

In the case of $p=2m$ being even, we obtain
{\allowdisplaybreaks
\begin{align}\nonumber
x_{i_1,\dots,i_p} & = \trace \left( \mat{A_1^{(i_1)}} \mat{A_2^{(i_2)}} \cdots
                        \mat{A_m^{(i_m)}} \mat{A_{m+1}^{(i_{m+1})}}  \cdots \mat{A_p^{(i_p)}} \right) \\ \nonumber
            & \stackrel{\myref{eq:MatricesReverseSymmetry}}{=} \trace \left( \mat{A_1^{(i_1)}} \mat{A_2^{(i_2)}} \cdots
                        \mat{A_m^{(i_m)}}
                        \bigl( \mat{S_m^{\herm}} \mat{A_{m}^{(i_{m+1}) \herm}} \mat{S_{m-1}^{-\herm}} \bigr)
                        \cdots \bigl( \mat{S_1^{\herm}} \mat{A_1^{(i_p) \herm}} \mat{S_p^{- \herm}} \bigr) \right) \\ \label{eq:ReverseSymmetryNormalformEven}
            & = \trace \left( \bigl( \mat{A_1^{(i_1)}} \mat{A_2^{(i_2)}} \cdots
                        \mat{A_m^{(i_m)}} \bigr)
                        \mat{S_m^{\herm}}
                        \bigl( \mat{A_{m}^{(i_{m+1}) \herm}} \cdots \mat{A_1^{(i_p) \herm}} \bigr)
                                \mat{S_p^{- \herm}} \right) \; .
\end{align}}
Following \myref{eq:reverseSymmetryRelationsInS} we obtain $\mat{S_p^{\herm}} = \mat{S_p}$
and thus we may factorize $\mat{S_p^{\text{$-\mathrm H$}}} = \mat{W \Lambda W^{\herm}}$.
Using \myref{eq:traceInvariance}, Eqn.~\myref{eq:ReverseSymmetryNormalformEven} reads
$$
x_{i_1,\dots,i_p} =
\trace \left( \bigl( \mat{W^{\herm} A_1^{(i_1)}} \bigr) \mat{A_2^{(i_2)}} \cdots
                        \mat{A_m^{(i_m)}}
                        \mat{S_m^{\herm}}
                        \mat{A_{m}^{(i_{m+1}) \herm}} \cdots \mat{A_2^{(i_{p-1}) \herm}}
                        \bigl( \mat{W^{\herm} A_1^{(i_p)}} \bigr)^{\herm}
                                \mat{\Lambda} \right) \; .
$$
We use the SVD of $\mat{W^{\herm} A_1}$,
\begin{equation*}
\pmat{ \mat{W^{\herm} A_1^{(0)}} \\ \mat{ W^{\herm} A_1^{(1)}} } =
    \pmat{ \mat{U_1^{(0)}} \\ \mat{U_1^{(1)}} } \mat{ \Lambda_1 } \mat{ V_1} \; ,
\end{equation*}
to replace $\mat{W^{\herm} A_1}$ at both ends. We then obtain
\begin{equation*}
\trace \left( \mat{ U_1^{(i_1)}} \bigl( \mat{\Lambda_1 V_1} \mat{A_2^{(i_2)}} \bigr) \cdots
                        \mat{A_m^{(i_m)}}
                        \mat{S_m^{\herm}}
                        \bigl( \mat{A_{m}^{(i_{m+1}) \herm}} \cdots
                        \bigl( \mat{A_2^{(i_{p-1}) \herm}}
                            \mat{ V_1^{\herm} \Lambda_1} \mat{U_1^{(i_p) \herm}} \bigr)
                        \mat{\Lambda} \right) \; .
\end{equation*}
We proceed with the SVD for $\mat{\Lambda_1 V_1} \mat{A_2^{(i_2)}}$, i.e.
$\mat{\Lambda_1 V_1} \mat{A_2^{(i_2)}} = \mat{U_2^{(i_2)} \Lambda_2 V_2}$, to obtain
\begin{equation*}
\trace \left( \mat{U_1^{(i_1)}} \mat{U_2^{(i_2)}} \bigl( \mat{\Lambda_2 V_2} \mat{A_3^{(i_3)}} \bigr) \cdots
                        \mat{A_m^{(i_m)}}
                        \mat{S_m^{\herm}}
                        \mat{A_{m}^{(i_{m+1}) \herm}} \cdots
                         \bigl( \mat{A_3^{(i_3) \herm}} \mat{ V_2^{\herm} \Lambda_2} \bigr) \mat{U_2^{(i_{p-1}) \herm}} \mat{U_1^{(i_p) \herm}} \mat{\Lambda} \right) \; .
\end{equation*}
Proceeding in an iterative way finally gives
\begin{equation*}
\trace \biggl( \mat{ \bigl( U_1^{(i_1)}} \mat{U_2^{(i_2)}} \cdots \mat{U_m^{(i_m)}} \bigr)
\underbrace{\bigl( \mat{\Lambda_m V_m} \mat{S_m^{\herm}} \mat{V_m^{\herm}} \mat{\Lambda_m} \bigr)}_{=:\mat{C}}
                 \bigl( \mat{U_{m}^{(i_{m+1}) \herm}} \cdots
                         \mat{U_2^{(i_{p-1}) \herm}} \mat{U_1^{(i_p) \herm}} \biggr)
                                \mat{\Lambda} \Bigr) \; .
\end{equation*}
For $j=m$, Eqn. \ref{eq:reverseSymmetryRelationsInS} yields $\mat{S_m^{\herm}} = \mat{S_m}$
and thus $\mat C$ is also Hermitian leading to
$$ \mat{\Lambda_m V_m} \mat{S_m^{\herm}} \mat{V_m^{\herm}} \mat{\Lambda_m} =
\mat{C} = \mat{C^{\herm}} = \mat{X \Sigma X^{\herm}} \; .$$
with unitary $\mat{X}$ and real diagonal $\mat{\Sigma}$.
Altogether we obtain the \mps representation
\begin{equation*}
\trace \left( \mat{U_1^{(i_1)}} \mat{U_2^{(i_2)}} \cdots \bigl( \mat{U_m^{(i_m)} X} \bigr)
                \mat{\Sigma}
                 \bigl( \mat{X^{\herm}} \mat{U_{m}^{(i_{m+1}) \herm}} \bigr) \cdots
                         \mat{U_2^{(i_{p-1}) \herm}} \mat{U_1^{(i_p) \herm}}
                                 \mat{\Lambda} \right) \; .
\end{equation*}
Replacing $\mat{U_m^{(i)}}$ by the unitary matrix $\mat{U_m^{(i)} X}$ gives
the desired normal form~\myref{eq:ReverseSymmetryNormalFormEvenp}.


For the odd case $p=2m+1$, we may proceed in a similar way
and replace all factors up to the interior one related to $j=m+1$ by unitary matrices to obtain
\begin{equation*}
\trace \biggl( \bigl( \mat{U_1^{(i_1)}} \mat{U_2^{(i_2)}} \cdots \mat{U_m^{(i_m)}} \bigr)
                 \underbrace{\bigl( \mat{\Lambda_m V_m} \mat{A_{m+1}^{(i_{m+1})}} \mat{S_m^{\herm}} \mat{V_m^{\herm}} \mat{\Lambda_m} \bigr)}_{=:\mat{C^{(i_{m+1})}}}
                 \bigl( \mat{U_{m}^{(i_{m+2}) \herm}} \cdots \mat{U_1^{(i_p) \herm}} \bigr)
                                \mat{\Lambda} \biggr) \; .
\end{equation*}
For site $j=m+1$ the supposed matrix relations lead to
\begin{equation*}
\left( \mat{A_{m+1}^{(i_{m+1})} S_m^{\herm}} \right)^{\herm}
= \mat{S_m} \left( \mat{A_{m+1}^{(i_{m+1})}} \right)^{\herm}
\stackrel{\myref{eq:MatricesReverseSymmetry}}{=}
\mat{S_{m}} \bigl( \mat{S_{m}^{\text{-1}} A_{m+1}^{(i_{m+1})} S_{m+1} } \bigr)
\stackrel{\myref{eq:reverseSymmetryRelationsInS}}{=} \mat{A_{m+1}^{(i_{m+1})} S_m^{\herm} }
\end{equation*}
and thus the matrices $\mat{C^{(i_{m+1})}}$ are both Hermitian.
Using the SVD gives
\begin{equation}\label{eq:ReverseSymmetryProofNormalFormSVDOdd}
\mat{C^{(i_{m+1})}} = \mat{U_{m+1}^{(i_{m+1})} \Sigma X } =
\mat{ X^{\herm} \Sigma U_{m+1}^{(i_{m+1}) \herm} } \; .
\end{equation}
Hence, for the overall representation we obtain
\begin{align*}
& \trace \left( \bigl( \mat{U_1^{(i_1)}} \cdots \mat{U_m^{(i_m)}} \bigr)
                 \bigl( \mat{U_{m+1}^{(i_{m+1})} \Sigma X } \bigr)
                 \bigl( \mat{U_{m}^{(i_{m+2}) \herm}} \cdots
                         \mat{U_1^{(i_p) \herm}} \bigr) \mat{\Lambda} \right) \\
 = & \trace \left( \mat{U_1^{(i_1)}} \cdots
                    \bigl( \mat{U_m^{(i_m)}} \mat{X^{\herm}} \bigr)
                 \bigl( \mat{X U_{m+1}^{(i_{m+1})} } \bigr) \mat{\Sigma}
                 \bigl( \mat{ U_{m}^{(i_{m+2})}} \mat{X^{\herm}} \bigr)^{\herm} \cdots
                         \bigl( \mat{U_1^{(i_p)}} \bigr)^{\herm}
                                \mat{\Lambda} \right) \; .
\end{align*}
Replacing $\mat{U_m^{(i)}}$ by $\mat{U_m^{(i)} X^{\herm}}$ and
$\mat{U_{m+1}^{(i)}}$ by $\mat{X U_{m+1}^{(i)} }$
leads to the normal form~\myref{eq:ReverseSymmetryNormalFormOddp} for the odd case.
\end{proof}
\begin{remark}
In the odd case we may also use the right-side SVD factorization
\mbox{$\mat{C^{(i_{m+1})}} = \mat{ X^{\herm} \Sigma U_{m+1}^{(i_{m+1}) \herm} }$}
in Eqn.~\ref{eq:ReverseSymmetryProofNormalFormSVDOdd}
leading to the normal form
$$\trace \left( \bigl( \mat{U_1^{(i_1)}} \cdots \mat{U_{m}^{(i_{m})}} \bigr)
        \mat{\Sigma} \mat{U_{m+1}^{(i_{m+1}) \herm}}
        \bigl( \mat{U_{m}^{(i_{m+2})\herm}} \cdots \mat{U_1^{(i_p) \herm}} \bigr) \mat{\Lambda} \right) \; .
$$
This ambiguity is reasonable as the interior factor in the odd case only has itself as counter part:
$\mat{A_1} \leftrightarrow \mat{A_p}$, $\mat{A_2} \leftrightarrow \mat{A_{p-1}}$, $\dots$ ,
$\mat{A_{m}} \leftrightarrow \mat{A_{m+2}}$, $\mat{A_{m+1}} \leftrightarrow \mat{A_{m+1}}$.
\end{remark}

\subsubsection*{Reverse Symmetry in TI Representations}
Let us finally consider the reverse symmetry in TI representations.
This additional property allows us to use site-independent matrices $\mat{S_j} = \mat{S}$,
which are  Hermitian, compare \myref{eq:reverseSymmetryRelationsInS}.
Then the relations \myref{eq:MatricesReverseSymmetry} take the form
$$\bigl( \mat{A^{(i)}} \bigr)^{\herm} = \mat{S^{\inv} A^{(i)} \mat{S}} \Longleftrightarrow
\bigl( \mat{A^{(i)} S } \bigr)^{\herm} = \mat{ A^{(i)} \mat{S}} \; .
$$
Thus, we can represent the vector with Hermitian matrices $\mat{\tilde A^{(i)}} := \mat{A^{(i)} S }$.
In the QI society one can find considerations on TI systems using real symmetric matrices, compare \cite{Pirvu11ExploitingTI}.

\subsubsection{Bit-Flip Symmetry}
Here we focus on the representation of symmetric and skew-symmetric vectors
appearing, e.g., as eigenvectors of symmetric persymmetric matrices
(see Lemma~\ref{lemma:EigenvectorsPersymmMatrices}).
We will use the \defini{bit-flip operator} $\bar i := 1-i$ for $i \in\{0,1\}$.
First we show that the symmetry condition $\mat J \vec{x} = \vec{x}$
corresponds to the \defini{bit-flip symmetry}
$$ x_{i_1,i_2,\dots,i_p} = x_{\bar{i}_1,\bar{i}_2,\dots,\bar{i}_p} \; .$$
To see this we consider
{\allowdisplaybreaks
\begin{align*}
 \mat J \vec{x} & = (\mat{J_2} \otimes \cdots \otimes \mat{J_2})
                        \Biggl( \sum\limits_{i_1,\dots,i_p} x_{i_1,i_2,\dots,i_p}
                                \bigl( \vec{e_{i_1}} \otimes \cdots \otimes \vec{e_{i_p}} \bigr) \Biggr ) \\
        & = \sum\limits_{i_1,\dots,i_p} x_{i_1,i_2,\dots,i_p} \bigl( (\mat{J_2} \vec{e_{i_1}} ) \otimes \cdots
        \otimes (\mat{J_2} \vec{e_{i_p}} ) \bigr) \\
        & = \sum\limits_{i_1,\dots,i_p} x_{i_1,i_2,\dots,i_p}
                    \bigl( \vec{e_{\overline{i}_1}} \otimes \cdots \otimes \vec{e_{\bar{i}_p}} \bigr) \\
        & = \sum\limits_{i_1,\dots,i_p} x_{\bar{i}_1,\bar{i}_2,\dots,\bar{i}_p}
                    \bigl( \vec{e_{{i_1}}} \otimes \cdots \otimes \vec{e_{{i_p}}} \bigr) \; . 
 \end{align*}}
Hence we obtain
$$ \mat{J} \vec{x} = \vec{x} \ \Longleftrightarrow \ x_{i_1,i_2,\dots,i_p} =
x_{\bar{i}_1,\bar{i}_2,\dots,\bar{i}_p}
\text{ for all } i_1,\dots,i_p=0,1 \; .$$
Analogously, for a skew-symmetric vector $\vec{x}$ one gets
$x_{i_1,i_2,\dots,i_p} = - x_{\bar{i}_1,\bar{i}_2,\dots,\bar{i}_p}$.

In order to express these relations in the \MPS formalism consider 
\begin{equation}\label{eq:defMPSPairsInvolution}
\mat{A_j^{(1)}} = \mat{U_j A_j^{(0)} U_{j+1 \ {\rm mod} \ p}} \quad \text{ for } \ j=1,\dots,p
\end{equation}
with $\mat{U_j}$ being involutions, i.e. $\mat U_{\mat j}^2 = \mat I$ (\cite{Sandvik07Variational}).
Then Eqn. \myref{eq:defMPSPairsInvolution} can also be expressed vice versa to give
\begin{equation}\label{eq:MPSRelationInvolution}
\mat{A_j^{(i_j)}} = \mat{U_j A_j^{(\bar{i}_j)} U_{j+1 \ {\rm mod} \ p}} \; .
\end{equation}
The following lemma shows the correspondence between these relations
and the bit-flip symmetry.
\begin{theorem}
If the matrix pairs $(\mat{A_j^{(0)}},\mat{A_j^{(1)}})$ are connected via involutions
as in \myref{eq:defMPSPairsInvolution}
the represented vector has the bit-flip symmetry and is hence symmetric.
Contrariwise any symmetric vector can be represented by an \mps
fulfilling condition~\myref{eq:defMPSPairsInvolution}.
\end{theorem}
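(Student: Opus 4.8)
The plan is to prove the two implications separately, closely paralleling the structure of the reverse-symmetry arguments in Theorem~\ref{thm:reverseSymmetryRelations}.

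\textbf{Forward direction.} First I would assume the involution relation \myref{eq:defMPSPairsInvolution}, use it in the equivalent form \myref{eq:MPSRelationInvolution}, and substitute it factor by factor into the trace representation \myref{eq:MPSComponent}:
$$ x_{i_1,\dots,i_p} = \trace\bigl( \mat{U_1} \mat{A_1^{(\bar i_1)}} \mat{U_2} \cdot \mat{U_2} \mat{A_2^{(\bar i_2)}} \mat{U_3} \cdots \mat{U_p} \mat{A_p^{(\bar i_p)}} \mat{U_1} \bigr) \; . $$
Each interior adjacent pair $\mat{U_{j+1}} \mat{U_{j+1}}$ collapses to the identity because the $\mat{U_j}$ are involutions, and the remaining outer factors $\mat{U_1}$ are merged by cyclicity \myref{eq:traceInvariance} into $\mat{U_1} \mat{U_1} = \mat{I}$. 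What survives is exactly $\trace(\mat{A_1^{(\bar i_1)}} \cdots \mat{A_p^{(\bar i_p)}}) = x_{\bar i_1,\dots,\bar i_p}$, the bit-flip symmetry. Since the equivalence $\mat J \vec x = \vec x \Longleftrightarrow x_{i_1,\dots,i_p} = x_{\bar i_1,\dots,\bar i_p}$ was established just above, the represented vector is symmetric. This direction is a routine computation with no real obstacle.

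\textbf{Backward direction.} For the converse I would start from an arbitrary \mps $x_{i_1,\dots,i_p} = \trace(\mat{B_1^{(i_1)}} \cdots \mat{B_p^{(i_p)}})$, which exists by Lemma~\ref{lemma:ExistenceUniqueMPSGauge}. Exploiting the bit-flip symmetry I would symmetrize,
$$ x_{i_1,\dots,i_p} = \tfrac12\bigl( \trace(\mat{B_1^{(i_1)}} \cdots \mat{B_p^{(i_p)}}) + \trace(\mat{B_1^{(\bar i_1)}} \cdots \mat{B_p^{(\bar i_p)}}) \bigr) \; , $$
and absorb the two summands into a single block-diagonal \mps by setting
$$ \mat{A_j^{(i_j)}} := \tfrac{1}{\sqrt[p]{2}} \pmat{ \mat{B_j^{(i_j)}} & \mat 0 \\ \mat 0 & \mat{B_j^{(\bar i_j)}} } \; . $$
Expanding the trace of the block-diagonal product reproduces $x_{i_1,\dots,i_p}$. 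It then remains only to exhibit the involutions: choosing the block-swap matrices $\mat{U_j} = \pmat{ \mat 0 & \mat{I} \\ \mat{I} & \mat 0 }$ of the appropriate size at each site, conjugating the block-diagonal $\mat{A_j^{(0)}}$ by the swaps interchanges its two diagonal blocks, so $\mat{U_j} \mat{A_j^{(0)}} \mat{U_{j+1}} = \mat{A_j^{(1)}}$, and each $\mat{U_j}$ squares to $\mat I$. Hence \myref{eq:defMPSPairsInvolution} holds.

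\textbf{Main obstacle.} The delicate point is the backward construction at the boundary. In the periodic case the modular wraparound $\mat{U_{p+1\ \mathrm{mod}\ p}} = \mat{U_1}$ is automatically consistent with the uniform block-swap choice and nothing more is needed. In the open boundary case, however, $\mat{A_1^{(i_1)}}$ and $\mat{A_p^{(i_p)}}$ must specialize to a row and a column vector, so the doubling at the two ends has to be arranged separately --- exactly as in Theorem~\ref{thm:reverseSymmetryRelations} --- via $\mat{A_1^{(i_1)}} = \tfrac{1}{\sqrt[p]{2}}\pmat{ \mat{B_1^{(i_1)}} & \mat{B_1^{(\bar i_1)}} }$ and $\mat{A_p^{(i_p)}} = \tfrac{1}{\sqrt[p]{2}}\pmat{ \mat{B_p^{(i_p)}} \\ \mat{B_p^{(\bar i_p)}} }$, with the wraparound involution $\mat{U_1}$ reducing to the scalar $1$. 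Checking that these end cases still satisfy \myref{eq:defMPSPairsInvolution} with matching block sizes is the one place where I expect to need genuine care; everything else is bookkeeping.
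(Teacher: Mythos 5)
Your proposal is correct and follows essentially the same route as the paper's own proof: the forward direction by substituting \myref{eq:MPSRelationInvolution} into the trace and cancelling the involutions pairwise via cyclicity, and the backward direction by symmetrizing $x = \tfrac12(x + \bar x)$ into a block-diagonal doubled \mps with block-swap involutions, treating the OBC endpoints as row/column vectors with $\mat{U_1}=1$. Your inclusion of the $1/\sqrt[p]{2}$ normalization in the block-diagonal definition is in fact slightly more careful than the paper's statement of the same construction.
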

\begin{proof}
The matrix relations~\myref{eq:defMPSPairsInvolution} translate into the symmetry of the represented vector
\begin{eqnarray*}
x_{i_1,i_2,\dots,i_p} & = & \trace \left(  \mat{A_1^{(i_1)}} \cdot \mat{A_2^{(i_2)}} \cdots \mat{A_p^{(i_p)}} \right) \\
& \stackrel{\myref{eq:MPSRelationInvolution}}{=} & \trace \left(  \left( \mat{U_1 A_1^{( \bar{i}_1 )} U_2} \right)  \cdot
        \left( \mat{U_2 A_2^{(\bar{i}_2)} U_3} \right) \cdots \left( \mat{U_p A_p^{(\bar{i}_p)} U_1} \right) \right) \\
& = & \trace \left(  \mat{A_1^{( \bar{i}_1 )} A_2^{(\bar{i}_2)}} \cdots  \mat{A_p^{(\bar{i}_p)}} \right) \\
& = & x_{\bar{i_1},\bar{i_2},\dots,\bar{i_p}} \; .
\end{eqnarray*}

Let us now consider the construction of an \mps representation~\myref{eq:defMPSPairsInvolution}
for a symmetric vector $\vec x$ fulfilling the bit-flip symmetry
$x_{i_1,i_2,\dots,i_p} = x_{\bar{i_1},\bar{i_2},\dots,\bar{i_p}}$.
To this end we start with any \mps representation
$$ x_{i_1,i_2,\dots,i_p} = \trace \left( \mat{B_1^{(i_1)}} \mat{B_2^{(i_2)}} \cdots \mat{B_p^{(i_p)}} \right) $$
with $D_j \times D_{j+1}$ matrices $\mat{B_j^{(i_j)}}$.
Starting from the identity
$$
x_{i_1,i_2,\dots,i_p} = \frac{1}{2} \left( x_{i_1,i_2,\dots,i_p} + x_{\bar{i_1},\bar{i_2},\dots,\bar{i_p}} \right)
$$
we may proceed in a similar way as in \myref{eq:ReverseSymmetryConstructMPSWithRelations}
for the reverse symmetry to obtain
$$ x_{i_1,i_2,\dots,i_p} =
\tfrac{1}{2} \trace \left( \pmat{ \mat{B_1^{(i_1)}} & \mat 0 \\ \mat 0 & \mat{B_1^{(\bar i_1)}}}
                  \pmat{ \mat{B_2^{(i_2)}} & \mat 0 \\ \mat 0 & \mat{B_{2}^{(\bar i_2)}}} \cdots
                  \pmat{ \mat{B_p^{(i_p)}} & \mat 0 \\ \mat 0 & \mat{B_p^{(\bar i_p)}}}
                    \right) \; . $$
This equation motivates the definition
$$ \mat{A_j^{(i_j)}} := \pmat{ \mat{B_j^{(i_j)}} & \mat 0 \\ \mat 0 & \mat{B_j^{(\bar i_j)}}} \; .$$
In the OBC case the first and last matrices have to specialize to vectors:
$$ \mat{A_1^{(i_1)}} = \pmat{ \mat{B_1^{(i_1)}} \ & \ \mat{B_1^{(\bar i_1)}}}
\qquad \text{and} \qquad \mat{A_p^{(i_p)}} = \pmat{ \mat{B_p^{(i_p)}} \\ \mat{B_p^{(\bar i_p)}}} \; .
$$
Using the involutions
\begin{equation}\label{eq:BitFlipSymmetryProofConstructInvolution}
\mat{U_j} := \pmat{ \mat{0} & \mat{I_{D_j}} \\ \mat{I_{D_j}} & \mat{0} } \; \text{ for } j=1,\dots,p
\end{equation}
gives the desired relations~\myref{eq:defMPSPairsInvolution}.
In the OBC case we have to define $\mat{U_1} = 1$.
\end{proof}

\begin{remark}
If we want to represent a skew-symmetric vector $\vec x = - \mat J \vec x$, we may also use the relations
\myref{eq:MPSRelationInvolution} at all sites up to one, say site $1$, where we would have to add a negative sign:
$\mat{A_1^{(i_1)}} = - \mat{U_1 A_1^{(\bar{i}_1)} U_2}$.
However, in the special TI case, where all matrix pairs have to be identical, this is not possible:
the relations~\myref{eq:MPSRelationInvolution} would read
\begin{equation}\label{eq:TIandInvolutions}
\pmat{ \mat{A_j^{(0)}} \ & \ \mat{A_j^{(1)}}} = \pmat{ \mat{A} \ & \ \mat{U A V}}
\end{equation}
at every site $j$ with site-independent involutions $\mat U$ and $\mat V$.
Therefore, in the periodic \TIMPS ansatz \myref{eq:TIandInvolutions}
applied to symmetric-persymmetric Hamiltonians, only symmetric eigenvectors can occur.
\end{remark}

\subsubsection*{Normal Form for the Bit-Flip Symmetry}

As every involution, $\mat{U_j}$ may only have eigenvalues $\in \{-1,1\}$ and thus 
\begin{equation}\label{eq:diagonalizeInvolution}
 \mat{U_j} = \mat{S_j^{\text{$-1$}} D_{j;\pm 1} S_j} \; ,
\end{equation}
where $\mat{D_{j;\pm 1}}$ is a diagonal matrix with entries $\pm 1$:
the Jordan canonical form implies
$\mat{U_j}=\mat{S_j^{\text{$-1$}}J_{U_j}S_j}$.
Moreover, the Jordan blocks in
$\mat{J_{U_j}}$ have to be involutions as well,
so $\mat{J_{U_j}^{\text{$2$}}}=\mat I$ and therefore
$\mat{J_{U_j}}=\mat{D_j}$ has to be diagonal with entries $\pm 1$.

Consider
\begin{equation*}\label{eq:MPSInvolutionDiag}
\mat{A_j^{(i_j)}} \stackrel{\myref{eq:MPSRelationInvolution}}{=} \mat{U_j A_j^{(\bar{i}_j)} U_{j+1}}
\stackrel{\myref{eq:diagonalizeInvolution}}{=} \left( \mat{S_j^{\text{$-1$}} D_{j;\pm 1} S_j} \right)
            \mat{A_j^{(\bar{i}_j)}} \left( \mat{S_{j+1}^{\text{$-1$}} D_{j+1;\pm 1} S_{j+1}} \right) \; ,
\end{equation*}
which results in
\begin{equation}\label{eq:MPSInvolutionNormalForm}
\underbrace{\mat{S_j A_j^{(i_j)}} \mat S_{\mat{j+1}}^{-1}}_{=\mat{\tilde A_j^{(i_j)}}} =
\mat{D_{j;\pm 1}} \underbrace{\left( \mat{S_j A_j^{(\bar{i}_j)} S_{j+1}^{\text{$-1$}}}
                            \right)}_{= \mat{\tilde A_j^{(\bar{i}_j)} }} \mat{ D_{j+1;\pm 1}} \;
\end{equation}
showing that the \MPS matrices can be chosen such
that the involutions in Eqn.~\myref{eq:MPSRelationInvolution} can be expressed by diagonal matrices $\mat{D_{j;\pm 1}}$
yielding
$$ \mat{A_j^{(i_j)}} = \mat{D_{j;\pm 1} A_j^{(\bar{i_j})} D_{j+1;\pm 1}} \; .$$
Often the distribution of $\pm 1$ in $\mat{D_{j}}$ may be unknown.
So the exchange matrix
$\mat{J}=\mat{S}^{-1} \mat{D_J S}$ is an involution where as diagonal entries in $\mat{D_J}$, $+1$ and $-1$ appear
$\geq \lfloor {\operatorname{size}}(\mat{J})/2 \rfloor$.
If we double the allowed size $D$ for the \MPS matrices we can expect that
$\mat{J}$ has
at least as many $+1$ and $-1$ eigenvalues as all the appearing diagonal matrices $\mat{D_{j; \pm 1}}$.
Therefore,
we may heuristically replace each $\mat{D_{j; \pm 1}}$ by $\mat{J_j}$ with larger matrix size $D$
leading to an ansatz requiring no a-priori information.

\subsubsection*{Bit-Flip Symmetry in TI Representations}
If the \MPS matrices fulfill the bit-flip symmetry relations~\myref{eq:MPSRelationInvolution}
and are additionally site-independent, one has
$$ \mat{A^{(\bar i_j)}} = \mat{U} \mat{A^{(i_j)}} \mat{U}$$
with site-independent involutions
$\mat{U} \stackrel{\myref{eq:diagonalizeInvolution}}{=} \mat{S}^{-1} \mat{D_{\pm 1} S}$.
The transformation~\myref{eq:MPSInvolutionNormalForm} then reads
\begin{equation*}
\underbrace{\mat{S A^{(i_j)}} \mat{S}^{-1}}_{=\mat{\tilde A^{(i_j)}}} =
\mat{D_{\pm 1}} \underbrace{\left( \mat{S A^{(\bar{i}_j)} S^{\text{$-1$}}}
                            \right)}_{= \mat{\tilde A^{(\bar{i}_j)} }} \mat{ D_{\pm 1}} \; .
\end{equation*}
Thus, the vector can be represented by a \TIMPS fulfilling
$\mat{A^{(\bar i_j)}} = \mat{D_{\pm 1}} \mat{A^{(i_j)}} \mat{D_{\pm 1}}$
with the same involution $\mat{D_{\pm 1}}$ everywhere.
Similar results can be found in \cite{Sandvik07Variational}.

The $2D \times 2D$ involution~\myref{eq:BitFlipSymmetryProofConstructInvolution} from the proof
has as eigenvalues as many $+1$ as $-1$
and thus the related diagonal matrix $\mat{D_{\pm 1}}$ can be written as $\operatorname{diag}(\mat{I_D},-\mat{I_D})$.
Therefore, instead of $\mat{D_{\pm 1}}$ we may also use the $2D \times 2D$ exchange matrix~$\mat{J}$ as ansatz for an involution.

\subsubsection*{Uniqueness Results for the Bit-Flip Symmetry}
The technical remarks Lemmata \ref{lemmaTraceEqual}, \ref{LemmaCommute1} and \ref{LemmaCommute2}
may be put to good use in the following theorem.
It depicts certain necessary relations for the \MPS matrices to represent symmetric vectors.
\begin{theorem}
Let $p>1$.
Assume that the \MPS matrices (over $\mathbb K$) are related by
\begin{equation*}\label{eq:ConditionQuasiUniqueness}
\mat{A_1^{(1)}} = \mat{U_{p} A_1^{(0)} V_{1}}
\qquad \text{and} \qquad
\mat{A_j^{(1)}} = \mat{U_{j-1} A_j^{(0)} V_{j}} \ \text{ for } j=2,\dots,p
\end{equation*}
with square matrices $\mat{V_j}$ and $\mat{U_j}$ of appropriate size ($j=1,\dots,p$).
If any choice of matrices $\mat{A_j^{(0)}}$ results in the symmetry of the represented vector $\vec{x}$,
$$ \mat{J} \vec{x} = \vec{x}$$
then it holds $\mat{U_j}$ and $\mat{V_j}$ are -- up to a scalar factor -- involutions
for all $j$: $\mat{U}_{\mat{j}}^2=u_j\mat I$, $\mat{V}_{\mat{j}}^2=v_j\mat I$.
Furthermore, $\mat{U_{j}}=c_{j}\cdot \mat{V_j}$,
$j=1,...,p$ with constants $c_j$.
\end{theorem}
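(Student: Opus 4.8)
The plan is to combine the characterisation $\mat J \vec x = \vec x \Leftrightarrow x_{i_1,\dots,i_p}=x_{\bar i_1,\dots,\bar i_p}$ derived just above with the hypothesised couplings $\mat{A_j^{(1)}} = \mat{U_{j-1} A_j^{(0)} V_j}$ (with $\mat{U_0}:=\mat{U_p}$), by evaluating the bit-flip identity on well-chosen index patterns. Since the symmetry is assumed for \emph{every} choice of the free matrices $\mat{B_j}:=\mat{A_j^{(0)}}$, each pattern produces a matrix identity valid for all $\mat{B_j}$, which is precisely the situation covered by Lemmata \ref{lemmaTraceEqual}, \ref{LemmaCommute1} and \ref{LemmaCommute2}. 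Throughout I abbreviate $\mat{W_j}:=\mat{V_j U_j}$.

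First I would test the all-zeros pattern against its flip $(1,\dots,1)$. Replacing every $\mat{A_j^{(1)}}$ by $\mat{U_{j-1} B_j V_j}$ and using the cyclic invariance \myref{eq:traceInvariance}, the required equality $x_{0\cdots0}=x_{1\cdots1}$ turns into
$$ \trace\bigl( \mat{W_p B_1 W_1 B_2} \cdots \mat{W_{p-1} B_p} \bigr) = \trace\bigl( \mat{B_1 B_2} \cdots \mat{B_p} \bigr) \qquad \text{for all } \mat{B_j}\,. $$
Peeling off the free factors one at a time with Lemma \ref{lemmaTraceEqual} --- and using rank-one choices for the remaining $\mat{B_i}$ to cope with possibly unequal bond dimensions $D_j$ --- I would isolate each $\mat{W_j}$ in an identity of the form $\mat B = \mat{W_j}\,\mat B\,\mat{M_j}$, where $\mat{M_j}$ is the product of the remaining $\mat{W_i}$. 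Lemma \ref{LemmaCommute1} then forces $\mat{W_j}=\mat{V_j U_j}=w_j\mat I$ with $\prod_j w_j=1$; in particular each $w_j\neq0$, so $\mat{U_j},\mat{V_j}$ are invertible and $\mat{U_j}=w_j\mat{V_j}^{\inv}$.

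Next I would evaluate the patterns that flip a single site $k$. On the flipped side every factor except the $k$-th carries a $\mat U,\mat V$ pair, and since $\mat{V_i U_i}=w_i\mat I$ all interior pairs collapse to scalars; after a cyclic rearrangement the symmetry reads
$$ \trace\bigl( \mat{B_1}\cdots\mat{U_{k-1} B_k V_k}\cdots\mat{B_p} \bigr) = \tfrac{1}{w_{k-1}w_k}\,\trace\bigl( \mat{B_1}\cdots\mat{V_{k-1} B_k U_k}\cdots\mat{B_p} \bigr)\,. $$
Contracting the free product $\mat{B_{k+1}}\cdots\mat{B_p B_1}\cdots\mat{B_{k-1}}$ against an arbitrary matrix and applying Lemma \ref{lemmaTraceEqual} (letting that matrix vary) yields the local relation
$$ \mat{U_{k-1} B_k V_k} = \tfrac{1}{w_{k-1}w_k}\,\mat{V_{k-1} B_k U_k} \qquad \text{for all } \mat{B_k}\,. $$

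Finally I would substitute $\mat{U_{k-1}}=w_{k-1}\mat{V_{k-1}}^{\inv}$ and $\mat{U_k}=w_k\mat{V_k}^{\inv}$; after multiplying through by $\mat{V_{k-1}}$ on the left and $\mat{V_k}$ on the right the relation collapses to $w_{k-1}^2\,\mat{B_k}\,\mat{V_k}^2=\mat{V_{k-1}}^2\,\mat{B_k}$ for all $\mat{B_k}$, which is exactly the hypothesis of Lemma \ref{LemmaCommute2}. Hence $\mat{V_{k-1}}^2=v_{k-1}\mat I$, and letting $k$ range over all sites gives $\mat{V_j}^2=v_j\mat I$ for every $j$. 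Together with $\mat{U_j}=w_j\mat{V_j}^{\inv}=(w_j/v_j)\mat{V_j}$ this proves $\mat{U_j}=c_j\mat{V_j}$ with $c_j=w_j/v_j$, whence $\mat{U_j}^2=c_j^2 v_j\mat I=:u_j\mat I$, establishing all three claims. I expect the genuine difficulty to lie in the first step: extracting the scalar form of each $\mat{W_j}$ from the \emph{single} all-zeros identity when the $D_j$ need not agree, which is what makes the careful rank-one bookkeeping (rather than a naive substitution of identity blocks) necessary.
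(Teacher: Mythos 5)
Your proof is correct, and its first half coincides with the paper's own: both evaluate the bit-flip symmetry on the all-zeros/all-ones pair, peel off the free factors one at a time with Lemma~\ref{lemmaTraceEqual}, and invoke Lemma~\ref{LemmaCommute1} to force $\mat{W_j}=\mat{V_j U_j}=w_j\mat{I}$ with $\prod_j w_j=1$ (your observation that invertibility of $\mat{U_j},\mat{V_j}$ drops out as a byproduct is also sound; the paper proves nonsingularity separately up front, but this is not logically required). The genuine divergence is in how the single-flip patterns $x_{0,\dots,0,1,0,\dots,0}=x_{1,\dots,1,0,1,\dots,1}$ are processed. The paper uses a substitution trick: replacing $\mat{A_1^{(0)}}$ by $\mat{U_p}^{-1}\mat{A_1^{(0)}}\mat{V_1}^{-1}$ turns the flipped identity back into the all-zeros situation \myref{eq:x_0000eqx_1111} with $\mat{W_1},\mat{W_p}$ replaced by $\mat{V_1}^{-1}\mat{U_1}$ and $\mat{V_p}\mat{U_p}^{-1}$, so re-running the peeling argument at each position yields $\mat{U_j}=c_j\mat{V_j}$ directly, and combining this with \myref{eq:proofW_reqw_rI} gives the involution property last. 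You instead collapse the interior pairs $\mat{V_i}\mat{U_i}=w_i\mat{I}$ inside the flipped word (your bookkeeping of the wrap-around factor and of $\tfrac{1}{w_{k-1}w_k}$ via $\prod_i w_i=1$ is correct), reduce by Lemma~\ref{lemmaTraceEqual} to the local identity $\mat{U_{k-1}}\mat{B_k}\mat{V_k}=\tfrac{1}{w_{k-1}w_k}\mat{V_{k-1}}\mat{B_k}\mat{U_k}$, substitute $\mat{U}=w\mat{V}^{-1}$, and close with a single application of Lemma~\ref{LemmaCommute2}, obtaining $\mat{V_j}^2=v_j\mat{I}$ first and the proportionality $\mat{U_j}=c_j\mat{V_j}$ afterwards; the constants agree with the paper's ($v_j=w_j/c_j$, $u_j=w_jc_j$). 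What your route buys: the iterative peeling is performed only once rather than effectively once per site, and Lemma~\ref{LemmaCommute2} --- which the paper announces as needed for the main theorems but never actually invokes in this proof --- finally gets used. What the paper's route buys: it recycles the part-one argument verbatim, so no interior-collapse bookkeeping needs to be checked.
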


\begin{proof}
First, note that all $\mat{U_j}$ and $\mat{V_j}$ have to be nonsingular.
Otherwise, we could use a vector $\vec a \not= \vec 0$, e.g. with $\mat{U_{k-1}} \vec a = \vec 0$,
such that $\mat{A_k^{(0)}}=\vec a \vec b^{\herm}$ and
$\mat{A_k^{(1)}} = \mat{U_{k-1}A_k^{(0)}V_{k}}=\mat{0}$, giving $x_{1,1,...,1}=0$,
but with appropriate choice of the other $\mat{A_j^{(0)}}$ we can easily achieve $x_{0,0,...,0}\neq 0$.

Now, for all possible choices of $\mat{A_j^{(0)}}$, $j=1,...,p$, it holds
\begin{equation}\label{eq:x_0000eqx_1111}
\begin{split}
x_{1,1,...,1} & = \trace \left( \mat{A_1^{(1)}}\cdots \mat{A_p^{(1)}}\right)  =
\trace \left( \mat{U_pA_1^{(0)}V_1} \cdots \mat{U_{p-1} A_p^{(0)}V_p}\right) \quad \text{and}\\
x_{1,1,...,1} & = x_{0,0,...,0} = \trace \left( \mat{A_1^{(0)}} \cdots \mat{A_p^{(0)}}\right) \; .
\end{split}
\end{equation}
With the notation $\mat{W_j}=\mat{V_jU_{j}}$, $j=1,...,p$, 
we have
$$ \trace \left( \bigl(\mat{A_1^{(0)}}\cdots \mat{A_{p-1}^{(0)}}\bigr)\mat{A_p^{(0)}}\right) =
\trace \left( \bigl(\mat{W_pA_1^{(0)}W_1} \cdots \mat{A_{p-1}^{(0)}W_{p-1}}\bigr) \mat{A_p^{(0)}}\right)
$$
for all $\mat{A_p^{(0)}}$.
Therefore, Lemma \ref{lemmaTraceEqual} leads to
\begin{equation*}
\mat{A_1^{(0)}} \cdots \mat{A_{p-2}^{(0)}} \mat{A_{p-1}^{(0)}}
     = \mat{W_pA_1^{(0)}W_1} \cdots \mat{W_{p-2} A_{p-1}^{(0)}W_{p-1}}
\end{equation*}
and thus
\begin{align*}
\trace \left( \bigl( \mat{A_1^{(0)}}\cdots \mat{A_{p-2}^{(0)}} \bigr) \mat{A_{p-1}^{(0)}} \right) & =
\trace \left( \mat{W_pA_1^{(0)}W_1} \cdots \mat{W_{p-2} A_{p-1}^{(0)}W_{p-1}} \right) \\
 & = \trace \left( \bigl( \mat{W_{p-1}W_pA_1^{(0)}W_1} \cdots \mat{W_{p-2}} \bigr) \mat{A_{p-1}^{(0)}} \right) \; .
\end{align*}
If we proceed in the same way we iteratively reach
\begin{align}\label{eq:proofProceedIterativeMatrixequality}
\mat{A_1^{(0)}A_2^{(0)}}\cdots \mat{A_j^{(0)} } & =
    \mat{W_{j+1}} \cdots \mat{W_p} \mat{A_1^{(0)} W_1 A_2^{(0)}} \cdots \mat{W_{j-1}} \mat{A_j^{(0)} W_j}
                    \quad \text{and} \\ \label{eq:proofProceedIterativeTraceequality}
\trace \left( \mat{A_1^{(0)} A_2^{(0)}}\cdots \mat{A_j^{(0)}} \right) & =
    \trace \left(\mat{ W_jW_{j+1}} \cdots \mat{W_p} \mat{A_1^{(0)} W_1} \cdots \mat{W_{j-1}} \mat{A_j^{(0)}}\right) \; ,
\end{align}
for $j=p-1,...,1$.
Thus, we finally obtain the identities
\begin{align}\label{eq:proofFinallyObtainMatrixEquality}
\mat{A_1^{(0)}} & = \mat{W_2W_3} \cdots \mat{W_p} \mat{A_1^{(0)} W_1} \quad \text{and} \\ \label{eq:proofFinallyObtainTraceEquality}
\trace \left( \mat{A_1^{(0)}} \right) & = \trace \left( \mat{ W_1W_2} \cdots \mat{W_p} \mat{A_1^{(0)}} \right) \; ,
\end{align}
which hold for all $\mat{A_1^{(0)}}$. Lemma \ref{lemmaTraceEqual} applied to
\myref{eq:proofFinallyObtainTraceEquality} states
 $$\mat I = \mat{W_1W_2}\cdots \mat{W_p} \quad \text{or} \quad
 \mat{W_1^{\text{$-1$}}}=\mat{W_2}\cdots \mat{W_p} \; .$$
Inserting this in \myref{eq:proofFinallyObtainMatrixEquality} gives
$\mat{A_1^{(0)}} = \mat{W_1^{\text{$-1$}}A_1^{(0)}W_1}$
for all $\mat{A_1^{(0)}}$,
so due to Lemma \ref{LemmaCommute1}, $\mat{W_1} = w_1 \mat I$ for some constant $w_1 \not= 0$.
If we make use of this relation, Eqn. \ref{eq:proofProceedIterativeMatrixequality} (case $j=2$) leads to
$$ \mat{A_1^{(0)}A_2^{(0)}} =
    \left( w_1 \mat{W_{3}} \cdots \mat{W_p} \right) \left( \mat{A_1^{(0)}A_2^{(0)}} \right) \mat{W_2}
    \quad \text{for all } \mat{A_1^{(0)}}, \mat{A_2^{(0)}} \; .$$
Thus, Lemma \ref{LemmaCommute1} states $\mat{W_2} = w_2 \mat I$.
By induction,
Eqn. \ref{eq:proofProceedIterativeMatrixequality} reads
$$ \mat{A_1^{(0)}A_2^{(0)}}\cdots \mat{A_j^{(0)} } =
    \left( w_1 w_2 \cdots w_{j-1} \mat{W_{j+1}} \cdots \mat{W_p} \right) \left( \mat{A_1^{(0)}A_2^{(0)}}
        \cdots \mat{A_j^{(0)}} \right) \mat{W_j}$$
for all $\mat{A_1^{(0)}}, \cdots \mat{A_j^{(0)}}$, leading to
\begin{align}\label{eq:proofW_reqw_rI}
& \mat{W_j} = w_j \mat I \quad \text{or} \quad  \mat{U_{j}} =
w_j \mat{V_j^{\text{$-1$}}} \quad \text{ for all } \ j=1,\dots,p \; .
\end{align}
Considering instead $x_{1,0,...,0}=x_{0,1,...,1}$ gives
$$\trace \left( \mat{U_pA_1^{(0)}V_1A_2^{(0)}}\cdots \mat{A_p^{(0)}}\right) =
\trace \left( \mat{A_1^{(0)}U_1A_2^{(0)}V_2}\cdots \mat{U_{p-1}A_p^{(0)}V_p}\right) \; .$$
Replacing $\mat{A_1^{(0)}}$ by $\mat U_{\mat p}^{-1} \mat{A_1^{(0)}} \mat{V}_{\mat 1}^{-1}$
results in the above situation \myref{eq:x_0000eqx_1111}.
Analogously ($\mat{W_1} = \mat V_{\mat 1}^{-1} \mat{U_1}$, $\mat{W_p} = \mat{V_p} \mat U_{\mat p}^{-1}$)
one gets
$\mat{U_1}=c_1 \cdot \mat{V_1}$ and $\mat{U_p} = c_p \cdot \mat{V_p}$.

Repeating this technique at all positions $j$ for symmetries of the form
$ x_{i_1,\dots,i_{j-1},i_j,i_{j+1},\dots,i_p} =
    x_{0,\dots,0,1,0,\dots,0} = x_{1,\dots,1,0,1,\dots,1}$
gives the identities
\begin{alignat*}{3}
c_j \cdot \mat{V_j}  & = \mat{U_{j}} && \stackrel{\myref{eq:proofW_reqw_rI}}{=}
w_j \mat{V_j^{\text{$-1$}}} \quad \text{ for all } \ j=1,...,p \; .
\end{alignat*}
Therefore, all $\mat{U_j}$ and $\mat{V_j}$ are
involutions up to a factor,
$$ \mat{U_j^{\text{$2$}}} = w_j c_j \mat I \qquad \text{and} \qquad
\mat{V_j^{\text{$2$}}} = \frac{w_j}{c_j} \mat I \; .$$
Define $u_j := w_j c_j$ and $v_j:=\tfrac{w_j}{c_j}$ to finalize the proof.
\end{proof}
\begin{remark}
If we only allow unitary matrices $\mat{U_j}$ and $\mat{V_j}$
(e.g. $\mat{U_j}=\mat{V_j}=\mat J$ as motivated above),
the factors $c_j$ and $w_j$ (and thus also $u_j$ and $v_j$) have absolute value $1$.
\end{remark}

\subsubsection{Full-Bit Symmetry}
Now combine the previous symmetries and assume the following properties of the \MPS matrices
\begin{alignat}{3}\nonumber
\mat{A_j^{(0)}} & =  \mat{A} = \mat A^{\herm} && \quad \text{ for all} \ j \; \ \text{and} \\ \label{eq:FullBitPersymmetry}
\mat{A_j^{(1)}} & =  \mat{J A J} &&  \quad \text{ for all} \ j \; .
\end{alignat}
This ansatz results in reverse, bit-flip and bit-shift symmetry.

Neglecting the persymmetry~\myref{eq:FullBitPersymmetry} for the moment and only assuming
$$ \mat{A_j^{(0)}} = \mat{A^{(0)}} = \bigl( \mat{A^{(0)}} \bigr)^{ \herm} \quad \text{and} \quad \mat{A_j^{(1)}} =
\mat{A^{(1)}}= \bigl( \mat{A^{(1)}} \bigr)^{ \herm} \; ,$$
one may diagonalize $ \bigl( \mat{A^{(0)}} \bigr)^{ \herm} = \mat{A^{(0)}} = \mat{U}^{\herm} \mat{\Lambda U}$
and set $\mat{B} = \mat{U A^{(1)} U}^{\herm}$.
Hence, we propose to define a normal form of the type
$$ \mat{\tilde A_j^{(0)}} = \mat{\tilde A^{(0)}} = \mat{\Lambda} \qquad \text{and} \qquad
\mat{\tilde A_j^{(1)}} = \mat{\tilde A^{(1)}} = \mat{B} = \mat B^{\herm} \; .$$

\subsubsection{Reduction in the Degrees of Freedom}
The symmetries discussed in the previous paragraphs lead to a reduction of the number of free parameters.
First let us discuss the reduction in the
number of entries in the full vector $\vec{x}$.
The bit-shift symmetry $x_{i_1,i_2,\dots,i_p} = x_{i_2,\dots,i_p,i_1}=...$
reduces the number of different entries approximately to $p^{-1}\,2^p$.
Both bit-flip and reverse symmetry lead to a reduction factor $1/2$ in each case.
Note that not all of these symmetries are independent, e.g.,
the symmetry $x_{i_1,i_2} = x_{i_2,i_1}$ is a consequence of either the bit-shift or the reverse symmetry.
On the other hand the three symmetries are indeed independent in general.
To see this we consider the following example with $p=9$ binary digits:
$$ (i_1,i_2,i_3,i_4,i_5,i_6,i_7,i_8,i_9) =  (101001000) \; .$$
Table \ref{tab:symmetries} lists for all of the three classes of symmetries all index sets
which are related to equal vector components.
\begin{table}
\tbl{Listing index sets related to equal vector components for different symmetries.
This table shows that the bit-shift symmetry,
the bit-flip symmetry and the reverse symmetry are principally independent.}
{\begin{tabular}{@{}ll}\toprule
Bit-shift symmetry & $101001000$, $010010001$, $100100010$, $001000101$, $010001010$ \\
 & $100010100$,$000101001$, $001010010$, $010100100$ \\ \colrule
Bit-flip symmetry & $101001000$, $010110111$ \\ \colrule
Reverse symmetry & $101001000$, $000100101$\\
\botrule
\end{tabular}}
\label{tab:symmetries}
\end{table}

In the \MPS ansatz we have similar reductions.
The bit-shift symmetry uses one matrix pair instead of $p$, giving a reduction factor $p$.
The bit-flip symmetry has a reduction factor $2$ (if we ignore different choices
for $\mat{D_{j;\pm 1}}$), and in the reverse symmetry only half of the matrices
can be chosen.
Note, that this will not only lead to savings in memory but also to faster convergence
and better approximation in the applied eigenvalue methods because the representation
of the vectors has less degrees of freedom and allows a better approximation of the
manifold that contains the eigenvector we are looking for.

\subsubsection{Further Symmetries}
In this paragraph we analyze further symmetries such as

\begin{subequations}
\begin{minipage}{0.25\textwidth}
\begin{equation}\label{eq:FurtherSymmetryRelation1}
\vec{x} = \pmat{
                 \vec{b} \\
                 \vec{b} \\
               } \; ,
\end{equation}
\end{minipage}
\hfil
\begin{minipage}{0.25\textwidth}
\begin{equation}\label{eq:FurtherSymmetryRelation2}
\vec{x} = \pmat{
                \vec{b} \\
                \vec{-b} \\
              } \; ,
\end{equation}
\end{minipage}
\hfil
\begin{minipage}{0.3\textwidth}
\begin{equation}\label{eq:FurtherSymmetryRelation3}
\vec{x} = \pmat{                          b_1 \\
                                                      \pm b_1 \\
                                                      b_2 \\
                                                      \pm b_2 \\
                                                      \vdots \\
                                                    } \; .
\end{equation}
\end{minipage}
\end{subequations}\\
The following lemma states results for the symmetry~\myref{eq:FurtherSymmetryRelation1}.
\begin{lemma}
If the first matrix pair is of the type
\begin{equation}\label{eq:FurtherSymmetryRelationSite1}
\bigl( \mat{A_1^{(0)}} \ , \ \mat{A_1^{(1)}} \bigr)
      = \bigl( \mat{B} \ ,\ \mat{B} \bigr) \; ,
\end{equation}
the represented vector takes the form~\eqref{eq:FurtherSymmetryRelation1}
and, vice versa, any vector of the form~\eqref{eq:FurtherSymmetryRelation1}
can be expressed by an \MPS fulfilling \myref{eq:FurtherSymmetryRelationSite1}.
\end{lemma}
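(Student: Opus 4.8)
The plan is to read off both implications directly from the trace representation \eqref{eq:MPSComponent}, without any of the averaging-and-block-doubling machinery used for the reverse and bit-flip symmetries: here the hypothesis constrains only the first site, and the symmetry $\vec{x} = \pmat{\vec b \\ \vec b}$ is nothing but the statement that the component $x_{i_1,i_2,\dots,i_p}$ is independent of the leading index $i_1$, i.e.\ $x_{0,i_2,\dots,i_p} = x_{1,i_2,\dots,i_p}$ for all $i_2,\dots,i_p$.

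For the first implication I would substitute $\mat{A_1^{(0)}} = \mat{A_1^{(1)}} = \mat{B}$ into \eqref{eq:MPSComponent}, giving
$$ x_{i_1,i_2,\dots,i_p} = \trace\left( \mat{B}\, \mat{A_2^{(i_2)}} \cdots \mat{A_p^{(i_p)}} \right), $$
whose right-hand side manifestly does not involve $i_1$. Denoting this common value by $b_{i_2,\dots,i_p}$ exhibits $\vec{x}$ in the form \eqref{eq:FurtherSymmetryRelation1}.

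For the converse, I would take \emph{any} \mps representation $x_{i_1,\dots,i_p} = \trace( \mat{C_1^{(i_1)}} \mat{C_2^{(i_2)}} \cdots \mat{C_p^{(i_p)}})$ of a vector obeying \eqref{eq:FurtherSymmetryRelation1} and simply overwrite the first pair by $(\mat{B},\mat{B}) := (\mat{C_1^{(0)}}, \mat{C_1^{(0)}})$, leaving $\mat{A_j^{(i_j)}} = \mat{C_j^{(i_j)}}$ for $j \ge 2$. The new vector $\vec{x}'$ then agrees with $\vec{x}$ on the top half by construction and, on the bottom half, satisfies $x'_{1,i_2,\dots,i_p} = \trace(\mat{C_1^{(0)}}\mat{C_2^{(i_2)}}\cdots) = x_{0,i_2,\dots,i_p} = x_{1,i_2,\dots,i_p}$, the last equality being exactly the assumed symmetry; hence $\vec{x}' = \vec{x}$ and \eqref{eq:FurtherSymmetryRelationSite1} holds.

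I expect no real obstacle: the only point to check is that overwriting a single matrix pair respects the boundary conditions, which it does, since $\mat{C_1^{(0)}}$ is a general block of size $D_1 \times D_2$ in the PBC case and already specializes to a row vector ($D_1 = 1$) in the OBC case, so the bond dimensions $D_j$ are untouched; in contrast to the reverse and bit-flip constructions, neither a block-diagonal augmentation nor the existence result of Lemma~\ref{lemma:ExistenceUniqueMPSGauge} is required.
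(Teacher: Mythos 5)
Your proof is correct, and the forward direction coincides with the paper's; but your converse takes a genuinely different route. The paper starts from an \mps representation of the \emph{half}-vector $\vec{b}$ over sites $2,\dots,p$ (citing Lemma~\ref{lemma:ExistenceUniqueMPSGauge}) and prepends the pair $\mat{B_1^{(0)}}=\mat{B_1^{(1)}}=\mat{I_{D_2}}$, so that the first site carries identity matrices; you instead start from an \mps representation of the \emph{full} vector $\vec{x}$ and overwrite the first pair by $\bigl(\mat{C_1^{(0)}},\mat{C_1^{(0)}}\bigr)$, using the assumed symmetry $x_{0,i_2,\dots,i_p}=x_{1,i_2,\dots,i_p}$ only to check that the bottom half of the represented vector is unchanged. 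Your construction is a symmetrization of whatever representation is already in hand: it leaves all bond dimensions untouched and is the natural move if an \mps for $\vec{x}$ is given, whereas the paper's construction yields a cleaner normal form (identities at site~$1$, bond dimension dictated by $\vec{b}$ alone), which makes the redundancy of the first site explicit. One caveat: your closing claim that Lemma~\ref{lemma:ExistenceUniqueMPSGauge} is not required is not accurate. The statement to be proved asserts that \emph{some} \mps fulfilling \myref{eq:FurtherSymmetryRelationSite1} exists, so your step ``take any \mps representation of $\vec{x}$'' still rests on an existence result; you have merely moved its point of application from $\vec{b}$ to $\vec{x}$. This does not affect the validity of your argument, since such a representation always exists, but the dependence should be acknowledged rather than denied.
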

\begin{proof}
The given \mps relation~\myref{eq:FurtherSymmetryRelationSite1} implies
$x_{0,i_2,i_3,\dots,i_p} = x_{1,i_2,i_3,\dots,i_p}$ for all $i_2,\dots,i_p$.
Hence, the represented vector $\vec x$ is of the form~\eqref{eq:FurtherSymmetryRelation1}.

In order to specify an \MPS representation for a vector $\vec x$
fulfilling \eqref{eq:FurtherSymmetryRelation1}
we consider any \MPS representation (see, e.g., Lemma~\ref{lemma:ExistenceUniqueMPSGauge})
for the vector~$\vec b$,
$$
\vec b = \sum_{i_2,\dots,i_p} \trace \left( \mat{B_2^{(i_2)} B_3^{(i_3)}} \cdots \mat{B_p^{(i_p)}} \right)
            \vec{e_{i_2,i_3,\dots,i_p}} \; .
$$
The definition $\mat{B_1^{(0)}} = \mat{B_1^{(1)}} = \mat{I_{D_2}}$ results in the desired relations
$$ x_{0,i_2,\dots,i_p} = x_{1,i_2,\dots,i_p} =
\trace \left( \mat{B_1^{(i_1)} B_2^{(i_2)}} \cdots \mat{B_p^{(i_p)}} \right) =
\trace \left( \mat{B_2^{(i_2)}} \cdots \mat{B_p^{(i_p)}}\right) = b_{i_2,\dots,i_p} \; .$$
\end{proof}
\begin{remark}
\begin{enumerate}
\item The proof works for PBC and OBC.
In the latter case $\mat{B_1^{(i_1)}}$ specializes to a scalar.
\item The second symmetry~\myref{eq:FurtherSymmetryRelation2} corresponds to the relation
$\mat{A_1^{(1)}} = - \mat{A_1^{(0)}}$.
Adapting the proof to this case would give $\mat{B_1^{(0)}}=\mat{I_{D_2}}$ and $\mat{B_1^{(1)}}=-\mat{I_{D_2}}$.
\item The symmetry type~\myref{eq:FurtherSymmetryRelation3} is related to
$\mat{A_p^{(1)}} = \pm \mat{A_p^{(0)}}$.
The construction would analogously read
$\mat{B_p^{(0)}}=\mat{I_{D_p}}$ and $\mat{B_p^{(1)}}=\pm \mat{I_{D_p}}$.
\item Similarly, we can impose conditions on the \MPS representation that certain
local matrix products are equal resulting in symmetry properties of $\vec{x}$.
So the condition $\mat{A_1^{(0)}A_2^{(0)}}=\mat{A_1^{(1)}A_2^{(1)}}$ leads to
$x_{0,0,i_3,...,i_p}\equiv x_{1,1,i_3,...,i_p}$.\\
Imposing the conditions
$\mat{A_1^{(0)}A_2^{(0)}}=\mat{A_1^{(1)}A_2^{(1)}}=\mat{A_1^{(0)}A_2^{(1)}}=\mat{A_1^{(1)}A_2^{(0)}}$
leads to the symmetry
$x_{0,0,i_3,...,i_p}\equiv x_{1,1,i_3,...,i_p}\equiv x_{0,1,i_3,...,i_p}\equiv x_{1,0,i_3,...,i_p}$.
\end{enumerate}
\end{remark}
In the following theorem we state certain necessary relations for the \MPS representation of symmetries,
which are of the form~\myref{eq:FurtherSymmetryRelation1}.
\begin{theorem}\label{theorem:Symmetryb_b}
Assume that the \MPS matrices (over $\mathbb K$) are related via
\begin{equation*}\label{eq:ConditionQuasiUniqueness2}
\mat{A_1^{(1)}} = \mat{VA_1^{(0)} U}
\end{equation*}
with matrices $V$ and $U$.
If any choice of matrices $\mat{A_j^{(0)}}$, $j=1,...,p$ for fixed  $\mat{A_j^{(1)}}$, $j>1$,
results in a vector $\vec{x}$ of the form
$$ \vec{x} = \left(
              \begin{array}{c}
                \vec{b} \\
                \vec{b} \\
              \end{array}
            \right) $$
then $\mat{U} =c\mat{I}= \mat{V}^{-1}$ and so $\mat{A_1^{(1)}}=\mat{A_1^{(0)}}$.
\end{theorem}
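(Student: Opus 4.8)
The plan is to translate the vector symmetry $\vec{x} = (\vec{b},\vec{b})^{\T}$ into a statement about the \MPS components and then strip it down to a relation that involves only $\mat{U}$ and $\mat{V}$, to which Lemmata~\ref{lemmaTraceEqual} and~\ref{LemmaCommute1} apply. Writing $i$ in binary as $(i_1,\dots,i_p)$, the form $\vec{x} = (\vec{b},\vec{b})^{\T}$ is equivalent to $x_{0,i_2,\dots,i_p} = x_{1,i_2,\dots,i_p}$ for all $i_2,\dots,i_p$. First I would restrict attention to the indices $i_2 = \dots = i_p = 0$, so that only the freely chosen matrices enter and the fixed $\mat{A_j^{(1)}}$, $j>1$, are never touched. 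Using $\mat{A_1^{(1)}} = \mat{V A_1^{(0)} U}$ together with the trace formula~\myref{eq:MPSComponent}, the identity $x_{0,0,\dots,0} = x_{1,0,\dots,0}$ becomes
\begin{equation*}
\trace \bigl( \mat{A_1^{(0)}} \mat{M} \bigr) = \trace \bigl( \mat{V A_1^{(0)} U} \mat{M} \bigr), \qquad \mat{M} := \mat{A_2^{(0)}} \cdots \mat{A_p^{(0)}},
\end{equation*}
valid for every choice of $\mat{A_1^{(0)}}$ and of the matrices making up $\mat{M}$.

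Next I would fix $\mat{M}$ and vary $\mat{A_1^{(0)}}$. Rewriting the right-hand side by cyclic invariance~\myref{eq:traceInvariance} as $\trace\bigl(\mat{A_1^{(0)}}(\mat{U M V})\bigr)$ and invoking Lemma~\ref{lemmaTraceEqual} (with $\mat{X} = \mat{A_1^{(0)}}$) yields $\mat{M} = \mat{U M V}$. The point is that this must hold for every \emph{realizable} $\mat{M}$. To upgrade it to all $\mat{M}$, I would realize an arbitrary rank-one matrix $\mat{M} = \vec{a}\vec{b}^{\herm}$ as the product $\mat{A_2^{(0)}}\cdots\mat{A_p^{(0)}}$, for instance by taking $\mat{A_2^{(0)}} = \vec{a}\vec{e_1}^{\herm}$, each intermediate factor equal to $\vec{e_1}\vec{e_1}^{\herm}$, and $\mat{A_p^{(0)}} = \vec{e_1}\vec{b}^{\herm}$ (for $p=2$ this is immediate, since $\mat{M} = \mat{A_2^{(0)}}$ is already arbitrary). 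Since the map $\mat{M} \mapsto \mat{M} - \mat{U M V}$ is linear and the rank-one matrices span the whole matrix space, the relation $\mat{M} = \mat{U M V}$ then extends to all $\mat{M}$.

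Finally, $\mat{M} = \mat{U M V}$ for all $\mat{M}$ is exactly the hypothesis of Lemma~\ref{LemmaCommute1} (with the left and right multipliers playing the roles of $\mat{V}$ and $\mat{U}$ of that lemma). It follows that $\mat{U}$ and $\mat{V}$ are mutually inverse scalar multiples of the identity, giving $\mat{U} = c\mat{I} = \mat{V}^{-1}$ for some $c \neq 0$. Substituting back, $\mat{A_1^{(1)}} = \mat{V A_1^{(0)} U} = c^{-1}\mat{A_1^{(0)}}\,c = \mat{A_1^{(0)}}$, as claimed. The only genuinely delicate step is the realizability of an arbitrary $\mat{M}$ by the product $\mat{A_2^{(0)}}\cdots\mat{A_p^{(0)}}$; this is why I prefer the rank-one construction together with the linearity argument, which sidesteps any dimension bookkeeping and handles $p>2$ uniformly.
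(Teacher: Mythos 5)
Your proof is correct, and it rests on exactly the two lemmas the paper's own proof uses --- Lemma~\ref{lemmaTraceEqual} to pass from a trace identity to a matrix identity, and Lemma~\ref{LemmaCommute1} to force $\mat{U}$ and $\mat{V}$ to be scalar --- but you apply them in the reverse order. The paper first treats the tail product $\mat{A_2^{(i_2)}}\cdots\mat{A_p^{(i_p)}}$ as the arbitrary matrix $\mat{X}$ of Lemma~\ref{lemmaTraceEqual}, concluding $\mat{A_1^{(0)}}=\mat{V A_1^{(0)} U}$ for every $\mat{A_1^{(0)}}$, and only then invokes Lemma~\ref{LemmaCommute1} with $\mat{A_1^{(0)}}$ as the free variable. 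You do the opposite: varying $\mat{A_1^{(0)}}$ first yields $\mat{M}=\mat{U M V}$ for each realizable tail product $\mat{M}$, and Lemma~\ref{LemmaCommute1} is then applied with $\mat{M}$ as the free variable. Your order creates the obligation to show that every $\mat{M}$ is realizable as $\mat{A_2^{(0)}}\cdots\mat{A_p^{(0)}}$, which you discharge cleanly with the rank-one construction plus linearity. Note that the paper's order carries the same obligation --- applying Lemma~\ref{lemmaTraceEqual} with $\mat{X}$ ranging only over tail products tacitly assumes those products are rich enough (rank-one matrices suffice, by the same spanning argument) --- but the paper leaves this implicit, so on precisely this point your write-up is the more complete of the two. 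Your restriction to the indices $i_2=\dots=i_p=0$, so that the fixed matrices $\mat{A_j^{(1)}}$, $j>1$, never enter the argument, is likewise a tidy simplification fully consistent with the theorem's hypothesis. Both routes deliver the identical conclusion $\mat{U}=c\mat{I}=\mat{V}^{-1}$ and hence $\mat{A_1^{(1)}}=\mat{A_1^{(0)}}$.
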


\begin{proof}
The assumption leads to the equation
$$
\trace \left( \bigl( \mat{A_1^{(0)}}-\mat{VA_1^{(0)}U} \bigr) \mat{A_2^{(i_2)}}\cdots \mat{A_p^{(i_p)}} \right) \equiv 0
$$
for all choices of matrices $\mat{A_j^{(i_j)}}$, $j>2$.
From Lemma \ref{lemmaTraceEqual} we obtain
$\mat{A_1^{(0)}}=\mat{VA_1^{(0)}U}$ for all choices of $\mat{A_1^{(0)}}$.
Hence, due to Lemma \ref{LemmaCommute1},
$\mat{U} = c \mat{I}$ and $\mat{V} = \mat{I}/c$ for a nonzero $c$.
This gives $\mat{U} = c \mat{I} = \mat{V}^{-1}$.
\end{proof}
\begin{remark}
The result of Theorem \ref{theorem:Symmetryb_b} can be easily adapted to the case \myref{eq:FurtherSymmetryRelation2}.
Moreover, it can be generalized to symmetries such as \myref{eq:FurtherSymmetryRelation3},
which are of the form
$x_{i_1,\dots,i_{r},0,i_{r+2},\dots,i_p} = x_{i_1,\dots,i_{r},1,i_{r+2},\dots,i_p}$.
\end{remark}

\subsubsection{Closing remarks on symmetries}
Let us conclude this paragraph on symmetries with some remarks on applications.
So far, we have seen that there are different symmetries which can be represented by
convenient relations between the \mps matrices.
Furthermore we proposed convenient normal forms and attested related uniqueness results.
It is more difficult to exploit such symmetry approaches in numerical algorithms
such as eigenvector approximation.
As standard methods like DMRG (\cite{SchollDMRG2011}) usually do not preserve our proposed symmetries,
one would have to consider other techniques such as gradient methods,
which are already in use in QI groups (see, e.g., \cite{Pirvu11ExploitingTI}).

\section{Conclusions}
\label{sec:conclusions}
Based on a summary of definitions and properties of structured matrices,
we have analyzed matrix symmetries as well as symmetries induced by open or periodic
boundary conditions (as well as their interdependence).
In order to describe symmetry relations in physical 1D many-body quantum systems
by Matrix Product States or Tensor Trains we have developed efficient representations.
To this end, normal forms of \mps in a general setting as well as in special symmetry relations
have been introduced that may be useful to cut the number of degrees of freedom of $p$ two-level systems down
and may lead to better theoretical representations as well as more efficient numerical algorithms.

\section*{Acknowledgements}
\label{sec:acknowledgements}

This work was supported in part by the Bavarian excellence network ENB
via the International Doctorate Program of Excellence
{\em Quantum Computing, Control, and Communication} (QCCC).



\end{document}